\def\Journal#1#2#3#4#5#6{#1, ``#2'', {#3}, {\bf #4}, #5 (#6).}
\def\ChapterBook#1#2#3#4#5{#1, ``#2'', en  {\em #3} (#4), #5.}
\newcommand\underlineb[1]{\underline{#1}}
\def\avY{\underlineb{(\XY)}}
\newcommand{\uwidehat}[1]{%
  \mathpalette\douwidehat{#1}%
}
\def\PRD{Physical Review D}
\def\NC{Nuovo Cimento}
\def\den{\mu}
\def\pre{p}
\def\flux{J}
\def\XX{\mathfrak{X}}
\def\dN{\widetilde{\N}}
\def\dhh{\widetilde{h}}
\def\dgamma{\widetilde{\gamma}}
\def\dellc{\widetilde{\ellc}}
\def\dupn{\widetilde{\upn}}
\def\dX{\widetilde{X}}
\def\delltwo{\widetilde{\elltwo}}
\newcommand\dq{\widetilde{q}}
\def\dg{\widetilde{g}}
\def\dU{\widetilde{U}}
\def\dG{\widetilde{\G}}
\def\dD{\widetilde{{\mathcal D}}}
\def\dz{\widetilde{z}}
\def\dgauge{\widetilde{\gauge}}
\def\dnablao{\stackrel{\circ}{\widetilde{\nabla}}}
\def\bulkden{\rho}
\def\bulkpre{\mathfrak{p}}
\def\bulkflux{\mathfrak{J}}
\def\bulkGrav{\R^H}
\def\Grav{\Upsilon}
\def\bGrav{{\bm{\Grav}}}
\newtheorem{theorem}{Theorem}[section]
\newtheorem{terminology}{Terminology}[section]
\newtheorem{lemma}[theorem]{Lemma}
\newtheorem{definition}[theorem]{Definition}
\theoremstyle{remark}
\newtheorem{remark}[theorem]{Remark}
\numberwithin{equation}{section}
\theoremstyle{plain}
\newtheorem{corollary}[theorem]{Corollary}
\newtheorem{proposition}[theorem]{Proposition}
\def\N{{\mathcal N}}
\def\M{{\mathcal M}}
\def\Rad{\mbox{Rad}}
\def\dim{\mbox{dim}}
\def\Id{\mbox{Id}}
\def\Su{S{}^2_0}
\def\Sd{S{}^0_2}
\def\R{\mathcal R}
\newcommand\unl[1]{{#1}_{\ell}}
\newcommand\Q{Q}
\newcommand\Z{\mbox{tr}_P V}
\def\Vnp{(\iota_n V)}
\def\Vn{\iota_n V}
\def\W{\Vn}
\def\Vnn{\Q_{V}}
\def\X{\overline{V}}
\def\tgamma{{\, \mathcal C}}
\def\eqq{\qquad \Longleftrightarrow \qquad}
\def\wn{\bm{\omega}(\upn)}
\newcommand\elltwo{\ell^{(2)}}
\def\n{\mathfrak{n}}
\newcommand{\lp}{\left(}
\newcommand{\rp}{\right)}
\def\LL{\mathfrak B}
\def\aa{\alpha}
\def\bb{\beta}
\def\ellc{{\bmell}}
\def\bmell{\bm{\ell}}
\def\Riem{{\mbox{Riem}}}
\def\Ricc{{\mbox{Ric}}}
\def\Ricco{{\stackrel{\circ}{\Ricc}}}
\def\Riemo{{\stackrel{\circ}{\mbox{Riem}}}}
\def\D{\mathcal D}
\newcommand\rig{\xi}
\def\metdata{\{\N,\gamma,\ellc,\elltwo\}}
\def\dmetdata{\{\dN,\dgamma,\dellc,\delltwo\}}
\def\dmetdataone{\{\dN,\dgamma,\dellc_1,\delltwo_1\}}
\def\dmetdatatwo{\{\dN,\dgamma,\dellc_2,\delltwo_2\}}
\def\metdatabar{\overline{\{\N,\gamma,\ellc,\elltwo\}}}
\def\metdataone{\{\N,\gamma_1,\ellc{}_1,\elltwo_1\}}
\def\metdatatwo{\{\N, \gamma_2,\ellc{}_2,\elltwo_2\}}
\newcommand\hypdata{\{ \mathcal{N},\gamma,\ellc, \elltwo, \bY\}}
\newcommand\hypdatap{\{ \mathcal{N},\gamma,\ellc, \elltwo, \bY^+\}}
\newcommand\hypdatapm{\{ \mathcal{N},\gamma,\ellc, \elltwo, \bY^{\pm}, \epsilon\}}
\newcommand\hypdatam{\{ \mathcal{N},\gamma,\ellc, \elltwo, \bY^-\}}
\def\bY{\textup{{\textbf Y}}}
\def\Y{\textup{Y}}
\def\upn{\textup{n}}
\def\vecn{\upn}
\def\U{\textup{U}}
\def\bU{\textup{\textbf U}}
\def\F{\textup{F}}
\def\bF{\textup{\textbf{F}}}
\def\sone{\textup{s}}
\def\bsone{\textup{\textbf{s}}}
\def\rone{\textup{r}}
\def\brone{\textup{\textbf{r}}}
\def\FF{\mathfrak F}
\def\FF{\mathcal F}
\def\A{{\mathcal A}}
\def\XX{{\mathfrak X}}
\def\G{{\mathcal G}}
\def\gauge{\zeta}
\def\Tau{{\mathcal T}}
\def\XY{\uwidehat{\overline{\Y}}}
\def\YH{\bY^H}
\def\UH{\bU^H}
\def\bYH{\bY^H}
\def\bUH{\bU^H}
\def\ZY{\mbox{tr}_P \bY}
\def\ZU{\mbox{tr}_P \bU}
\newcounter{mnotecount}
\newcommand{\mnotex}[1]
{\protect{\stepcounter{mnotecount}}$^{\mbox{\footnotesize $\bullet$\themnotecount}}$ 
\marginpar{
\raggedright\tiny\em
$\!\!\!\!\!\!\,\bullet$\themnotecount: #1} }
\def\defi{{\stackrel{\mbox{\tiny {\bf def}}}{\,\, = \,\, }}}
\def\be{\begin{equation*}}
\def\en{\end{equation*}}
\newcommand{\douwidehat}[2]{%
  \sbox0{$\m@th#1\widehat{\hphantom{#2}}\vphantom{t}$}%
  \sbox2{$t$}%
  \dimen2=\ht0
  \advance\dimen2 -\ht2
  \sbox2{$#2$}%
  \dimen0=\ht0
  \rlap{%
    \raisebox{\dimexpr-\dimen0-\dp2-1pt}[0pt][\dimexpr\dimen2+\dp2]{\box0}%
  }
  {#2}%
}
\def\K{{\mathcal K}}
\def\O{{\mathcal O}}
\def\P{\mathbb{P}}
\def\I{\mathbb{I}}
\def\QY{Q^Y}
\def\hatLn{\overline{\pounds}_{\upn}}
\def\VH{V^H}
\begin{document}

\newcommand{\bm}[1]{\mbox{\boldmath $#1$}}


\def\botl{\bot_{\ell}}
\def\parallell{\parallel_{\ell}}

\def\WW{W{}^{\bot\mbox{\tiny $(2)$}}}

\def\nn{n{}^{\mbox{\tiny $(2)$}}}
\def\ll{\ell^{\mbox{\tiny $(2)$}}}
\def\ellell{(\ell \cdot \ell)}
\def\bmomega{{\bm{\omega}}}
\def\omegac{\underaccent{\check}{\bmomega}}

\def\llprime{\ll{}^{\prime}}
\def\nnprime{\nn{}^{\prime}}

\def\C{{\mathcal C}}

\def\E{{\mathcal E}}

\def\Ein{{\mbox{Ein}}}
\def\gM{g}
\def\gSigma{\gamma}
\def\nablaM{\nabla}
\def\nablab{{\overline{\nabla}}}
\def\Riemb{\overline{R}}
\def\Gamo{{\stackrel{\circ}{\Gamma}}}
\def\nablao{{\stackrel{\circ}{\nabla}}}
\def\nabgam{{\nabla^{(\gamma)}}}

\def\Riemgam{R^{(\gamma)}}
\def\acc{{\mbox{acc}}}

\def\Riemoin{{\stackrel{\circ}{R}}}
\def\Vo{{\stackrel{\circ}{V}}}

\def\H{{\mathcal H}}
\def\w{{\omega}}

\def\Sig{\Sigma}
\def\trP{\mbox{tr}_P \,}
\def\tr{\mbox{tr}\,}
\def\Sb{S_{b}}
\def\Db{{\mathfrak D}_{b}}

\def\SN{\Sigma\cap\mathcal{N}}

\def\Journal#1#2#3#4#5#6{#1, ``#2'', {\em #3} {\bf #4}, #5 (#6).}
\def\JournalPrep#1#2{#1, ``#2'',  In preparation}
\def\Monograph#1#2#3#4{#1, ``#2'', #3 (#4)}
\def\Living#1#2#3#4#5#6#7{#1, #2. Living Rev. Relativity \textbf{#3}, #4 (#5),
URL (cited on #6): 
http://www.livingreviews.org/#7}

\def\JGP{\em J. Geom. Phys.}
\def\JDG{\em J. Diff. Geom.}
\def\CQG{\em Class. Quantum Grav.}
\def\JPA{\em J. Phys. A: Math. Gen.}
\def\PRD{{\em Phys. Rev.} \bm{D}}
\def\GRG{\em Gen. Rel. Grav.}
\def\IJT{\em Int. J. Theor. Phys.}
\def\PR{\em Phys. Rev.}
\def\RMP{\em Rep. Math. Phys.}
\def\MNRAS{\em Mon. Not. Roy. Astr. Soc.}
\def\JMP{\em J. Math. Phys.}
\def\DG{\em Diff. Geom.}
\def\CMP{\em Commun. Math. Phys.}
\def\APP{\em Acta Phys. Polon.}
\def\PRL{\em Phys. Rev. Lett.}
\def\ARAA{\em Ann. Rev. Astron. Astroph.}
\def\ANP{\em Annals Phys.}
\def\AP{\em Ap. J.}
\def\APJL{\em Ap. J. Lett.}
\def\MPL{\em Mod. Phys. Lett.}
\def\PREP{\em Phys. Rep.}
\def\AASF{\em Ann. Acad. Sci. Fennicae}
\def\ZP{\em Z. Phys.}
\def\PNAS{\em Proc. Natl. Acad. Sci. USA}
\def\PLMS{\em Proc. London Math. Soth.}
\def\AIHP{\em Ann. Inst. H. Poincar\'e}
\def\ANYAS{\em Ann. N. Y. Acad. Sci.}
\def\SPJ{\em Sov. Phys. JETP}
\def\PAWBS{\em Preuss. Akad. Wiss. Berlin, Sitzber.}
\def\PPLL{\em Phys. Lett. A }
\def\QJRAS{\em Q. Jl. R. Astr. Soc.}
\def\CR{\em C.R. Acad. Sci. (Paris)}
\def\CP{\em Cahiers de Physique}
\def\NC{\em Nuovo Cimento}
\def\AM{\em Ann. Math.}
\def\APP{\em Acta Physica Polonica}
\def\BAMS{\em Bulletin Amer. Math. Soc}
\def\CPAM{\em Commun. Pure Appl. Math.}
\def\PJM{\em Pacific J. Math.}
\def\ATMP{\em Adv. Theor. Math. Phys.}
\def\PRSA{\em Proc. Roy. Soc. A.}
\def\APPT{\em Ann. Poincar\'e Phys. Theory}
\def\RPM{\em Rep. Math. Phys.}
\def\AHP{\em Annales Henri Poincar\'e}

\def\a{{\alpha}}
\def\h{{h}}

\title{Abstract null geometry, energy-momentum map and applications to the constraint tensor}

\author{Marc Mars\footnote{e-mail: marc@usal.es} \\
  Instituto de F\'{\i}sica Fundamental y Matem\'aticas, IUFFyM\\
Universidad de Salamanca}

\maketitle

\begin{abstract} We introduce and study the notion of null manifold. This is
  a smooth manifold ${\mathcal N}$ endowed with a degenerate metric $\gamma$ with one-dimensional radical at every point. We also define the notion of ruled null manifold, which is a special case of null manifolds. We prove that ruled null manifolds are in one-to-one correspondence with equivalence classes of null metric hypersurface data. This correspondence is used to endow any null manifold
  $({\mathcal N},\gamma)$ with a family of torsion-free connections related to each other by a well-defined gauge group. The whole construction allows one to define and use geometric notions on arbitrary null manifolds. The paper has a second part where we introduce a canonical map on any null metric hypersurface data and use its algebraic properties to define a canonical decomposition of any symmetric (0,2)-covariant tensor. This decomposition, together with two new differential operators compatible with this splitting, are used to decompose the constraint tensor in full generality and at the purely abstract level.  This leads to a hierarchical structure of the (detached) Einstein vacuum null constraint equations without the need of introducing special coordinates or special foliations.
  The results are applied to study null shells arising from the matching of two spacetimes across null boundaries. The equations governing such objects are obtained in hierarchical form without imposing any topological, gauge or coordinate conditions on the shell.
\end{abstract}

\section{Introduction}

\label{introduction}
Among the very many relevant contributions that Robert Bartnik made to
analysis, geometry and relativity, one of the comparatively less known is his notion of null quasispherical gauge introduced in 1997 \cite{Bartnik97}. Bartnik's main motivation was to develop a framework to write down and study {\em explicit formulations} of  the Einstein vacuum field equations where gauge freedom could be minimized. Bartnik writes ``Ideally, a ‘good’ parametrization of the Einstein equations will have limited or no gauge
freedom (...) with parameters free of constraints and having a direct relation to known radiation parameters''. The null quasi-spherical (NQS) gauge  was an attempt towards this goal. The basic geometric  framework was to consider a foliation of the spacetime by null hypersurfaces and fix the gauge (coordinate) freedom by foliating each null hypersurface by {\em metric spheres}. The null quasi-spherical gauge consisted in expressing the geometry in terms of a foliation defining function $u$, the area radius $r$ of the spheres and standard angular coordinates on each sphere. Thus, the NQS gauge
is a modification of previous characteristic formulations involving other coordinate conditions, such as e.g. Bondi-Sachs coordinates \cite{BS,Sachs} or null
affine coordinates \cite{NU}.  Bartnik's main motivation was to have a framework where numerical implementations of the Einstein field equations could be successfully developed.  And, indeed, later works used the NQS gauge to implement the  vacuum field equations in numerical codes (e.g. \cite{Numer}). However, this was not the only objective, and several applications of the NQS gauge appeared already in \cite{Bartnik97}. In particular, the matching problem across null boundaries of two vacuum spacetimes written in NQS gauge was considered under a number of simplifying assumptions.

While fixing the gauge in one way or another is certainly essential
to do numerical implementations, the completely opposite point of view is  more valuable for other types of problems. It is advantageous to let the gauge remain completely unfixed  in order to have a flexible framework capable of adjusting itself to  different situations. For problems involving hypersurfaces, it is also a very useful point of view to try and separate the geometric properties (both intrinsic and extrinsic)
that have to do with the hypersurface to those that have to do with the ambient space. Of course, the two sets of properties are not unrelated from each other, and this is expressed by equations that link them. This separation is well-understood in the case when the hypersurfaces are spacelike or timelike. Then, the intrinsic geometry is the induced metric, the extrinsic geometry is the second fundamental form and the equations that link them to
ambient properties are the Gauss identity for the connection, as well as the Gauss-Codazzi identities for the curvature. The coordinates in which the hypersurface is described can be chosen completely separated from the coordinates of the ambient space where the hypersurface is embedded. In fact, the separation is so powerful that
one can view the hypersurface as completely detached from the spacetime, and encode the ambient curvature as fields on this detached manifold. This point of view is essential, for instance, when the ambient needs to be reconstructed from the hypersurface, e.g. via the solution of suitable prescribed geometric PDE.

The null case is much harder. Certainly, there is no problem in viewing
the hypersurface as a detached manifold separated from the spacetime and let the embedding map carry all the information about how it sits in the ambient space. However, while in the timelike and spacelike case the hypersurface  carries
a semi-riemannian metric induced by the embedding, which allows one to develop all the necessary tools to do geometry on the abstract space, in the null case  there is no induced metric. So, is there any detached geometry one can construct?

If one does not insist in actually detaching the hypersurface, several approaches can be taken. The first one capable of dealing with general null hypersurfaces was developed by J.A. Schouten \cite{Schouten}. It is based on the use of a
{\em rigging vector}  and it has been used successfully in many contexts. The rigging approach actually works in hypersurfaces of arbitrary signature  \cite{MarsSenovilla93}. A different but related approach is based on the use of {\em screen} distributions
\cite{BejancuDuggal}. A third approach is to construct a riemannian metric on the null hypersurface by combining the induced first fundamental form and the covector metrically associated to the rigging. There are several ways to do this 
\cite{Katsuno, MarsSenovilla93, GutierrezOlea}, the approach in 
\cite{GutierrezOlea} being advantageous in that it requires no extra condition on the rigging.

As already mentioned, it is of interest to detach the hypersurfaces from the ambient space.  In \cite{MarsGRG, Mars2020} a framework has been developed to this purpose. The basic notion is that of {\em  metric hypersurface data}. In a certain sense the construction can be regarded as an abstraction of the rigging construction for general hypersurfaces (and also of the
screen distribution construction for null hypersurfaces). However, there are also differences, the main one being that the metric hypersurface data only encodes  information on the ambient metric along the hypersurface, not of its transverse derivatives, unlike the rigging or screen constructions in the embedded case.
Any metric hypersurface data carries a natural torsion-free connection
$\nablao$. This connection depends only on zeroth order information about the ambient metric on the (abstract) hypersurface. This is an important advantage over other options, since less information is required to
construct it. Besides allowing for a fully detached description, the key idea in
\cite{MarsGRG, Mars2020} was to treat the large freedom inherent to the choice of rigging (at the abstract level) by means of a gauge group acting on geometrically equivalent metric hypersurface data. Although the framework was successful in detaching the hypersurface from the ambient, the definition of null metric
hypersurface data still encodes a priori quantities that, in the embedded picture, correspond to a rigging vector. The main purpose of this paper is to lift this a priori restriction.

We want to start from a abstract manifold $\N$ endowed with a degenerate
symmetric tensor $\gamma$. Besides smoothness,
the only assumption we make on $\gamma$ is that it is minimally degenerate, in the sense that its radical is one-dimensional at each point. We call this
a {\em null manifold}. We emphasize that this restriction is completely natural in the present context, as all null hypersurfaces embedded in a semi-riemannian manifold have this property. It turns out that one can establish a neat relationship
between null manifolds and null metric hypersurface data structures. Specifically, there always exists a covering $\dN$
of $\N$ (which may be one-to-one or two-to-one) that admits a null metric hypersurface data uniquely defined from $(\N,\gamma)$ up to the action of the gauge group. It is remarkable that, even in the 2:1 covering case, the connection $\nablao$ associated to the null metric hypersurface data
on $\dN$ {\em descends to $\N$}. Thus, {\em any} null manifold admits
a collection of torsion-free connections $\{\nablao\}$ related to each other by a suitable gauge group (defined on the covering space). This allows us to define a geometry on any null manifold in a fully covariant and fully detached way. When the covering is one-to-one, the construction recovers the null metric hypersurface data geometry in \cite{MarsGRG, Mars2020}.

Coming back to general relativity, and specifically to the Einstein field equations, one of the main advantages of null hypersurfaces over spacelike or timelike ones is that the equations acquire a hierarchical structure. This has been noted in several contexts, e.g. 
\cite{Sachs2, Rendall, Bartnik97, ChruscielPaetz,  Gabriel1}. In all cases, a foliation of the hypersurface by spacelike sections, as well as suitable
coordinate systems adapted to the foliation are necessary. This fact  restricts the applicability of this results to null hypersurfaces with simple (product) topology. The development of a fully detached and fully covariant
null geometry leads  to the question of whether this hierarchical structure can be identified without the need of assuming a specific foliation and a specific coordinate system. The second aim of this paper is to show that this is indeed possible.

In order to describe the constraint equations of general relativity at the abstract
level it is necessary, first of all, to extend the notion of
null metric hypersurface data by incorporating a symmetric tensor $\bY$ that codifies (abstractly) the extrinsic information of the hypersurface.  This geometric notion is called  hypersurface data \cite{MarsGRG, Mars2020}. In that setup one can define \cite{Gabriel1, ManzanoMars} a symmetric tensor called constraint tensor that, in the null case, codifies at the abstract level
the tangential-tangential components of the ambient Ricci tensor.  This tensor is the equivalent to the null constraint equations of general relativity when written in detached form.

The hierarchical decomposition of the constraint tensor obtained here is based on 
a canonical algebraic decomposition of
arbitrary symmetric $2$-covariant tensors  that we derive from a natural linear map $\tau$ that sends symmetric $(0,2)$-tensors to
symmetric $(2,0)$-tensors. This map is  called ``energy-momentum map'' because of its close connections to the geometry of 
null shells (see \cite{MarsGRG} and Section  \ref{shells}). However, the algebraic decomposition of symmetric tensors in itself is not sufficient to achieve a
hierarchical decomposition of the constraint tensor. The reason is that the constraint tensor involves differential operations on the fields. Thus, to accomplish the
decomposition we need to identify differential operators that respect the
algebraic decomposition. Once this is achieved, the hierarchical decomposition of the constraint tensor can be worked out.

The constraint tensor involves the null metric hypersurface data as well as the extrinsic tensor $\bY$. We carry out the hierarchical decomposition of the constraint tensor only concerning its dependence on the tensor $\bY$. The reason is two-fold. Firstly, it is natural to view the constraints as equations for the extrinsic tensor $\bY$, i.e.
to consider the metric part of the data as given. In this perspective, the terms independent of $\bY$ can be treated as sources, so finding its explicit algebraic decomposition does not affect the structure of the equations. The second reason concerns applying the decomposition  to study the shell equations that arise when two spacetimes are matched across null boundaries. As we shall
see, the terms not involving $\bY$ disappear in this case. This application wraps up the relationship of the present work with Robert Bartnik's paper \cite{Bartnik97}, where these equations were discussed in a specific scenario. Here we find the equations in a fully covariant and detached way, and we write them down in an explicitly hierarchical form without the need
to assume either a special topology for the matching surface, or any specific choice of foliation to do the decomposition. The result is completely general,  so it can be adapted to whatever choices of foliations and of coordinates depending on the specific problem at hand.

The plan of the paper is as follows. In Section \ref{Nullmanifolds} we introduce the notion of {\bf null manifold} and {\bf ruled null manifold} and analyze the relationship  with each other as well as with the notion of null metric hypersurface data. We find that there  exists of a one-to-one correspondence between ruled null manifolds and equivalence classes of null hypersurface metric data
(Proposition \ref{correspondence}) and that any null manifold admits a canonical covering (at most 2:1) which defines a ruled null manifold structure.
In Section \ref{nablaoconnection} we recall the connection
$\nablao$ for null metric hypersurface data and study how it can be extended to general null manifolds. The main result is Theorem \ref{ExistCon} where the existence of a class of connections $\{\nablao\}$ for general null manifolds is established. The rest of the paper is devoted to the decomposition of the constraint tensor and of the shell equations. In Section \ref{EMmap} we study the algebraic properties of the energy-momentum map and obtain as a consequence a canonical decomposition of symmetric $(2,0)$-tensors (Proposition \ref{Decom}). We also study the gauge properties of the
energy-momentum map and their consequences  for the gauge properties of the algebraic decomposition. In Section \ref{EMmap2} we study the PDE consequences of the energy-momentum map and introduce two first order, linear, covariant differential operators that have good properties with respect to the canonical decomposition discussed before. Section \ref{ConstTensorDecom} is devoted  to finding the canonical decomposition of the constraint tensor. As already mentioned we concentrate on the terms that involve the extrinsic part of the data. Finally, in Section \ref{shells} we use the canonical decomposition of the constraint tensor to obtain a completely general, detached and fully covariant
hierarchical decomposition of the shell equations describing, at the purely abstract level, thin concentrations of matter and/or impulsive gravitational waves
propagating on  null hypersurfaces.

\subsection{Notation}

All manifolds are assumed to be second countable, Hausdorff, smooth and without boundary. Recall that such a manifold is automatically paracompact (see e.g. \cite{FWarner}). Unless otherwise stated, all manifolds are also connected.
$\FF(\N)$ denotes the set of smooth real functions on an manifold $\N$ and
$\FF^{\star}(\N)$ the subset of functions that vanish nowhere.
$\XX(\N)$ is the set of smooth vector fields 
and $\XX^{\star}(\N)$ the set of smooth covector fields on $\N$.
A tensor is said to be of type $(p,q)$ when it is $p$-contravariant and $q$-covariant.
The vector space of symmetric $(0,2)$-tensors defined on a vector space $V$ is written as $\Sd(V)$. Similarly $\Su(V)$ stands for the vector space of
symmetric $(2,0)$-tensors on $V$.
The signature of an element of $\Sd(V)$ is written $(p,q,r)$ where $p$ is the number of $+1$, $q$ the number of $-1$ and $r$ the number of $0$ in its canonical form.
As usual, brackets enclosing indices indicate antisymmetrization and parenthesis denote symmetrization. The symmetrized tensor product of two tensors $A$ and $B$ is  $A \otimes_s B := \frac{1}{2} (A \otimes B + B \otimes A)$. We shall use both abstract index notation and index-free notation depending on our needs.
The Lie derivative along a vector field $X$ is $\pounds_{X}$ and the exterior derivative $d$.

\section{Null manifolds}
\label{Nullmanifolds}

Any hypersurface $\N$ embedded in a pseudo-riemannian manifold $(\M,g)$ inherits a
symmetric $(0,2)$-tensor, namely the first fundamental form $\gamma$ defined
as $\gamma:=\phi^{\star}(g)$ if $\phi: \N \longrightarrow \M$ is the embedding. Recall that the
radical of a symmetric $(0,2)$-tensor is defined as
\begin{align*}
  \Rad_{\gamma}|_p := \{ V \in T_p \N, \gamma(V,\cdot) =0\}, \qquad p \in \N.
\end{align*}
The radical is, at every point $p$, a vector subspace of $T_p \N$ and, in the case of hypersurfaces, it is easy to show that its dimension is at most one\footnote{The set of normal vectors
  $\mbox{Nor}|_q$ to $\phi(\N)$ at $q \in \phi(\N)$ is a one-dimensional vector space. The push-forward of $\Rad_{\gamma}|_p$  lies in $\mbox{Nor}|_{\phi(p)}$. Since
  $\phi_{\star}$ is injective the dimension of $\Rad_{\gamma}|_p$ is the same as the dimension of its image, so the bound follows. See also \cite{Mars2020}[Lemma 2.2] where the result is proved in the more general context of metric hypersurface data.}
\begin{align*}
  \dim (\Rad_{\gamma}|_p) \leq 1.
\end{align*}
A non-trivial radical means that the first fundamental form is not a metric or, equivalently, that at each point $p \in \phi(\N)$ there is a non-zero normal vector which is at the same time tangential. Such hypersurfaces are called {\em null} and obviously they can only exist when the metric $g$ is not definite (namely, its signature is neither $(0,\n+1,0)$ nor $(\n+1,0,0)$ where $\n+1$ is the dimension of $\N$). Any attempt
to study null hypersurfaces from a purely detached point of view requires dealing with manifolds endowed with a degenerate symmetric $(0,2)$-tensor $\gamma$
whose radical is one-dimensional at every point. Thus, we put forward the following definition.
\begin{definition}
  A {\bf null manifold} $(\N,\gamma)$ is a manifold of dimension $\n \geq 1$ endowed with a smooth, symmetric $(0,2)$-tensor $\gamma$ satisfying
  \begin{align*}
    \dim (\Rad|_{\gamma}|_p) = 1 \qquad \forall p \in \N.
  \end{align*}
\end{definition}
One of our aims in this paper is to define a geometry on  null manifolds. The main complication of course arises from the fact that $\gamma$ is not a metric, so  in particular there is no Levi-Civita connection associated to $\gamma$.

We start by analyzing some geometric consequences of the definition.
The (disjoint) union of all vector spaces $E= \bigcup_{p \in \N} \Rad_{\gamma}|_p$ defines a vector bundle over $\N$. We call $\pi: E \rightarrow \N$ the projection map. This bundle is a (real) line bundle.
It is a standard fact (see e.g. \cite{Milnor}[Theorem 2.2]) that a line bundle is trivial (i.e. isomorphic
to the product bundle $\pi_1 : \N \times \mathbb{R} \rightarrow \N$)
if and only if it admits a nowhere zero section, i.e. a smooth map $\upn : \N \rightarrow E$ satisfying
$\pi \circ \upn = \Id_{\N}$ such that $\upn|_p \neq 0$ for all $p \in \N$.

Any real line bundle  $\pi: E \longrightarrow  \N$
admits a trivial double covering. More precisely there exists a manifold $\dN$ and a double cover of the base 
\begin{align}
f  : \dN \longrightarrow \N \label{covering}
\end{align}
such that the line bundle $f^{\star} (E)$ is trivial. When the original bundle
$(E,\N,\pi)$ is already trivial then $\dN$ is just 
$\N \times \{-1,1\}$ , so it consists simply of two copies of $\N$.

Line bundles are classified by its first Stiefel-Whitney characteristic class.
The first Stiefel-Whitney class of a line bundle $(E,\N,\pi)$ is denoted by
$\omega_1(E)$ and is an element of
$H^1 (\N, \mathbb{Z}_2)$, the first cohomology of $\N$ with coefficients in $\mathbb{Z}_2  := \mathbb{Z} \mbox{ mod } 2$.
A line bundle is trivial if and only if  $\omega_1(E) = 0$. In particular, when $H^1 (\N, \mathbb{Z}_2)$ contains only the zero element, all line bundles over
$\N$ are necessarily trivial. Of course, this happens in particular when $\N$ is simply connected or contractible.

The line bundle of a null manifold is not just any line bundle. It is
associated to a one-dimensional distribution on $\N$, which in turn is associated to the tensor $\gamma$. Note that $\gamma$ having a one-dimensional radical at every point  is equivalent to saying that $\gamma$ has signature $(p, q, 1)$ with $p+q = \n-1$.
The numbers $q$ and $p$ are locally constant, so they are constant if
$\N$ is connected.

Every manifold is known to admit a (positive definite) riemannian metric
(see e.g. \cite{Lee}[Proposition 13.3]). It is natural to ask whether there  are any obstructions to define a null manifold structure on a manifold $\N$. In the case of general signature, the only result we know is a theorem by Bel{}\'{}ko \cite{Belko} which, particularized to the present setup, states that $\N$ admits a null manifold structure $(\N,\gamma)$ of signature
$(p,q,1)$ if and only if $\N$ admits three mutually complementary
smooth distributions of dimensions $p$, $q$ and $1$ respectively. 

From a physical point of view, the most interesting case is when $\gamma$ is positive semidefinite, as this is the situation that arises in null hypersurfaces embedded in Lorentzian manifolds. As we show in the next lemma, the only condition for $\N$ to admit a semi-definite null manifold structure is that
$\N$ admits a one dimensional distribution, i.e.
a line bundle $(E, \N ,\pi)$ with $\pi^{-1} (p) \subset
T_p \N$, for all $p \in \N$.
\begin{lemma}
  \label{distribution}
  An $\n$ dimensional manifold $\N$ admits a null manifold structure
  $(\N,\gamma)$ of signature $(\n-1,0,1)$ (or $(0,\n-1,1)$) if and only if
  $\N$ admits a one-dimensional distribution. 
\end{lemma}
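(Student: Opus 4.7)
The statement is an ``if and only if'' and neither direction is deep, so the plan is mainly to organize the two constructions cleanly.

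For the direct implication, I would simply observe that if $(\N,\gamma)$ is a null manifold, then the pointwise assignment $p \mapsto \Rad_{\gamma}|_p$ defines, by the definition of null manifold, a one-dimensional vector subspace of $T_p \N$ at every point. Smoothness of this assignment as a line subbundle of $T\N$ follows from the fact that $\gamma$ is smooth and of constant rank $\n-1$; locally one can produce a smooth nowhere-vanishing section of the radical by taking a smooth frame $\{e_1,\dots,e_{\n}\}$ around $p$, forming the matrix $\gamma(e_i,e_j)$, and choosing a smooth local generator of its kernel. Hence the radical gives a one-dimensional smooth distribution on $\N$. (The signature hypothesis is not even needed here; it is enough that $\gamma$ has one-dimensional radical.)

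For the reverse implication, suppose $\N$ admits a one-dimensional distribution, i.e.\ a smooth line subbundle $E \subset T\N$. Since $\N$ is second countable and Hausdorff, hence paracompact, it admits a Riemannian metric $h$. With respect to $h$ one has the orthogonal splitting $T\N = E \oplus E^{\bot_h}$, and I would define
\begin{align*}
\gamma(X,Y) := h\bigl(P(X),P(Y)\bigr), \qquad X,Y \in \XX(\N),
\end{align*}
where $P : T\N \to E^{\bot_h}$ is the smooth projection along $E$. Then $\gamma$ is manifestly smooth, symmetric and positive semidefinite. The verification that $\Rad_{\gamma}|_p = E_p$ at every $p$ is immediate: any $V \in E_p$ gives $P(V)=0$, so $\gamma(V,\cdot) = 0$; conversely, if $V \in \Rad_{\gamma}|_p$ and $V = V_0 + V_1$ with $V_0 \in E_p$ and $V_1 \in E^{\bot_h}_p$, then $0 = \gamma(V,V) = h(V_1,V_1)$ forces $V_1 = 0$, so $V \in E_p$. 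In particular $\dim\Rad_{\gamma}|_p = 1$ and the signature of $\gamma$ is $(\n-1,0,1)$. For the other signature one just takes $-\gamma$.

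No step here looks like a real obstacle: the only point requiring a moment of care is the production of a smooth generator of the radical in the direct implication, which is handled by the local-frame argument above. Everything else is a routine use of paracompactness (existence of a Riemannian metric) and the algebra of the orthogonal decomposition induced by the distribution.
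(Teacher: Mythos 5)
Your proof is correct and follows essentially the same route as the paper: for sufficiency the paper also picks a Riemannian metric $g$ and sets $\gamma = g - g(e,\cdot)\otimes g(e,\cdot)$ for a unit vector $e$ spanning the distribution at each point, which is exactly your $h(P(\cdot),P(\cdot))$; the paper handles the possible non-triviality of the line bundle by noting this formula is invariant under $e \mapsto -e$, while your projection operator sidesteps that issue from the start. The necessity direction, which the paper dismisses as obvious, is correctly fleshed out by your local-frame argument for smoothness of the radical.
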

\begin{proof}
Necessity is obvious. For sufficiency take a riemannian metric $g$ in $\N$. At every point
$p \in \N$ consider the pair of vectors $\{ \pm e \}  \in \pi^{-1} (p)$
  defined by
  $g|_p (e,e) = 1$. Although in general there is no way to extract a vector field on $\N$ from this collection of vectors (indeed, this occurs if and only if
  the bundle is trivial), the tensor field
  \begin{align*}
    \gamma := g - g(e,\cdot) \otimes g(e, \cdot)
  \end{align*}
  is nevertheless well-defined because its expression is invariant under change of sign in $e$. It is clear that  $\gamma$ has signature $(\n-1,0,1)$ so the claim follows.
    \end{proof}

    So, in positive (or negative) semi-definite signature
    the existence of a null manifold structure boils down
    to the existence of a one-dimensional distribution. This has the following corollary.
    \begin{corollary}
      \label{Euler}
      Let $\N$ be an $\n$-dimensional manifold. If $\N$ is not compact then it always admits a positive (or negative) semi-definite
      null manifold structure. If $\N$ is compact, it admits a
positive/negative semi-definite
null manifold structure if and only if its Euler characteristic is zero.
\end{corollary}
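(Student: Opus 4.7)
The plan is to combine Lemma \ref{distribution} with classical obstruction-theoretic results about line subbundles of the tangent bundle. By Lemma \ref{distribution}, the existence of a positive (or negative) semi-definite null manifold structure on $\N$ is equivalent to the existence of a smooth one-dimensional distribution on $\N$, i.e.\ of a smooth line subbundle $L \hookrightarrow T\N$. The corollary therefore reduces to establishing that such an $L$ exists on every non-compact $\N$, and on a compact $\N$ precisely when $\chi(\N)=0$.

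For the non-compact case and for the sufficiency half of the compact case, I would invoke the classical theorem of H.\ Hopf: any connected smooth manifold without boundary admits a nowhere-vanishing vector field $X\in\XX(\N)$ provided that either $\N$ is non-compact, or $\N$ is compact with $\chi(\N)=0$ (the latter being Poincar\'e--Hopf). The line spanned by $X$ is then a trivial one-dimensional distribution of $T\N$, and Lemma \ref{distribution} finishes the argument in both cases.

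The only content requiring genuine work is the necessity in the compact case. Assume $\N$ is compact and carries a smooth line subbundle $L\subset T\N$. The subtle point, and the main obstacle, is that $L$ need not be orientable, so we cannot directly produce a nowhere-vanishing section of $L$ on $\N$ itself. To bypass this, I would use the double-cover construction recalled in \eqref{covering}: since line bundles are classified by their first Stiefel--Whitney class $\omega_1(L)\in H^1(\N,\mathbb{Z}_2)$, there exists an at most $2{:}1$ cover $f:\dN\to\N$ such that $f^{\star}L$ is a trivial line bundle on $\dN$. Any global nowhere-vanishing section of $f^{\star}L$ is then a nowhere-vanishing vector field on the compact manifold $\dN$, so Poincar\'e--Hopf yields $\chi(\dN)=0$. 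Since $\dN$ is either two disjoint copies of $\N$ (when $L$ is orientable, i.e.\ $\omega_1(L)=0$) or a connected genuine $2{:}1$ cover of $\N$ (when $L$ is non-orientable), one has $\chi(\dN)=2\chi(\N)$ in both situations, and hence $\chi(\N)=0$.

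In summary, the only step that is not essentially tautological is the necessity in the compact case, precisely because of the potential non-orientability of the distribution $L$. One cannot invoke Poincar\'e--Hopf on $\N$ directly; the orientation double cover of $L$, already built into the line-bundle formalism of Section \ref{Nullmanifolds}, is the device that resolves this subtlety and produces a nowhere-vanishing vector field on a compact manifold whose Euler characteristic is exactly twice $\chi(\N)$.
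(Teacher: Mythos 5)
Your proposal is correct, and the reduction is the same as the paper's: both arguments pass through Lemma \ref{distribution} to convert the question into the existence of a one-dimensional distribution, and both handle the non-compact case and the sufficiency half of the compact case by producing a nowhere-zero vector field and taking its span. The difference lies in the necessity half of the compact case: the paper disposes of it in one line by citing Markus's theorem that a compact manifold admits a line element field if and only if $\chi(\N)=0$, whereas you supply a self-contained proof of that necessity by passing to the orientation double cover $f:\dN\to\N$ of the line bundle $L\subset T\N$ (the same covering construction \eqref{covering} used elsewhere in the section), trivializing $f^{\star}L$ there, applying Poincar\'e--Hopf on the compact cover, and using the multiplicativity $\chi(\dN)=2\chi(\N)$ of the Euler characteristic under two-sheeted covers. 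This argument is sound (both when $L$ is orientable, so that $\dN$ is two disjoint copies of $\N$, and when it is not, so that $\dN$ is connected), and it correctly identifies the one genuinely non-trivial point — that $L$ may be non-orientable, so Poincar\'e--Hopf cannot be applied on $\N$ directly. What your route buys is independence from the external reference and a nice consistency with the double-cover formalism the paper develops anyway; what the paper's route buys is brevity. Neither argument has a gap.
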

\begin{proof}
  It is a standard fact that any non-compact manifold admits a nowhere zero vector field. Thus, it also admits a one-dimensional distribution.  For the compact case, existence of a one-dimensional distribution is equivalent to
  the Euler characteristic $\chi(\N)$ of $\N$ being zero \cite{Markus}[Theorem 3].
\end{proof}

It is a well-known fact that compact manifolds have vanishing Euler characteristic if and only if they admit a nowhere zero vector field. Indeed, if $\N$ admits a nowhere zero vector field, then a direct application of the
Poincar\'e-Hopf index theorem implies $\chi(\N)=0$. The converse was proved by Hopf in \cite{Hopf}. It is also well-known (see e.g. \cite{Oneill}) that a manifold admits a Lorentzian metric if and only if it admits a nowhere zero vector field. So, the previous corollary can also stated as
\begin{corollary}
  \label{nowherezero}
  An $\n$-dimensional manifold $\N$  admits a positive (or negative) semi-definite
  null manifold structure if and only if it admits a nowhere zero vector field. Equivalently, if and only if $\N$ admits a Lorentzian metric.
\end{corollary}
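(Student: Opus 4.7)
The plan is to assemble the statement from the three facts collected in the paragraph immediately preceding it: Corollary \ref{Euler} on the existence of a semi-definite null manifold structure, the Poincar\'e--Hopf/Hopf characterization of vanishing Euler characteristic on compact manifolds, and the standard equivalence (attributed to O'Neill) between admitting a Lorentzian metric and admitting a nowhere zero vector field. So the task is really to chain together implications already noted, and I would split the argument into two cases according to compactness.

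First I would handle the non-compact case. By Corollary \ref{Euler}, any non-compact $\N$ automatically carries a positive (or negative) semi-definite null manifold structure. The same corollary uses in its own proof the fact that every non-compact manifold admits a nowhere zero vector field, so that condition also holds unconditionally. Hence both sides of the claimed equivalence are simultaneously true, and the biconditional is vacuously verified.

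Next I would handle the compact case. Corollary \ref{Euler} reduces the existence of a semi-definite null manifold structure to the vanishing of the Euler characteristic $\chi(\N)$. The Poincar\'e--Hopf index theorem gives $\chi(\N)=0$ whenever a nowhere zero vector field exists, and Hopf's converse (already cited in the text) provides the other direction for compact $\N$. Chaining these two equivalences yields ``admits a semi-definite null manifold structure $\Longleftrightarrow$ admits a nowhere zero vector field'' in the compact case, which together with the non-compact case establishes the first equivalence of the corollary.

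For the second equivalence (``$\Longleftrightarrow$ admits a Lorentzian metric''), I would simply invoke the cited classical result that a smooth manifold admits a Lorentzian metric if and only if it carries a nowhere zero vector field. Combining with what has just been proved closes the chain of equivalences. There is no genuine obstacle here: the corollary is a repackaging of results already gathered in the paragraph above, and the only care needed is to keep the non-compact case from being forgotten, since the nowhere-zero-vector-field condition is automatic there and does not pass through Poincar\'e--Hopf.
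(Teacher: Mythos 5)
Your proposal is correct and follows essentially the same route as the paper, which obtains the corollary by combining Corollary \ref{Euler} with the Poincar\'e--Hopf/Hopf characterization of vanishing Euler characteristic and the classical equivalence between Lorentzian metrics and nowhere zero vector fields. Your explicit handling of the non-compact case (where both conditions hold automatically) is a welcome clarification of a point the paper leaves implicit.
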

Note that the vector field in this corollary has in general no relation with the one-dimensional distribution in Lemma \ref{distribution}.

As we shall see below the definition of a geometry on $(\N,\gamma)$ is simplest when this line bundle is trivial. Moreover, this case will serve as the basis for the more general setup, so we introduce the following definition.
\begin{definition}
  A {\bf ruled null manifold} is a null manifold $(\N,\gamma)$
  for which the line bundle 
  $(E:= \bigcup_{p \in \N} \Rad_{\gamma}|_p$, $\N ,\pi)$ is trivial.
\end{definition}
Understanding ruled null manifolds is relevant for the general null manifold case thanks to the covering map \eqref{covering}. Indeed, consider a null manifold which is not ruled, i.e. such that $(E:= \bigcup_{p \in \N} \Rad_{\gamma}|_p, \N ,\pi)$ is non-trivial. As already discussed, the pull-back bundle $f^{\star} (E)$  over $\dN$
is a trivial line bundle. Moreover, it is a subbundle of the tangent bundle of $\dN$ because from $f$ being a local diffemorphism it follows
$f^{\star} (T \N) \approx T \dN$, where the diffeomorphism is in the sense of vector bundles. Since $E$ is by construction a subbundle of
$T\N$, the pull-back $f^{\star} (E)$ is, via the isomorphism above, also
a subbundle of  $T\dN$. Hence, it defines a one-dimensional distribution in $\dN$. The line bundle $f^{\star} (E)$ being trivial, it admits a nowhere vanishing section. As already said, this section can be identified with a nowhere zero vector field $\dupn \in \XX(\dN)$.
Now $\dgamma := f^{\star} (\gamma)$ is a symmetric two covariant vector field on $\dN$. Its radical is one-dimensional at every point, and by construction
$\dupn$ spans this radical. Consequently the null manifold structure
$(\N,\gamma)$ induces a ruled null manifold structure $(\dN, \dgamma)$ on $\dN$. Thus, understanding ruled manifold structures is relevant not only for its own sake but, particularly, for the consequences  it has concerning the  more general case of null manifold structures.

We shall study ruled manifolds via the concept of {\bf metric hypersurface data}. This notion was introduced in \cite{MarsGRG, Mars2020} with a precursor appearing already in
\cite{MarsSenovilla93}. Further properties of this notion can be found in
\cite{ManzanoMars}
and applications to the matching problem have been discussed  in \cite{Miguel1, Miguel2}
while applications to characteristic initial value problem appear in \cite{Gabriel1, Gabriel2}. For the purposes of this paper we shall restrict to a particular case, namely the so-called {\bf null metric hypersurface data}. The definition is as follows \cite{Mars2020}.
\begin{definition}
  {\bf Null metric hypersurface} data is a four-tuple $\metdata$ where $\N$ is a smooth manifold of dimension $\n \geq 1$, $\gamma$ is a smooth symmetric $(0,2)$-tensor field with one-dimensional radical at every point, $\ellc \in \XX^{\star}(\N)$ and $\elltwo \in \FF(\N)$, provided the following symmetric $(0,2)$-tensor $\A|_p$ defined on $T_p \N \oplus  \mathbb{R}$
  \begin{align}
    \A|_p ( (X_1,a_1),(X_2,a_2)) := \gamma(X_1,X_2) + a_1 \ellc(X_2) + a_2 \ellc(X_1) + a_1 a_2 \elltwo \label{defA}
     \end{align}
    is non-degenerate at every $p \in \N$.
\end{definition}
Since the tensor $\A|_p$ is non-degenerate, there is a unique symmetric contravariant tensor $\A^{\sharp}|_p$ on $T_p \N \oplus \mathbb{R}$ defined by
\begin{align*}
  \A^{\sharp}|_p  ( \bm{(X_1,a_1)} , \cdot )  = (X_1,a_1), \qquad \quad \forall (X_1,a_1)
  \in T_p \N \oplus \mathbb{R}
  \end{align*}
where the covector $\bm{(X_1,a_1)}$ is associated to $(X_1,a_1)$ by
$\bm{(X_1,a_1)} (\cdot) := \A|_p ((X_1,a_1), \cdot)$. One can then define
  the tensors $\{ P|_p, \upn|_p\}$ on $T_p \N$ by splitting $\A^{\sharp}$, namely
  \begin{align*}
    P (\bm{\omega_1}, \bm{\omega_2}) = \A^{\sharp}|_p ((\bm{\omega_1},0),(\bm{\omega_2},0)),
    \qquad \upn|_p(\bm{\omega_1}):= \A^{\sharp}|(\bm{\omega_1},0), (0,1)),
    \qquad \bm{\omega_1}, \bm{\omega_2} \in T^{\star}_p \N.
  \end{align*}
  $P$ defines a $(2,0)$-tensor field and $\upn$ a vector field on $\N$.
  In abstract index notation, they are defined by \cite{MarsGRG, Mars2020}
\begin{align}
\gamma_{ab} \vecn^b  & = 0, \label{prod1} \\
\ell_a \vecn^a  & = 1, \label{prod2}  \\
P^{ab} \ell_b + \elltwo \vecn^a & = 0,  \label{prod3} \\
P^{ab} \gamma_{bc} + \vecn^a \ell_c & = \delta^a_c. \label{prod4}
\end{align}
Note that by construction $\upn$ is a nowhere zero section of the
bundle $(E = \bigcup_{p \in \N} \Rad_{\gamma}|_p,\N,\pi)$. So, the null manifold $(\N,\gamma)$ associated to any
null metric hypersurface data is always a ruled null manifold.  It is natural to ask whether the converse is also true, namely whether one can associate
null metric hypersurface data to any ruled null manifold, and if so in how many fundamentally different ways  this is possible.

To answer this question we first need to recall the fact that
null metric hypersurface data has a natural built-in gauge group and that two null metric hypersurface data related by a gauge transformation are to be regarded as equivalent from a geometric point of view. We first recall the action of the gauge group (details can be found in \cite{Mars2020}).

Define the following internal operation on the set $\G :=\FF^{\star}(\N) \times \XX(\N)$
\begin{align*}
  \cdot : \G \times \G & \longrightarrow \G \\
  ( (z_2,\gauge_2) , (z_1, \gauge_1) ) & \mapsto
  (z_2 z_1, \gauge_1 + z_1^{-1} \gauge_2).
\end{align*}
This operation endows $\G$ with a group structure. This group is called {\em hypersurface data gauge group} or simply {\em gauge group} if no confusion can arise. The inverse of $(z,\gauge)$ and the neutral element $e$ are
\begin{align*}
  (z,\gauge)^{-1} = (z^{-1}, - z \gauge), \qquad e = (1,0).
\end{align*}
Given $(z,\gauge) \in \G$, its action 
on the null metric hypersurface data, denoted by $\G_{(z,\gauge)}$, is defined as follows
\begin{align}
  \G_{(z,\gauge)} (\gamma)= \gamma, \qquad
  \G_{(z,\gauge)} (\ellc) = z ( \ellc + \gamma(\gauge, \cdot)), \qquad
  \G_{(z,\gauge)} (\elltwo) = z^2 \big ( \elltwo + 2 \ellc(\gauge) +
  \gamma(\gauge,\gauge) \big ). \label{gaugemhd}
\end{align}
One checks easily that the action is well-defined, i.e. takes
metric hypersurface data ${\mathcal D}:= \{\N,\gamma,\ellc,\elltwo\}$ and produces
metric hypersurface data
$\G_{(z,\gauge)} ({\mathcal D}):= \{\N,\G_{z,\gauge}(\gamma),\G_{(z.\gauge)}(\ellc),\G_{z,\gauge)}(\elltwo)\}$,
and one also checks that this action is a realization of the group, namely
\begin{align*}
\G_{(z_1, \gauge_1)} \circ \G_{(z_2,\gauge_2)} =
  \G_{(z_1, \gauge_1) \cdot (z_2, \gauge_2)}.
  \end{align*}
The gauge behaviour of the contravariant tensors $\{ P,\upn\}$ is obtained from \eqref{prod1}-\eqref{prod4}. The result is
\cite{MarsGRG}
\begin{align}
  \G_{(z,\gauge)} (P) = P - 2 \upn \otimes_s \gauge,
  \qquad \quad
  \G_{(z,\gauge)} (n) = z^{-1} \upn. \label{Pprime}
\end{align}
This gauge transformation leads to a notion of
geometric equivalence of metric hypersurface data.
\begin{definition}
  Let $\D_1:= \{\N,\gamma_1, \ellc_1, \elltwo_1\}$ and
  $\D_2 := \{\N,\gamma_2, \ellc_2, \elltwo_2\}$ two null metric hypersurface data on the same manifold
  $\N$. We say that they are {\bf geometrically equivalent} and write
  \begin{align*}
    (\N,\gamma_1, \ellc_1, \elltwo_1) \sim   (\N,\gamma_2, \ellc_2, \elltwo_2)
  \end{align*}
  whenever there exists  a group element $(z,\gauge) \in \G$ such that
  $\G_{(z,\gauge)} (\D_1) = \D_2$.
  \end{definition}
It is obvious that $\sim$ defines an equivalence relation on the set null metric hypersurface data on a given manifold $\N$. We shall denote the equivalence class of $\metdata$ with a bar, namely $\overline{\D}$ or 
$\metdatabar$

We have already stated that any metric hypersurface data defines a ruled null geometry
$(\N,\gamma)$. By  the gauge invariance of $\gamma$ it is clear that this null geometry only depends on the equivalence class. We now want to show that there exists a one-to-one correspondence between ruled null geometries and equivalence classes of metric hypersurface data.
We start by finding  under which conditions a given covector and scalar fields defined on a ruled null manifold define null metric hypersurface data.
\begin{lemma}
  \label{conditionsmetdata}
  Let $(\N,\gamma)$ be a ruled null manifold. Let $\ellc \in \XX^{\star}(\N)$
  and $\elltwo \in \FF(\N)$ be given. Then $\metdata$ is  null metric hypersurface data if and only if for one (and hence any) nowhere zero section  $e_1$ of
  $(E,\N,\pi)$ it holds $\ellc (e_1) \neq 0$ everywhere
  (in other words,  $\ellc$ is transverse to $\Rad_{\gamma}$ at every point).
  \end{lemma}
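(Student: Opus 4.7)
The plan is to prove both implications pointwise, exploiting the definition of $\A|_p$ and the fact that on a ruled null manifold the radical bundle admits a nowhere-zero section.

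First, I would observe that the ``one implies any'' part of the statement is almost automatic: if $e_1$ and $e_1'$ are two nowhere-zero sections of the trivial line bundle $E$, then they are related by $e_1' = h\, e_1$ for some $h \in \FF^{\star}(\N)$, hence $\ellc(e_1') = h\, \ellc(e_1)$ vanishes at a point if and only if $\ellc(e_1)$ does. So it suffices to fix one nowhere-zero section and argue at a single point $p$.

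For the forward (contrapositive) direction, assume there is a point $p \in \N$ where $\ellc(e_1)|_p = 0$. Since $e_1|_p$ spans $\Rad_{\gamma}|_p$, we have $\gamma(e_1,X)|_p = 0$ for every $X \in T_p\N$. A direct computation then shows
\begin{equation*}
\A|_p\bigl((e_1|_p,0),(X,a)\bigr) = \gamma(e_1,X)|_p + a\, \ellc(e_1)|_p = 0 \qquad \forall (X,a) \in T_p\N \oplus \mathbb{R},
\end{equation*}
so $(e_1|_p,0)$ is a non-zero element of the kernel of $\A|_p$, contradicting non-degeneracy. Thus $\metdata$ fails to be null metric hypersurface data.

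For the converse, assume $\ellc(e_1)|_p \neq 0$ at every point and let $(X,a) \in T_p\N \oplus \mathbb{R}$ lie in the kernel of $\A|_p$. Testing against $(Y,0)$ for arbitrary $Y \in T_p\N$ gives $\gamma(X,\cdot) + a\, \ellc = 0$ as a covector on $T_p\N$, and testing against $(0,1)$ gives $\ellc(X) + a\, \elltwo = 0$. Evaluating the covector identity on $e_1|_p$ yields $a\, \ellc(e_1)|_p = 0$, so $a=0$ by hypothesis; then $\gamma(X,\cdot) = 0$, i.e.\ $X \in \Rad_{\gamma}|_p$, so $X = \lambda\, e_1|_p$ for some $\lambda \in \mathbb{R}$. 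The scalar condition $\ellc(X) = 0$ then forces $\lambda\, \ellc(e_1)|_p = 0$, so $\lambda = 0$ and $(X,a) = (0,0)$. Hence $\A|_p$ is non-degenerate at every $p$, completing the proof.

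I do not anticipate a genuine obstacle here; the only subtlety is recognising that one must split the kernel condition into its ``tangential'' and ``scalar'' parts and then use the spanning vector $e_1$ of the radical to conclude $a=0$ before recovering $X=0$. The use of the ruled hypothesis enters only to guarantee the existence of the global section $e_1$, which makes the transversality statement $\ellc(e_1) \neq 0$ meaningful as a global condition.
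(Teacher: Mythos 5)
Your proof is correct, and it takes a genuinely different route from the paper's. For the converse (the substantive direction), the paper completes $e_1$ to a canonical basis $\{e_a\}$ of $\gamma|_p$, builds an adapted basis $\{E_0, E_a\}$ of $T_p\N\oplus\mathbb{R}$ involving an auxiliary vector $V = -\sum_B \epsilon_B\, \ellc|_p(e_B)\, e_B$, and computes the matrix of $\A|_p$ explicitly, obtaining $\det(\A|_p) = -(\ellc|_p(e_1))^2\prod_B \epsilon_B \neq 0$. You instead show directly that the radical of $\A|_p$ is trivial: splitting a putative kernel element $(X,a)$ into its tangential and scalar parts, evaluating on $e_1$ to force $a=0$, and then using $\Rad_\gamma|_p = \mathrm{span}(e_1|_p)$ together with $\ellc(e_1)\neq 0$ to force $X=0$. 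Since non-degeneracy of a symmetric bilinear form on a finite-dimensional space is equivalent to triviality of its radical, this is a complete argument, and it is arguably more economical in that it avoids the canonical-basis construction entirely. For the forward direction the paper simply takes $e_1 = \upn$ and invokes $\ellc(\upn)=1$, whereas you argue by contrapositive, exhibiting $(e_1|_p,0)$ as an explicit kernel element when $\ellc(e_1)|_p=0$; both work, and your explicit "one implies any" remark (two nowhere-zero sections of a line bundle differ by a nowhere-vanishing function) is a point the paper leaves implicit. What the paper's determinant computation buys in exchange for the extra work is the explicit signature information about $\A|_p$, which is used elsewhere (e.g.\ in relating the signature of $\A$ to that of the ambient metric in the embedded case); your kernel argument does not yield that.
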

  \begin{proof}
   If $\metdata$ is null metric hypersurface data,
    we may chose $e_1|_p=\upn|_p$ and the condition $\ellc(e_1)\neq 0$ is satisfied.
    To prove the converse we only need to make sure that the symmetric $(0,2)$-tensor $\A|_p$ on $T_p \N \oplus \mathbb{R}$ defined in \eqref{defA}
    is non-degenerate. Let $e_1$ be a non-zero section of $(E,\N,\pi)$. By assumption $\ellc|_p (e_1) \neq 0$ everywhere. We work at a fixed point $p \in \N$ from now on.  Define $\n := \dim(\N)$ and complete $e_1$ to a  canonical basis $\{ e_a \}$ of $\gamma|_p$, i.e. a basis satisfying
    \begin{align*}
      \gamma(e_a, e_b) =0 \quad \mbox{if} \quad a \neq b, \qquad
      \gamma(e_a,e_a) = \epsilon_a
    \end{align*}
    with $\epsilon_1=0$ and $\epsilon^2_A =1$ if $2 \leq A \leq \n$. Then, 
    the vectors
\begin{equation*}
  E_0:=(V,1),\qquad E_a:=(e_a,0),\qquad\textup{with}\qquad V:=-\sum_{B=2}^{\n}\epsilon_B \ellc\vert_p(e_B)e_B, \qquad V \in T_p\mathcal{N}
\end{equation*}
constitute a basis of $T_p\mathcal{N}\times\mathbb{R}$. From \eqref{defA} we get 
\begin{align*}
\A\vert_p(E_0,E_0)&=\gamma\vert_p\lp V,V\rp+2\ellc\vert_p\lp V\rp+ \ell^{(2)}, & \A\vert_p(E_0,E_1)&=\ellc\vert_p\lp e_1\rp,\\
\A\vert_p(E_0,E_A)&=\gamma \vert_p\lp V,e_A\rp+\ellc\vert_p\lp e_A\rp=0, & \A\vert_p(E_1,E_1)&=0,\\
\A\vert_p(E_1,E_A)&=0, & \A\vert_p(E_A,E_B)&=\gamma \vert_p\lp e_A,e_B\rp=\delta_{AB}\epsilon_A.
\end{align*}
The determinant of $\A|_p$ in this basis is
\begin{align*}
  \mbox{det} (\A|_p)= - (\ellc|_p (e_1))^2 \prod_{B=2}^\n \epsilon_B,
\end{align*}
which is non-zero. Hence, $\A|_p$ is non-degenerate, as claimed.
  \end{proof}

  The following lemma, first proved in \cite{ManzanoMars}, is key to establish the relationship between ruled null manifolds and equivalent classes of null hypersurface data. We include the proof in order to make this paper as self-contained as possible. 
  \begin{lemma}
    \label{gaugefix}
    Let $\metdata$ be null metric hypersurface data.
    Let $\bm{\omega} \in \XX^{\star}(\N)$ and $u \in \FF(\N)$. Assume that $\bm{\omega} ( \upn) \neq 0$ everywhere. Then
  there exists a unique gauge transformation $\G_{(z,\gauge)}$ satisfying
  \begin{align}
    \G_{(z,\gauge)} (\ellc) = \bm{\omega}, \qquad \G_{(z,\gauge)} (\elltwo) = u.
    \label{trans}
  \end{align}
  Moreover, the gauge group element $(z,\gauge)$ is given by
  \begin{align}
    z = \bm{\omega} (\upn), \qquad \gauge = \frac{1}{\wn} P(\bm{\omega}, \cdot)
      + \frac{u - P(\bm{\omega},\bm{\omega} )}{2 \left ( \bm{\omega}(\upn) \right )^2}      \upn.
      \label{gaugeparameter}
  \end{align}
\end{lemma}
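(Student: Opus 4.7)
The plan is to apply the two gauge transformation formulas from \eqref{gaugemhd} in a specific order that isolates $z$ first and then $\gauge$, and finally to verify that the resulting candidate indeed reproduces $\bm{\omega}$ and $u$.

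First I would contract the equation $\G_{(z,\gauge)}(\ellc) = \bm{\omega}$ with $\upn$. Using $\gamma(\gauge,\upn) = 0$ (from \eqref{prod1}) together with $\ellc(\upn) = 1$ (from \eqref{prod2}), the right-hand side collapses to $z$, giving immediately $z = \bm{\omega}(\upn)$. The hypothesis $\bm{\omega}(\upn) \neq 0$ everywhere guarantees $z \in \FF^{\star}(\N)$, so this is an admissible gauge parameter and it is uniquely determined.

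Next I would turn to the equation $\gamma(\gauge, \cdot) = z^{-1} \bm{\omega} - \ellc$, which is the remaining content of $\G_{(z,\gauge)}(\ellc) = \bm{\omega}$. Since $\gamma$ is degenerate this equation alone does not determine $\gauge$, but contraction with $P$ inverts its transverse part. Using the identity \eqref{prod4} in the form $\gamma_{ab} P^{bc} = \delta^c_a - \upn^c \ell_a$ together with \eqref{prod3}, a direct computation yields
\begin{align*}
\gauge = \frac{1}{\bm{\omega}(\upn)}P(\bm{\omega},\cdot) + \big(\elltwo + \ellc(\gauge)\big)\upn,
\end{align*}
so the only remaining degree of freedom is the scalar $\ellc(\gauge)$, i.e.\ the component of $\gauge$ along the radical.

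To fix this scalar I would impose the second condition $\G_{(z,\gauge)}(\elltwo) = u$. The key computational step is to evaluate $\gamma(\gauge,\gauge)$ using the previous expression for $\gauge$: contracting $\gauge^a$ with $\gamma_{ab}\gauge^b = z^{-1}\omega_b - \ell_b$ and using \eqref{prod3}--\eqref{prod4} again, the $\ellc(\gauge)$ terms cancel and one obtains $\gamma(\gauge,\gauge) = z^{-2}P(\bm{\omega},\bm{\omega}) + \elltwo$. Substituting into $u = z^2\big(\elltwo + 2\ellc(\gauge) + \gamma(\gauge,\gauge)\big)$ reduces to a linear equation in $\ellc(\gauge)$ whose unique solution is $\ellc(\gauge) = (u - P(\bm{\omega},\bm{\omega}))/(2z^2) - \elltwo$. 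Inserting this into the expression above reproduces precisely \eqref{gaugeparameter}.

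The main (and really only) obstacle is the bookkeeping of the cancellations in the computation of $\gamma(\gauge,\gauge)$; once this is done, uniqueness is immediate because each step determined $z$ and then $\ellc(\gauge)$ uniquely, and existence follows by verifying that the explicit $(z,\gauge)$ obtained satisfies both equations of \eqref{trans}, which is automatic by construction. I would close by remarking that $z = \bm{\omega}(\upn) \in \FF^{\star}(\N)$ and $\gauge \in \XX(\N)$, so $(z,\gauge) \in \G$, confirming that the gauge transformation is well-defined.
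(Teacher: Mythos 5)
Your proposal is correct and follows essentially the same route as the paper's proof: contract the first gauge condition with $\upn$ to fix $z$, invert the degenerate equation $\gamma(\gauge,\cdot)=z^{-1}\bm{\omega}-\ellc$ with $P$ to determine $\gauge$ up to its radical component, and use the $\elltwo$ condition to fix that remaining scalar (your $\ellc(\gauge)$ is the paper's free function $q$ shifted by $\elltwo$). The only stylistic difference is that the paper makes the final verification of \eqref{trans} for the candidate $(z,\gauge)$ fully explicit, whereas you declare it automatic; the key fact making it work, namely $(z^{-1}\bm{\omega}-\ellc)(\upn)=0$, is implicit in your cancellation argument.
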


\begin{remark}
  The condition $\bm{\omega}(n) \neq 0$ is necessary by Lemma \ref{conditionsmetdata}.
 \end{remark}

\begin{proof}
  We  first assume that the gauge transformation exists and use this fact to
  restrict its form up to a free function. We then restrict ourselves to such  class of group elements
  and show that there exists precisely one group element satisfying \eqref{trans},  and that this is given by  \eqref{gaugeparameter}. This will prove both the existence and uniqueness claims of the lemma. For the first part we impose \eqref{trans} namely
      \begin{align}
    \G_{(z,\gauge)} (\ellc) = z \left ( \ellc + \gamma(\gauge,\cdot) \right ) = \bm{\omega} , \qquad
    \G_{(z,\gauge)} (\elltwo) = z^2 \big ( \elltwo + 2 \ellc(\gauge) + \gamma(\gauge,\gauge) \big ) = u.
    \label{trans2}
  \end{align}
  Contracting the first with $\upn$ gives $z = \bm{\omega}(\upn)$, so
  \begin{align*}
    \bm{\rho}:= \gamma(\gauge,\cdot )  = \frac{1}{\wn} \bm{\omega} - \ellc.
  \end{align*}
  Note that $\bm{\rho}(\upn)=0$. The computation $
    \gamma_{ab} \left ( \gauge^b - P^{bc} \rho_c \right )
=    \rho_{a} - \left ( \delta^c_a - n^c \ell_a \right ) \rho_c =0$
shows that the 
 vector $\gauge - P(\bm{\rho}, \cdot)$ lies in the kernel of $\gamma$, so there exists a function $f$ such that
    \begin{align*}
      \gauge^a  = P^{ab} \rho_b + f n^b = \frac{1}{\wn} P^{ab} \omega_b
      + \left ( \elltwo  + f \right ) n^a.
    \end{align*}
    Thus, it suffices to restrict oneself to gauge parameters in the class
    \begin{align}
   \left \{ \Big ( z = \wn , \gauge  = \frac{1}{\wn} P(\bm{\omega}, \cdot)  + q \upn\Big  ) ,  \quad q \in \FF(\N) \right  \}.
       \label{defV}
    \end{align}
    We now start anew and prove that there is precisely one function $q$ such that the corresponding $(z,\gauge)$ in  \eqref{defV}
    fulfills 
    conditions \eqref{trans}. For $\gauge$ as  in \eqref{defV}  we get
    \begin{align}
       \gamma( \gauge, \cdot )  & = \frac{1}{\wn} \gamma( P (\bm{\omega}, \cdot),
      \cdot ) =  \frac{1}{\wn} \bm{\omega} - \ellc,  \label{V1}  \\
          \bm{\omega} (\gauge)  & = \frac{1}{\wn} P(\bm{\omega}, \bm{\omega} )
          + q \wn,  \nonumber \\
                \ellc (\gauge)  & = - \elltwo + q, \label{V2} \\
              \gamma(\gauge,\gauge)  & = \frac{1}{\wn} \bm{\omega}(\gauge) - \ellc(\gauge)  =
      \frac{P(\bm{\omega}, \bm{\omega})}{\wn^2} + \elltwo. \label{V3}
    \end{align}
    The first condition in \eqref{trans2} is satisfied for all $q$ as a direct consequence of \eqref{V1}. From \eqref{V2} and \eqref{V3}, the
    second in \eqref{trans2} is satisfied if and only if
    \begin{align*}
      \wn^2 \left ( 2 q + \frac{P(\bm{\omega}, \bm{\omega})}{\wn^2}
      \right ) = u \qquad \Longleftrightarrow \qquad
      q =  \frac{u - P(\bm{\omega}, \bm{\omega})}{2 \wn^2}.
    \end{align*}
      \end{proof}
      Lemma \ref{gaugefix} has the following interesting consequences.
  \begin{corollary}
    \label{equivdata}
    Let $\N$ be a smooth manifold and  $\D_1:= \metdataone$, $\D_2:=\metdatatwo$ two null metric
    hypersurface data defined on $\N$. Then, there is a gauge group element $(z,\gauge) \in \FF^{\star}(N) \times \Gamma(T^{\star} \N)$ such that
    $\G_{(z,\gauge)} ( \D_1 ) = \D_2$ if and only if $\gamma_1= \gamma_2$.
  \end{corollary}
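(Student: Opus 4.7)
The strategy is to treat the two implications separately, with the nontrivial content lying entirely in the sufficiency direction.

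For necessity, if $\G_{(z,\gauge)}(\D_1) = \D_2$ for some gauge element $(z,\gauge) \in \G$, then by the first identity in \eqref{gaugemhd} we immediately get $\gamma_2 = \G_{(z,\gauge)}(\gamma_1) = \gamma_1$, so there is nothing to check.

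For sufficiency, assume $\gamma_1 = \gamma_2 =: \gamma$. The plan is to apply Lemma \ref{gaugefix} to the data $\D_1$ with the choices $\bm{\omega} := \ellc_2$ and $u := \elltwo_2$. To do so, the only hypothesis that must be verified is the transversality condition $\ellc_2(\upn_1) \neq 0$ pointwise, where $\upn_1$ is the distinguished vector field built from $\D_1$ via \eqref{prod1}--\eqref{prod4}. This is where the assumption $\gamma_1 = \gamma_2$ is crucially used: the radical bundle $E = \bigcup_{p \in \N} \Rad_{\gamma}|_p$ depends only on $\gamma$, and $\upn_1$ is by construction a nowhere-zero section of this same bundle (it is nonzero because $\ellc_1(\upn_1) = 1$ by \eqref{prod2}, and it lies in $\Rad_\gamma$ by \eqref{prod1}). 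Since $\D_2 = \{\N,\gamma,\ellc_2,\elltwo_2\}$ is null metric hypersurface data sharing this same radical bundle, Lemma \ref{conditionsmetdata} applied to $\D_2$ with the section $e_1 := \upn_1$ yields $\ellc_2(\upn_1) \neq 0$ everywhere, which is exactly the hypothesis needed.

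Having verified the hypothesis, Lemma \ref{gaugefix} produces a unique $(z,\gauge) \in \G$ such that $\G_{(z,\gauge)}(\ellc_1) = \ellc_2$ and $\G_{(z,\gauge)}(\elltwo_1) = \elltwo_2$; explicit formulas are given by \eqref{gaugeparameter} with $\bm{\omega} = \ellc_2$, $u = \elltwo_2$ and $\upn = \upn_1$, $P = P_1$. Since $\G_{(z,\gauge)}$ acts trivially on $\gamma$, we conclude $\G_{(z,\gauge)}(\D_1) = \D_2$. No step beyond this direct appeal is required, and the main (and essentially only) obstacle was to recognize that the transversality hypothesis of Lemma \ref{gaugefix} holds precisely because both data sets share the same $\gamma$, hence the same radical bundle.
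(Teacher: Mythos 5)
Your proof is correct and follows essentially the same route as the paper: necessity from the gauge invariance of $\gamma$, sufficiency by applying Lemma \ref{gaugefix} with $\bm{\omega}=\ellc_2$ and $u=\elltwo_2$. Your explicit verification of the transversality hypothesis $\ellc_2(\upn_1)\neq 0$ via Lemma \ref{conditionsmetdata} is a detail the paper leaves implicit, and it is handled correctly.
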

  
  \begin{proof}
    The necessity is obvious because $\gamma$ remains unchanged under a gauge transformation. Sufficiency is a direct application of Lemma \ref{gaugefix} to
     $\bm{\omega}= \ellc{}_2$ and $u = \elltwo_2$. 
  \end{proof}

  \begin{proposition}
    \label{correspondence}
    Let $\N$ be a smooth manifold. There is a one-to-one correspondence between ruled null manifold structures on $\N$ and equivalence classes of null metric hypersurface data
    in $\N$.
  \end{proposition}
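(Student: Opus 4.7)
The plan is to exhibit two mutually inverse maps between the two sets. Define
\[
\Phi: \{\text{equivalence classes }\metdatabar\} \longrightarrow \{\text{ruled null manifold structures on }\N\}, \qquad \overline{\D} \mapsto (\N,\gamma),
\]
which is well-defined because $\gamma$ is gauge-invariant by \eqref{gaugemhd}, and because the vector field $\upn$ associated to any representative of $\overline{\D}$ through \eqref{prod1}--\eqref{prod4} is a nowhere-zero section of $(E,\N,\pi)$, so the image is indeed a ruled null manifold. In the opposite direction, define $\Psi$ by sending a ruled null manifold structure $(\N,\gamma)$ to the equivalence class of some null metric hypersurface data with metric part $\gamma$; the main task is to check that such data exists and that the resulting class does not depend on the choices made in its construction.

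For existence, I would pick a nowhere-zero section $e_1$ of $(E,\N,\pi)$ (which exists because $\N$ is ruled) and any Riemannian metric $g$ on $\N$ (which exists since $\N$ is paracompact). Setting $\ellc := g(e_1,\cdot)$ yields $\ellc(e_1) = g(e_1,e_1) > 0$ everywhere, so $\ellc$ is transverse to $\Rad_{\gamma}$. Choosing any $\elltwo \in \FF(\N)$, e.g.\ $\elltwo = 0$, Lemma \ref{conditionsmetdata} ensures that $\metdata$ is null metric hypersurface data. Independence of the choice of $\ellc$ and $\elltwo$ is a direct consequence of Corollary \ref{equivdata}: any two null metric hypersurface data on $\N$ sharing the same $\gamma$ are related by a gauge transformation, hence determine the same equivalence class. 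Thus $\Psi$ is well-defined.

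Finally, I would verify $\Phi \circ \Psi = \Id$ and $\Psi \circ \Phi = \Id$. The first identity is immediate: applying $\Psi$ to $(\N,\gamma)$ produces a class whose metric component is precisely $\gamma$, and $\Phi$ reads off this component. The second identity follows from the same reasoning: starting with $\overline{\D}$ represented by $\metdata$, the map $\Phi$ yields $(\N,\gamma)$; applying $\Psi$ gives the class of some data with metric part $\gamma$, which by Corollary \ref{equivdata} coincides with $\overline{\D}$. The only step that is at all delicate is the existence of a transverse covector $\ellc$, and using a background Riemannian metric reduces this to a routine construction; no further obstruction arises.
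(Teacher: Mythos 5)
Your proposal is correct and follows essentially the same route as the paper: existence of a completion via a background Riemannian metric and Lemma \ref{conditionsmetdata}, and independence of choices via Corollary \ref{equivdata}. The only cosmetic difference is that you take $\ellc = g(e_1,\cdot)$ unnormalized while the paper normalizes so that $\ellc(\upn)=1$; both satisfy the transversality hypothesis of Lemma \ref{conditionsmetdata}.
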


  \begin{proof}
    We already know that an equivalence class of null metric hypersurface data
    $\metdatabar$ defines a ruled null manifold structure $(\N,\gamma)$. To prove the converse we consider a ruled null manifold structure $(\N,\gamma)$ on $\N$ (note this may not exist, in which case there are no null metric hypersurface structures either, and there is nothing to prove). We need to show two things:
    (i) that one can complete $(\N,\gamma)$ to a null metric hypersurface data $\metdata$ and (ii) that any two null metric hypersurface data of this form belong to the same equivalence class. The validity of (ii) is the content of Corollary \ref{equivdata}, so it only remains to establish (i). By definition of ruled null manifold, there exists a smooth nowhere zero vector field $\upn$ such that
    $\upn|_p \in \Rad_{\gamma}|_p$ for all $p \in \N$. We have already mentioned
    the fundamental fact that any smooth manifold can be endowed with a smooth riemannian metric.    Let $g_0$ one such metric and
    define
    \begin{align*}
      \ellc := \frac{1}{g_0(\upn,\upn)} g_0(\upn, \cdot)
    \end{align*}
    By construction $\ellc$ is a smooth covector field that satisfies $\ellc(\upn) =1$. Thus, by lemma \ref{conditionsmetdata} the data $\metdata$ where $\elltwo$ is any smooth real function (e.g. $\elltwo=0$) defines null metric hypersurface data. 
      \end{proof}

      The following corollary determines the
      necessary and sufficient conditions for a manifold $\N$
      to admit
      null metric hypersurface data $\metdata$ with positive semidefinite
      $\gamma$.
      \begin{corollary}
  \label{Euler2}
  Let $\N$  be a smooth manifold of dimension $\n$. Then $\N$ can be endowed with  null
  metric hypersurface data $\metdata$ with $\gamma$ of signature
  $(\n-1,0,1)$ (equivalently, with
  a ruled null manifold structure $(\N,\gamma)$) if and only if $\N$ admits a nowhere zero vector field, i.e. if and only if either $\N$ is non-compact, or it is compact with vanishing Euler characteristic.
\end{corollary}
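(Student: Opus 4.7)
The plan is to chain three equivalences. Writing (a) for the existence of null metric hypersurface data $\metdata$ on $\N$ with $\gamma$ of signature $(\n-1,0,1)$, (b) for the existence of a ruled null manifold structure $(\N,\gamma)$ of that same signature, (c) for the existence of a nowhere zero vector field on $\N$, and (d) for the condition that $\N$ is non-compact or compact with $\chi(\N)=0$, the goal is to prove (a)~$\iff$~(b)~$\iff$~(c)~$\iff$~(d).

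The equivalence (a)~$\iff$~(b) would be read off directly from Proposition~\ref{correspondence}, which gives a bijection between ruled null manifold structures on $\N$ and equivalence classes of null metric hypersurface data. Since $\gamma$ itself is gauge invariant, the signature restriction transfers automatically between the two sides.

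For (b)~$\iff$~(c) I would argue as follows. From (b) to (c), by definition of ruled null manifold the radical bundle is trivial, hence admits a nowhere zero section, which is in particular a nowhere zero vector field on $\N$. For the converse, given a nowhere zero $X\in\XX(\N)$ and any auxiliary Riemannian metric $g_0$ on $\N$ (which exists by the standard partition-of-unity argument), I would set
\begin{equation*}
\gamma := g_0 - \frac{g_0(X,\cdot)\otimes g_0(X,\cdot)}{g_0(X,X)},
\end{equation*}
check that $\gamma$ is a smooth symmetric $(0,2)$-tensor of signature $(\n-1,0,1)$ whose radical is everywhere spanned by $X$, and observe that $X$ is then a nowhere zero global section of that radical bundle, so the bundle is trivial and $(\N,\gamma)$ is a ruled null manifold.

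Finally, (c)~$\iff$~(d) is the classical fact already recalled in the discussion preceding Corollary~\ref{nowherezero}: every non-compact smooth manifold admits a nowhere zero vector field, while for compact manifolds the existence of such a field is equivalent to $\chi(\N)=0$ by the Poincar\'e--Hopf index theorem and Hopf's theorem \cite{Hopf}. No step presents a serious obstacle; the only delicate point is to verify that the signature condition is preserved along the chain (a)~$\to$~(b)~$\to$~(c), which is automatic thanks to the explicit constructions above.
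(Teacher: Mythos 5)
Your proposal is correct and follows essentially the same route as the paper: the sufficiency direction uses the identical construction $\gamma = g_0 - g_0(X,\cdot)\otimes g_0(X,\cdot)/g_0(X,X)$, and the remaining links are the same appeals to Proposition~\ref{correspondence} and the classical Poincar\'e--Hopf/Hopf facts. The only cosmetic difference is in the necessity direction, where you extract the nowhere zero vector field directly as a section of the trivial radical bundle, whereas the paper routes through Corollary~\ref{nowherezero}; both are immediate.
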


\begin{proof}
  Ruled manifold structures are in particular null manifold structures, so existence of a nowhere-zero vector field is necessary by Corollary \ref{nowherezero}. For sufficiency, simply pick up a riemannian metric $g_0$ on $\N$ and a nowhere zero vector field $X \in \XX(\N)$ and define
  \begin{align*}
    \gamma := g_0 - \frac{1}{g_0(X,X)} g_0 (X ,\cdot) \otimes g_0(X,\cdot)
    \end{align*}
    This tensor has signature $(\n-1,0,0)$ and $\Rad_{\gamma} = \mbox{span} (X)$.\end{proof}
 
      \begin{remark}
        Since all compact manifolds of odd dimension have vanishing Euler characteristic (e.g. \cite[Corollary 3.37]{Hatcher}), it follows from 
Corollaries \ref{Euler} and \ref{Euler2}
        that every  manifold $\N$ of odd dimension admits a null manifold structure and also a (different in general) ruled manifold structure (or equivalently  null metric hypersurface data). This applies in particular to the physical dimension of four-dimensional ambient spaces. 
      \end{remark}

      \section{The $\nablao$ connection}
\label{nablaoconnection}
      A key property of null hypersurface data is that it admits a well-defined covariant derivative with good gauge behaviour. In this section we recall this result and show in what sense it can be extended to  general  null manifolds.

      Given a null metric hypersurface data $\metdata$, it is convenient to introduce the following $(0,2)$-tensor fields
      \begin{align}
\bU  & \defi  \frac{1}{2} \pounds_{\upn} \gamma \label{defU} \\
\bF & \defi \frac{1}{2} d \ellc, \label{defF}
\end{align}
as well as the covector $\bsone$ 
\begin{align*}
\bsone =  \iota_{\upn} \bF.
\end{align*}
where $\iota$ means contraction in the first index, i.e.
$\sone_b = \F_{ab} n^a$. It is easy to show that  \cite{Gabriel1}
\begin{align*}
\bU (n, \cdot ) = 0, \qquad \quad
d \bsone  = \pounds_{\upn} \bF. 
\end{align*}
The following result, proved in \cite{Mars2020}, establishes the existence
of the torsion-free connection $\nablao$ as well as its gauge behaviour. We use the standard definition of the difference $(\nabla^{(1)}- \nabla^{(2)})$ 
between two connections $\nabla^{(1)}$ and
$\nabla^{(2)}$, namely the $(1,2)$-tensor given by
\begin{align*}
  (\nabla^{(1)}- \nabla^{(2)}) (X,Z) = \nabla^{(1)}_X Z - \nabla^{(2)}_X Z.
\end{align*}
\begin{proposition}
\label{ExistConn}
Let $\{ \N,\gamma,\ellc,\ll\}$ be null metric hypersurface data. 
There exists a unique torsion-free connection $\nablao$, called
{\bf metric hypersurface connection} defined by the two properties
\begin{align}
(\nablao_X \gamma) (Z,W) & =
- \bU(X,Z)\ellc(W) 
-  \bU(X,W)\ellc(Z) 
  \label{Cond1} \\
  (\nablao_X \ellc) (Z) + (\nablao_Z \ellc) (X) & = - 2 \ll \bU(X,Z), \qquad
       X,Z,W \in \XX(\N).                                           
\label{Cond2}
\end{align}
Moreover, under a gauge transformation with gauge parameters $(z,\gauge)$, the connection transforms as 
\begin{align}
  \G_{(z,\gauge)} \nablao = \nablao
     + \gauge \otimes \bU  + \frac{1}{2z } \upn \otimes \left ( \pounds_{z \gauge} \gamma
    + 2 \ellc \otimes_s dz \right ). \label{gaugenablao}
      \end{align}
\end{proposition}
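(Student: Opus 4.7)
The plan is to establish existence and uniqueness simultaneously by a Koszul-type argument adapted to the degenerate setting, and then to deduce the gauge transformation rule as a consequence of uniqueness. For the Koszul step, assume a torsion-free $\nablao$ satisfying \eqref{Cond1} and \eqref{Cond2} exists, and rewrite \eqref{Cond1} as
\begin{align*}
\gamma(\nablao_X Y, Z) + \gamma(Y, \nablao_X Z) = X(\gamma(Y,Z)) + \bU(X,Y)\ellc(Z) + \bU(X,Z)\ellc(Y).
\end{align*}
Cyclically permute $(X,Y,Z)$ and combine three copies with signs $(+,+,-)$; using the torsion-free identity $\nablao_X Y - \nablao_Y X = [X,Y]$ and the symmetry of $\bU$, four of the six resulting $\bU\ellc$-terms cancel in pairs, and one arrives at
\begin{align*}
2\gamma(\nablao_X Y, Z) &= X(\gamma(Y,Z)) + Y(\gamma(Z,X)) - Z(\gamma(X,Y)) \\
&\quad + \gamma([X,Y],Z) - \gamma([X,Z],Y) - \gamma([Y,Z],X) + 2\bU(X,Y)\ellc(Z).
\end{align*}
Since $\Rad_\gamma$ is spanned pointwise by $\upn$, this relation determines $\nablao_X Y$ modulo the direction of $\upn$. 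Symmetrising \eqref{Cond2} in its two arguments and applying the torsion-free identity once more yields
\begin{align*}
2\ellc(\nablao_X Y) = X(\ellc(Y)) + Y(\ellc(X)) + \ellc([X,Y]) + 2\elltwo\,\bU(X,Y),
\end{align*}
and together with $\ellc(\upn) = 1$ this pins down the $\upn$-component, establishing uniqueness.

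For existence, read the two displayed formulas as definitions. Using the identity $\delta^a_c = P^{ab}\gamma_{bc} + \upn^a \ellc_c$ from \eqref{prod4}, the resulting vector field takes the closed form $(\nablao_X Y)^a = P^{ab}\alpha_b + \beta\,\upn^a$, where $\alpha$ and $\beta$ denote one-half of the right-hand sides of the two preceding displays. It remains to verify (i) $\FF(\N)$-linearity in $X$ and (ii) the Leibniz rule in $Y$, both of which reduce to pointwise cancellations of the inhomogeneous derivative terms; (iii) torsion-freeness, obtained by antisymmetrising the two defining identities in $(X,Y)$ and noting that their right-hand sides collapse to $2\gamma([X,Y],\cdot)$ and $2\ellc([X,Y])$ respectively; and (iv) \eqref{Cond1} and \eqref{Cond2} themselves, which follow by adding the defining identities for $\gamma(\nablao_X Y, Z) + \gamma(Y, \nablao_X Z)$ and for $\ellc(\nablao_X Z) + \ellc(\nablao_Z X)$ and observing that nearly all terms cancel. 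These checks are routine but somewhat laborious.

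The gauge formula \eqref{gaugenablao} then follows from uniqueness. Set $T := \gauge \otimes \bU + \frac{1}{2z}\,\upn \otimes (\pounds_{z\gauge}\gamma + 2\ellc \otimes_s dz)$. Since $\bU$ and $\pounds_{z\gauge}\gamma + 2\ellc \otimes_s dz$ are both symmetric, $T$ is symmetric in its two covariant slots, so $\nablao + T$ is torsion-free. It suffices to check that $\nablao + T$ satisfies the analogues of \eqref{Cond1} and \eqref{Cond2} for the gauge-transformed data $\G_{(z,\gauge)}(\D)$, since the uniqueness already proved will then force $\nablao + T = \G_{(z,\gauge)}\nablao$. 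This final check uses \eqref{gaugemhd} and \eqref{Pprime} together with the auxiliary rule $\G_{(z,\gauge)}\bU = z^{-1}\bU$, which is immediate from $\gamma(\upn,\cdot) = 0$ and the Leibniz property of the Lie derivative. The bookkeeping in this last step is likely to be the main technical obstacle of the proof; however, its structure is entirely dictated by uniqueness, so the computation amounts to verifying explicit identities rather than discovering them.
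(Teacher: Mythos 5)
Your argument is correct, and it is worth noting that the paper itself does not prove this proposition but quotes it from \cite{Mars2020}, where the connection is built from explicit Christoffel-type coefficients; your Koszul-style derivation is the coordinate-free version of essentially the same construction, and your derivation of \eqref{gaugenablao} from uniqueness is a legitimate (and arguably cleaner) alternative to the direct computation of the transformed coefficients.

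One point deserves to be promoted from ``routine but laborious'' to the crux of the existence step. Writing $\alpha(X,Y)$ for one half of the right-hand side of your Koszul identity, your candidate $\nablao_X Y := P(\alpha(X,Y),\cdot) + \beta(X,Y)\,\upn$ satisfies, by \eqref{prod4},
\begin{align*}
\gamma(\nablao_X Y,\cdot) = \alpha(X,Y) - \alpha(X,Y)(\upn)\,\ellc, \qquad
\ellc(\nablao_X Y) = \beta(X,Y) - \elltwo\,\alpha(X,Y)(\upn),
\end{align*}
so your verification of \eqref{Cond1}--\eqref{Cond2} in step (iv) only closes if $\alpha(X,Y)(\upn)=0$. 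This is true, but it is not a formal cancellation: using $\gamma(\upn,\cdot)=0$ and $\ellc(\upn)=1$ one finds $2\alpha(X,Y)(\upn) = -(\pounds_{\upn}\gamma)(X,Y) + 2\bU(X,Y)$, which vanishes precisely because $\bU$ is \emph{defined} as $\tfrac12\pounds_{\upn}\gamma$. This identity is the only place where the specific form of $\bU$ enters the existence proof (with a different symmetric tensor in \eqref{Cond1} the two defining conditions would be incompatible), so it should be stated explicitly rather than absorbed into the bookkeeping. With that identity in hand, the remaining checks (tensoriality, Leibniz, torsion-freeness, the gauge computation using $\G_{(z,\gauge)}\bU = z^{-1}\bU$ and \eqref{idenLiegamma}) all go through exactly as you describe.
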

The connection $\nablao$ has good behaviour under the gauge group, in the sense that the transformation \eqref{gaugenablao} realizes the group structure.
\begin{lemma}
  Let $\{ \N,\gamma,\ellc,\ll\}$ be null metric hypersurface data. The connection
  $\nablao$ satisfies
  \begin{align*}
    \left ( \G_{(z_2,\gauge_2)} \circ \G_{(z_1,\gauge_1)} \right ) (\nablao)
    = \G_{(z_2,\gauge_2) \cdot (z_1,\gauge_1)} (\nablao)
  \end{align*}
  for any pair of group elements $(z_1,\gauge_1), (z_2, \gauge_2) \in \G$.
\end{lemma}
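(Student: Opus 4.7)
The plan is to exploit the uniqueness clause in Proposition \ref{ExistConn} rather than grinding through a direct verification of the transformation formula \eqref{gaugenablao}. The idea is that, for any null metric hypersurface data $\D=\metdata$, the connection $\nablao$ is completely determined by the two conditions \eqref{Cond1}-\eqref{Cond2}. Accordingly, for any gauge element $(z,\gauge)\in\G$, the transformation rule \eqref{gaugenablao} can be recast as the statement
\begin{align*}
\G_{(z,\gauge)}(\nablao_{\D})=\nablao_{\G_{(z,\gauge)}(\D)},
\end{align*}
i.e. the gauge-transformed connection is precisely the metric hypersurface connection associated to the gauge-transformed data. This reading is the one that turns the realization property into a near-tautology.

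The first step in the proof will be to make this reading explicit: I will recall from Proposition \ref{ExistConn} that the right-hand side of \eqref{gaugenablao} is, by construction, the unique torsion-free connection satisfying \eqref{Cond1}-\eqref{Cond2} for the gauge-transformed data $\G_{(z,\gauge)}(\D)$. The second step is to invoke the already established realization property of the gauge group on the data itself, namely $\G_{(z_1,\gauge_1)} \circ \G_{(z_2,\gauge_2)} = \G_{(z_1,\gauge_1)\cdot (z_2,\gauge_2)}$ when acting on $\{\gamma,\ellc,\elltwo\}$.

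Combining these two ingredients, applying $\G_{(z_1,\gauge_1)}$ followed by $\G_{(z_2,\gauge_2)}$ to $\nablao_{\D}$ produces the connection associated to $\G_{(z_2,\gauge_2)}\bigl(\G_{(z_1,\gauge_1)}(\D)\bigr)$, while applying the composed transformation $\G_{(z_2,\gauge_2) \cdot (z_1,\gauge_1)}$ produces the connection associated to $\G_{(z_2,\gauge_2) \cdot (z_1,\gauge_1)}(\D)$. Since the action on data realizes the group law, these two pieces of data coincide, and uniqueness of the associated connection yields the desired equality.

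The only subtle point, and the one I would be careful about, is the identification $\G_{(z,\gauge)}(\nablao_{\D})=\nablao_{\G_{(z,\gauge)}(\D)}$. Strictly speaking, \eqref{gaugenablao} defines $\G_{(z,\gauge)}(\nablao_{\D})$ by a concrete tensorial formula, and one must confirm that this concrete formula indeed produces the connection singled out by conditions \eqref{Cond1}-\eqref{Cond2} for the transformed data. This is precisely the content of the gauge behaviour clause of Proposition \ref{ExistConn}, so no new computation is required; if it were, the alternative strategy would be a brute-force substitution using the gauge transformations \eqref{gaugemhd}, \eqref{Pprime} of $\gamma$, $\ellc$, $\upn$ together with the Lie-derivative identity $\pounds_{z\gauge}\gamma = z\pounds_{\gauge}\gamma + 2\gamma(\gauge,\cdot)\otimes_s dz$, but this approach would be considerably more laborious and would not illuminate the structural reason the identity holds.
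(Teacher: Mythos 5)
Your proof is correct, but it takes a genuinely different route from the paper. The paper proves the lemma by brute force: it rewrites \eqref{gaugenablao} in the form \eqref{gaugenabla1}--\eqref{gaugenabla}, applies it a second time with all data quantities expressed in the once-transformed gauge, and then checks term by term (using \eqref{idenLiegamma} and the group law $(z_3,\gauge_3)=(z_1z_2,\gauge_1+z_1^{-1}\gauge_2)$) that the result coincides with a single application of the formula for the composed element. Your argument instead leans entirely on the structural fact that $\G_{(z,\gauge)}(\nablao_{\D})=\nablao_{\G_{(z,\gauge)}(\D)}$, i.e. that \eqref{gaugenablao} computes the unique connection singled out by \eqref{Cond1}--\eqref{Cond2} for the transformed data, so that the realization property on connections is inherited from the (already established) realization property on the data $\{\gamma,\ellc,\elltwo\}$ via the uniqueness clause of Proposition \ref{ExistConn}. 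This interpretation is legitimate — the paper itself invokes it inside its proof when it says that $\G_{(z_1,\gauge_1)}(\nablao)$ ``is simply $\nablao$ in the transformed gauge'' — so your argument is sound and considerably shorter. What the paper's computation buys in exchange is self-containedness and a consistency check: it verifies directly that the explicit tensorial formula \eqref{gaugenablao}, whose derivation is only cited from elsewhere, composes correctly, without having to rely on the semantic identification of its right-hand side with the connection of the transformed data. Your single point of vulnerability is exactly the one you flag: if a reader insists on treating \eqref{gaugenablao} as a bare formula rather than as the statement $\G_{(z,\gauge)}(\nablao_{\D})=\nablao_{\G_{(z,\gauge)}(\D)}$, then the direct computation cannot be avoided; since the paper's own usage sanctions the stronger reading, this is not a gap.
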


\begin{proof}
We shall need the following easy fact.  For any vector field $X$ and scalar function $f$ it holds
\begin{align}
\pounds_{f X} \gamma = f \pounds_{X} \gamma + 2 df \otimes_s \uwidehat{X}.
\label{idenLiegamma}
\end{align}
where for any vector $X \in \X(\N)$ we  define $\uwidehat{X} := \gamma(X,\cdot)$.
Given that $\upn$ lies in the radical of $\gamma$, the gauge behaviour of 
$\bU$ follows directly from \eqref{idenLiegamma}
  \begin{align*}
    \G_{(z,\gauge)} (\bU) = \frac{1}{2} \pounds_{z^{-1} \upn} \gamma
    = \frac{1}{2z} \pounds_{\upn} \gamma = z^{-1} \bU.
  \end{align*}
  Expanding the Lie derivative term in \eqref{gaugenablao} the gauge transformed connection can also be written as
\begin{align}
  \G_{(z,\gauge)}( \nablao) & = \nablao
     + \gauge \otimes \bU  + \frac{1}{2} \upn \otimes  \pounds_{\gauge} \gamma
  +  \upn \otimes \left ( \ellc + \uwidehat{\gauge} \right ) \otimes_s \frac{dz}{z}  \label{gaugenabla1}\\
                          & = \nablao + z \gauge \otimes \G_{(z,\gauge)} \bU
                            + \frac{1}{2} \upn \otimes \pounds_{\gauge} \gamma + \G_{(z,\gauge)}(\upn) \otimes \G_{(z,\gauge)}
                            (\ellc) \otimes_s \frac{dz}{z}
\label{gaugenabla}
\end{align}
Now we apply this expression to $\nablao \rightarrow
\G_{(z_1,\gauge_1)}(\nablao)$
and $(z,\gauge) \rightarrow (z_2, \gauge_2)$. Since the
connection $\G_{(z_1,\gauge_1)}(\nablao)$
is simply $\nablao$ in the transformed gauge, the expression above needs to be written with all metric hypersurface data quantities expressed in the new gauge. Thus,
\begin{align}
  \left (  \G_{(z_2,\gauge_2)} \circ \G_{(z_1,\gauge_1)} \right ) (\nablao)  = &  \,\, \G_{(z_1,\gauge_1)} (\nablao) + z_2 \gauge_2 \otimes 
\left (\G_{(z_2,\gauge_2)} \circ \G_{(z_1,\gauge_1)} \right ) (\bU)  
  \label{productG} \\
& + \frac{1}{2} \G_{(z_1,\gauge_1)} (\upn) \otimes
\pounds_{\gauge_2} \gamma 
+ \left ( \G_{(z_2,\gauge_2)} \circ \G_{(z_1,\gauge_1)}  \right  ) (\upn) 
\otimes \left ( \G_{(z_2,\gauge_2)} \circ \G_{(z_1,\gauge_1)} \right ) (\ellc) 
\otimes_s \frac{dz_2}{z_2}. \nonumber
\end{align}
We now write $(z_2, \gauge_2) \cdot (z_1, \gauge_1)$ as
$(z_3,\gauge_3)$ and apply \eqref{gaugenabla} with $(z,\gauge) \rightarrow
(z_1, \gauge_1)$ to replace the first term in the right-hand side of \eqref{productG}. This yields, after using that $z_3 = z_2 z_1$,
\begin{align*}
  \left (  \G_{(z_2,\gauge_2)} \circ \G_{(z_1,\gauge_1)} \right ) (\nablao)  = &  
\nablao + \left ( \gauge_1 + \frac{1}{z_1} \gauge_2 \right ) \otimes \bU 
+ \frac{1}{2} \upn \otimes \left ( \pounds_{\gauge_1} \gamma + 
\frac{1}{z_1} \pounds_{\gauge_2} \gamma 
+ 2 (\ellc + \uwidehat{\gauge}_1 ) \otimes_s \frac{dz_1}{z_1} 
\right ) \\
& + \G_{(z_3,\gauge_3)}   (\upn) 
\otimes \G_{(z_3,\gauge_3)}  (\ellc) 
\otimes_s \frac{dz_2}{z_2}.  
\end{align*}
We now apply \eqref{idenLiegamma} and get, recalling also that $\gauge_3 = 
\gauge_1 + z_1^{-1} \gauge_2$,
\begin{align*}
  \left (  \G_{(z_2,\gauge_2)} \circ \G_{(z_1,\gauge_1)} \right ) (\nablao)  = &  
\nablao + \gauge_3  \otimes \bU 
+ \frac{1}{2} \upn \otimes \left ( 
\pounds_{\gauge_3} \gamma + 2 \uwidehat{\gauge}_2 \otimes_s \frac{dz_1}{z^2_1} 
 + 2 (\ellc + \uwidehat{\gauge}_1 ) \otimes_s \frac{dz_1}{z_1} 
\right ) \\
& + \G_{(z_3,\gauge_3)}   (\upn) 
\otimes \G_{(z_3,\gauge_3)}  (\ellc) 
\otimes_s \frac{dz_2}{z_2}   \\
= & \nablao + \gauge_3  \otimes \bU 
 + \frac{1}{2} \upn \otimes \pounds_{\gauge_3} \gamma +
\upn \otimes \left ( \ellc + \uwidehat{\gauge}_1 + z_1^{-1} \uwidehat{\gauge}_2 \right )
\otimes_s \left ( \frac{dz_1}{z_1}  + \frac{dz_2}{z_2} \right ) \\
= & \G_{(z_3,\gauge_3)} ( \nablao )
\end{align*}
where in the last equality we used \eqref{gaugenabla1} with $(z,\gauge)
\rightarrow (z_3, \gauge_3) = (z_1 z_2, \gauge_1 + z_1^{-1} \gauge_2)$.
\end{proof}  

The connection $\nablao$ is defined by its action on $\gamma$ and
$\ellc$. Its action on  all other null metric hypersurface fields is then fully determined. The result is the following
\cite{Mars2020}[Lemma 4.8]
\begin{align}
\nablao_{a} \gamma_{bc} & = - \ell_b \U_{ac} - \ell_c \U_{ab}, \label{nablaogamma} \\
\nablao_a \ell_b & = \F_{ab} - \ll \U_{ab}, \label{nablaoll}\\
\nablao_a \vecn^b & =  \vecn^b \sone_a  + P^{bf} \U_{af},
\label{nablaon}  \\
\nablao_{a} P^{bc} & = - \left ( \vecn^b P^{cf} + \vecn^c P^{bf} \right ) \F_{af} - \vecn^b \vecn^c \nablao_a \ll. \label{nablaoP} 
\end{align}

So far we have considered null metric hypersurface data. For a  given ruled 
null manifold $(\N,\gamma)$ the correspondence in Proposition
\ref{correspondence} establishes that on $(\N,\gamma)$ we can define a collection of connections $\{\nablao\}$ related to each other by the gauge group $\G$ according to \eqref{gaugenablao}. The question is whether we can do a similar construction in the more general case of null manifold data.

Consider $(\N,\gamma)$ a null manifold. As already said, there exists a double covering $f: \dN \rightarrow \N$ such that $(\N,\gamma)$ induces by pull-back a ruled null manifold structure $(\dN,\dgamma)$. Hence, on $(\dN,\dgamma)$ we can construct an equivalence class of null metric hypersurface
$\dmetdata$ as well as the gauge group $\dG = \FF^{\star} (\dN) \times
\XX(\dN)$. Thus the double covering admits a collection of
connections
\begin{align*}
 \Big  \{ \dG_{(\dz,\dgauge)} ( \dnablao), \quad \dz \in \FF^{\star} (\dN),
  \dgauge \in \XX(\dN) \Big \}.
\end{align*}
The family  of null metric hypersurface data is far too large from the point of view of $\N$, in the sense that most of them will not descend to $\N$. Our aim now is to restrict this class in such a way that a suitable
geometry can be defined on $(\N,\gamma)$. Since $f$ is a local diffeomorphim,
at any point $\dq \in \dN$ we can define a map
$f_{\star} |_{\dq} : T^{\star}_{\dq} \dN \rightarrow T^{\star}_{f(\dq)} \N$ by
$f_{\star}|_{\dq} := f^{-1}{}^{\star} |_{f(\dq)}$
where $f^{-1} : f(\dU) \rightarrow \dU$ is the inverse of $f$
restricted to a suitable small neighbourhood $\dU$ of $\dq$.  For any two
points $\dq_1, \dq_2 \in \dN$ we write $\dq_1 \sim \dq_2$ iff $f(\dq_1)= f(\dq_2)$. We say
that null metric data $\dmetdata$ is {\bf compatible with the covering}
(or simply ``compatible'') iff the following two conditions hold:
\begin{itemize}
\item[(i)] For any pair of points $\dq_1 \sim \dq_2$  there exists $\epsilon \in \{ -1,1\}$ such that
  \begin{align}
    f_{\star}|_{\dq_2} (\dellc) = \epsilon f_{\star}|_{\dq_1} (\dellc),
    \label{defepsilon}
  \end{align}
\item[(ii)] There exists $\elltwo \in \FF(\N)$ such that $\delltwo := f^{\star} (\elltwo)$.
\end{itemize}
Note that for any function $\dhh \in \FF(\dN)$ the condition
$\dhh (\dq_1) = \dhh (\dq_2)$
for all pairs of point $\dq_1 \sim \dq_2$ is equivalent to the existence of a function $h \in \FF(\N)$ such that $\dhh = f^{\star} (h)$. Thus, 
condition (ii) is equivalent to $\delltwo(\dq_1) = \delltwo(\dq_2)$.
Note also that an immediate consequence of (i)
is that
\begin{align}
  f_{\star}|_{\dq_2} (\dupn) = \epsilon f_{\star} |_{\dq_1} (\dupn) \label{conddupn}
\end{align}
where we have called $\dupn$
the section in the radical normalized by $\dellc(\dupn)=1$ (cf. \eqref{prod2}).
The following proposition shows that this set of compatible data is not-empty, and identifies the gauge subgroup of $\dG$ which acts internally in this class.
The case when $(\N,\gamma)$ itself is a ruled manifold is trivial because then
$\dN$ is simply two disjoint copies of $\N$.  We therefore assume that $(\N,\gamma)$ is not a ruled manifold. In that case the manifold $\dN$ is connected, as we show next.
\begin{lemma}
  If the null manifold $(\N,\gamma)$ is not a ruled manifold, then its
  covering ruled manifold $(\dN,\dgamma)$ is connected.
\end{lemma}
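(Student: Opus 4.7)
The plan is to argue by contradiction. I would assume that $\dN$ is disconnected, and exploit the decomposition $\dN = A \sqcup B$ with $A, B$ non-empty clopen subsets in order to construct a global trivialisation of the radical line bundle $(E,\N,\pi)$. This will contradict the hypothesis that $(\N,\gamma)$ is not a ruled manifold, for which (as recalled in the text) triviality of the line bundle $E$ is equivalent to the existence of a nowhere-zero section.

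The first step is to pin down how $A$ and $B$ meet the fibres of the covering map $f : \dN \to \N$. Since $f$ is a local diffeomorphism and $\N$ is connected, the integer-valued function $p \mapsto |f^{-1}(p) \cap A|$ is locally constant on $\N$ (by the local-product structure of covering maps) and hence constant; the same applies to $B$. In the non-ruled case every fibre has exactly two points, distributed between $A$ and $B$, so the constant values satisfy $a + b = 2$ with $a, b \ge 1$ (because $A$ and $B$ are both non-empty). The only possibility is $a = b = 1$, which says that $f|_A : A \to \N$ is a bijective local diffeomorphism, hence a diffeomorphism.

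With this diffeomorphism in hand the conclusion is immediate. By construction $f^{\star}(E)$ is trivial over $\dN$, so in particular over $A$; transporting its nowhere-zero section through the diffeomorphism $f|_A$ produces a nowhere-zero global section of $E$ on $\N$. This trivialises $E$, contradicting the non-triviality of the radical line bundle that is equivalent to $(\N,\gamma)$ not being a ruled manifold.

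The only routine technical point I expect to have to check is the local-constancy of the fibre-count $|f^{-1}(p) \cap A|$, which follows from the local-product structure of covering maps; everything else reduces to naturality of pull-backs and the criterion for triviality of a line bundle recalled in the text. I do not foresee any genuine obstacle.
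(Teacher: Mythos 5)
Your proposal is correct and is essentially the paper's own argument: both identify a clopen piece of $\dN$ on which $f$ restricts to a bijective local diffeomorphism (using that the fibre count over the connected base is constant and equals $2$), and both then transport the trivialisation of the pulled-back radical bundle through that diffeomorphism to contradict the non-ruled hypothesis. No gaps.
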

\begin{proof}
  We use the standard fact the restriction of any covering to a connected component of the domain is still a covering. If $\dN$ had more than one connected component, then the restriction of $f$ to one such components, say $\dN_1$ would define a diffemorphism $f_1: \dN_1 \rightarrow \N$ (the restriction  of $f$ to {\em each} connected component of $\dN$ is a covering, so when  $\dN$ has more than one component the anti-image of $f_1$ must have precisely one element). Since $(\dN_1,\gamma|_{\dN_1})$ is a ruled null manifold, so it would be $(\N,\gamma)$, against hypothesis.
\end{proof}
\begin{proposition}
  \label{gaugecovering}
  Let $(\N,\gamma)$ be a null manifold which is not a ruled null manifold. Let  
$(\dN,\dgamma)$ be the ruled null manifold constructed from the double covering of $(\N,\gamma)$. Then the class of null metric hypersurface data compatible with the covering is not empty. Moreover
if ${\mathcal D}:= \dmetdata$ is compatible with the covering
then $\G_{(\dz, \dgauge)} (\D)$ is also compatible if and only if
  \begin{itemize}
  \item[(i)] There exists $z \in \FF^{\star} (\N)$ such that
    $\dz = f^{\star} (z)$.
  \item[(ii)] For any pair of points $\dq_1 \sim \dq_2$ it holds
    $\displaystyle{
    f_{\star} |_{\dq_2} (\dgauge) = \epsilon f_{\star} |_{\dq_1} (\dgauge)}$, where
    $\epsilon$ is the same as in \eqref{defepsilon}.
    \end{itemize}
    \end{proposition}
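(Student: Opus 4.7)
I would recast compatibility in terms of the non-trivial deck transformation $\tau:\dN\to\dN$ of the double cover $f$, which satisfies $\tau^2=\Id$, $f\circ\tau=f$, and acts freely on $\dN$ (connected by the preceding lemma). Since $f\circ\tau=f$, one has at every $\dq$ the identity $f_{\star}|_{\tau(\dq)}(\dhh|_{\tau(\dq)})=f_{\star}|_{\dq}((\tau^{\star}\dhh)|_{\dq})$ for covectors and an analogous one with $\tau_{\star}$ for vectors; combined with the bijectivity of $f_{\star}|_{\dq}$ the compatibility conditions become
\begin{align*}
\tau^{\star}\dellc=\epsilon\,\dellc,\qquad \tau^{\star}\delltwo=\delltwo,\qquad \tau_{\star}\dupn=\epsilon\,\dupn,\qquad \tau^{\star}\dgamma=\dgamma,
\end{align*}
where $\epsilon\in\{-1,+1\}$ is a constant (since $\tau^2=\Id$ forces $\epsilon^2=1$ and $\dN$ is connected) and the last identity is automatic because $\dgamma=f^{\star}\gamma$. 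The pivotal observation is that when $(\N,\gamma)$ is not a ruled manifold, $\epsilon$ is forced to equal $-1$: if $\epsilon=+1$, then $\dupn$ would be $\tau$-invariant and thus descend to a nowhere-zero section of $E$ on $\N$, trivializing $E$, contradicting the hypothesis.

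For the non-emptiness claim, I would mimic the construction of Proposition \ref{correspondence} in an equivariant way. Pick any Riemannian metric $g_0$ on $\N$ and any nowhere-zero section $\dupn$ of $f^{\star}E$; let $\dg_0:=f^{\star}g_0$, which is automatically $\tau$-invariant. Define $\dellc:=\dg_0(\dupn,\cdot)/\dg_0(\dupn,\dupn)$ and $\delltwo:=0$. Since $\dellc(\dupn)=1$, Lemma \ref{conditionsmetdata} ensures $\dmetdata$ is null metric hypersurface data. Because $E$ is non-trivial, the canonical $\mathbb{Z}_2$-action on the fibers of $f^{\star}E$ is multiplication by $-1$, hence $\tau_{\star}\dupn=-\dupn$. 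Combined with $\tau^{\star}\dg_0=\dg_0$, a one-line computation gives $\tau^{\star}\dellc=-\dellc$, so compatibility holds with $\epsilon=-1$.

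For the gauge characterization, I would apply $\tau^{\star}$ to the transformed covector using $\tau^{\star}(\dgamma(\dgauge,\cdot))=\dgamma(\tau_{\star}\dgauge,\cdot)$ (which follows from $\tau^{\star}\dgamma=\dgamma$), obtaining
\begin{align*}
\tau^{\star}\bigl(\G_{(\dz,\dgauge)}(\dellc)\bigr)=(\tau^{\star}\dz)\bigl(\epsilon\,\dellc+\dgamma(\tau_{\star}\dgauge,\cdot)\bigr).
\end{align*}
Compatibility of $\G_{(\dz,\dgauge)}(\D)$ demands this equal $\epsilon'\G_{(\dz,\dgauge)}(\dellc)$ for some constant $\epsilon'\in\{-1,+1\}$; by the same trivialization argument applied to the transformed data, $\epsilon'=-1=\epsilon$. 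Evaluating on $\dupn$ and using $\dellc(\dupn)=1$, $\dgamma(\cdot,\dupn)=0$ yields $(\tau^{\star}\dz)\epsilon=\epsilon'\dz$, hence $\tau^{\star}\dz=\dz$, which is condition (i). The residual piece of the equation forces $\tau_{\star}\dgauge-\epsilon\dgauge\in\Rad(\dgamma)$, i.e.\ $\tau_{\star}\dgauge=\epsilon\dgauge+\mu\dupn$ for some $\mu\in\FF(\dN)$. Plugging this into $\tau^{\star}\G(\delltwo)=\G(\delltwo)$ and using $(\tau^{\star}\dz)^2=\dz^2$, $\dellc(\dupn)=1$ and $\dgamma(\cdot,\dupn)=0$ collapses the equation to $2\dz^2\epsilon\mu=0$, so $\mu=0$ and condition (ii) holds. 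Sufficiency is then a direct substitution with the same identities.

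The main technical hurdle is precisely the observation that $\epsilon=-1$ is forced in the non-ruled case, for both the original and the transformed data. Without this, one would only conclude $\tau^{\star}\dz=\pm\dz$ and a spurious second branch would appear; with it, the raw equivariance equation $(\tau^{\star}\dz)\epsilon=\epsilon'\dz$ collapses to the clean descent statement of condition (i), and everything else reduces to systematic algebraic manipulation of the transformation rules \eqref{gaugemhd}.
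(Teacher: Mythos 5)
Your reformulation of compatibility through the non-trivial deck transformation of the double cover (your $\tau$, not to be confused with the energy-momentum map) is sound, and for the characterization of the admissible gauge subgroup it reproduces the paper's computation in cleaner packaging. Where you force $\epsilon=\epsilon'=-1$ directly from the non-triviality of $E$ (a $(+1)$-equivariant radical section would descend to $\N$ and trivialize $E$), the paper instead derives $\dz|_{\dq_1}=\epsilon_1\epsilon_2\,\dz|_{\dq_2}$ from $\dupn_1=\dz\,\dupn_2$ and uses that a nowhere-zero function on the connected $\dN$ has constant sign to conclude $\epsilon_1\epsilon_2=1$. Both arguments are valid; yours pins down the sign itself, which is marginally stronger. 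The remaining algebra --- condition (i) from evaluation on $\dupn$, the conclusion $\tau_{\star}\dgauge-\epsilon\dgauge\in\Rad_{\dgamma}$ from the covector equation, and $\mu=0$ from the $\delltwo$ equation --- matches the paper's steps essentially one for one, as does the sufficiency direction.

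There is, however, a genuine gap in your existence argument: the claim that $\tau_{\star}\dupn=-\dupn$ for \emph{any} nowhere-zero section $\dupn$ of $f^{\star}(E)$ is false. Since $\tau$ preserves the pulled-back radical distribution, an arbitrary section only satisfies $\tau_{\star}\dupn=\lambda\,\dupn$ for some $\lambda\in\FF^{\star}(\dN)$, and your $\dellc=\dg(\dupn,\cdot)/\dg(\dupn,\dupn)$ then obeys $\tau^{\star}\dellc=\lambda^{-1}\dellc$, which is not of the form $\epsilon\,\dellc$ with $\epsilon\in\{-1,1\}$ unless $\lambda=\pm1$. Concretely, if $\dupn_0$ is an anti-equivariant section and $h>0$ is a function with $h\circ\tau\neq h$, then $\dupn:=h\,\dupn_0$ gives $\lambda=-(h\circ\tau)/h$, and the resulting data is not compatible with the covering. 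The repair is exactly the normalization the paper performs: take $\dupn$ unit with respect to the $\tau$-invariant metric $\dg=f^{\star}(g)$ (equivalently, define $\dellc:=\dg(\dupn,\cdot)/\sqrt{\dg(\dupn,\dupn)}$, which is insensitive to positive rescalings of the section). Unitness then forces $\lambda=\pm1$ pointwise, connectedness makes it constant, and $\lambda=+1$ is excluded by the trivialization argument, so $\tau^{\star}\dellc=-\dellc$ as you need.
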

    \begin{proof}
    To show existence, choose a riemannian metric $g$ in $\N$ and define
    $\dg := f^{\star} (g)$. Select a nowhere zero section $\dX$ of the radical bundle $\widetilde{E} := \bigcup_{\dq \in \dN} \Rad_{\dgamma} |_{\dq}$ and define $\dupn
    = \frac{\dX}{\sqrt{\dg (\dX,\dX)}}$ as well as
    \begin{align*}
      \dellc := \dg ( \dupn, \cdot).
    \end{align*}
Choose also any smooth function
$\elltwo \in \FF(\N)$ and define $\delltwo := f^{\star} ( \elltwo)$.
It follows directly from Lemma \ref{conditionsmetdata}
that $\dmetdata$ defines null metric hypersurface data on $\dN$.  The construction is such that this data is compatible with the covering. Indeed, for any pair of related points $\dq_1 \sim \dq_2$ with image $q := f(\dq_i)$, the two vectors 
$f_{\star}|_{\dq_i} (\dupn)$ belong to the radical of $\gamma|_{q}$ and are unit with respect to $g$, so they are related by  a sign $\epsilon$. The claim follows
because $f_{\star} |_{\dq_i} ( \dellc) = g ( f_{\star}|_{\dq_i} (\dupn), \cdot)$.

To identify the subgroup of $\dG$ that maps compatible data into compatible data, we consider any two metric data
$\dD_1 = \dmetdataone,  \dD_2=\dmetdatatwo$
compatible  with the covering. We know by Corollary \ref{equivdata} that there exist $\dz \in \FF (\dN)$ and $\dgauge \in \XX(\dN)$ such that
$\G_{(\dz,\dgauge)} (\dD_1) = \dD_2$. Let $\dupn_1$ and $\dupn_2$ be the corresponding sections of the radical satisfying
$\dellc_1 (\dupn_1) = 1$ and
$\dellc_2 (\dupn_2) = 1$. They are related by $\dupn_1 = \dz \dupn_2$.
Select any two points $\dq_1 \sim \dq_2$ and let $\epsilon_1, \epsilon_2$ be the signs such that \eqref{conddupn} hold for $\dupn_1$ and $\dupn_2$ respectively. Then
\begin{align*}
  \dz |_{\dq_1} f_{\star} |_{\dq_1} (\dupn_2) =
  f_{\star}|_{\dq_1} (\dupn_1) = \epsilon_2 f_{\star} |_{\dq_2} (\dupn_1)
  =
  \epsilon_2 f_{\star} |_{\dq_2} (\dz \dupn_2)
  = \dz |_{\dq_2}   \epsilon_2 f_{\star} |_{\dq_2} (\dupn_2)
  = \dz |_{\dq_2} \epsilon_2 \epsilon_1
  f_{\star} |_{\dq_1} (\dupn_2)
\end{align*}
which implies $\dz|_{\dq_1} = \epsilon \dz |_{\dq_2}$ with $\epsilon :=
\epsilon_1 \epsilon_2 \in \{-1,+1\}$. However $\dz$ vanishes nowhere and $\dN$ is connected, so $\dz$ is either everywhere positive or everywhere negative. Thus $\epsilon =1$.
As we have already noted, $\dz|_{\dq_1} = \dz|_{\dq_2}$ for any pair $\dq_1 \sim \dq_2$ is equivalent
  to the existence of a function $z \in \FF^{\star} (\N)$ such that
  $\dz = f^{\star} ( z)$. This proves item (i) of the proposition.  Note that $\epsilon =1$ means also that $\epsilon_1
  = \epsilon_2$. We now use
  \begin{align}
    f_{\star}|_{\dq_2} (\dellc_i) = \epsilon_i f_{\star}|_{\dq_1} (\dellc_i),  \qquad
    i= 1,2. \label{ca}
      \end{align}
We define $q := f(\dq_1) = f(\dq_2)$.  Inserting the gauge transformation \eqref{gaugenablao},
  \begin{align*}
    f_{\star}|_{\dq_2} (\dellc_2) = \left \{ \begin{array}{l}
\displaystyle{     f_{\star} |_{\dq_2} \big ( \dz ( \dellc_1 +
      \dgamma(\dgauge, \cdot ))  \big ) =
       z |_{q} \big (
      f_{\star} |_{\dq_2} (\dellc_1)
        + \gamma( f_{\star}|_{\dq_2}  ( \dgauge) , \cdot ) \big )} \\ \\
\epsilon_2 f_{\star}|_{\dq_1} (\dellc_2)  =
    \epsilon_2 f_{\star}|_{\dq_1} \big ( \dz  ( \dellc_1 + \dgamma(\dgauge,\cdot) ) \big )  =
\epsilon_2    z |_{q} \big (
    f_{\star} |_{\dq_1} (\dellc_1) + \gamma( f_{\star}|_{\dq_1} (\dgauge), \cdot ) \big )
    \end{array} \right .
      \end{align*}
  Since $\epsilon_2 f_{\star} |_{\dq_2} (\dellc_1) = 
  f_{\star} |_{\dq_1} (\dellc_1)$ (here we use that $\epsilon_2 = \epsilon_1$), we conclude
  \begin{align*}
   \gamma(  f_{\star}|_{\dq_2} (\dgauge), \cdot ) 
  = \epsilon_1 \gamma( f_{\star}|_{\dq_1} (\dgauge), \cdot ) 
  \end{align*}
  which is equivalent to
  \begin{align}
   f_{\star}|_{\dq_2} (\dgauge) =
   \epsilon_1 f_{\star}|_{\dq_1} (\dgauge) + a X, \qquad a \in \mathbb{R}
   \label{condd}
  \end{align}
  where $0 \neq X \in \Rad_{\gamma} |_{q}$.
  Finally, we apply
  $\delltwo_i ( \dq_1) = \delltwo_i (\dq_2), i=1,2$
  \begin{align*}
    \delltwo_2 (\dq_2) = 
       \left \{ \begin{array}{l}
         z^2 |_{q} \big ( \delltwo_1 (\dq_2)
         + 2 \dellc_1  (\dgauge) |_{\dq_2}         + \dgamma  ( \dgauge,\dgauge) |_{\dq_2} 
 \big ) \\
 \\
          \delltwo_2 (\dq_1) =
         z^2 |_{q} \big ( \delltwo_1 (\dq_1)
 + 2 \dellc_1 (\dgauge) |_{\dq_1}        + \dgamma ( \dgauge,\dgauge) |_{\dq_1} 
 \big )
     \end{array} \right . 
      \end{align*}
  which means that
  \begin{align*}
    2 \dellc_1  (\dgauge) |_{\dq_2}         + \dgamma  ( \dgauge,\dgauge) |_{\dq_2}
=    2 \dellc_1  (\dgauge) |_{\dq_1}         + \dgamma  ( \dgauge,\dgauge) |_{\dq_1}. \end{align*}
  This equality can be equivalently written as
  \begin{align*}
2 f_{\star} |_{\dq_2} (\dellc_1) ( f_{\star} |_{\dq_2} (\dgauge))
+ \gamma ( f_{\star} |_{\dq_2} (\dgauge), f_{\star} |_{\dq_2} (\dgauge))  =
    2 f_{\star} |_{\dq_1} (\dellc_1) ( f_{\star} |_{\dq_1} (\dgauge))
    + \gamma ( f_{\star} |_{\dq_1} (\dgauge), f_{\star} |_{\dq_1} (\dgauge))
   \end{align*}
  which after inserting  \eqref{ca} with $i=1$ and  \eqref{condd} simplifies to
  \begin{align*}
    2 \epsilon_1 f_{\star} |_{\dq_1} (\dellc_1) ( a X) =0.
      \end{align*}
  The only solution to this equation is $a=0$, which concludes the proof.
  \end{proof}
  
  The main consequence of Proposition \ref{gaugecovering} is that a null manifold $(\N,\gamma)$ admits a class of well-defined torsion-free connections $\{\nablao\}$. The key property behind this fact is that
  the metric hypersurface connection remains invariant under
  the transformation defined by the gauge element $(z=-1,\gauge =0)$. 
  \begin{theorem}
    \label{ExistCon}
    Let $(\N,\gamma)$ be a null manifold. There exists a family
    of torsion-free connections
    $\mathcal Q  = \{ \nablao\}$ globally defined on $\N$.
        Moreover, on each non-empty open subset
    $U \subset \N$ where $(U,\gamma|_U)$ is a ruled null manifold, each element
    $\nablao  \subset \mathcal Q$ restricted to $U$ corresponds to the metric
    hypersurface  connection of a null metric hypersurface data
    of the form $\{U,\gamma|_U, \ellc, \elltwo\}$.
  \end{theorem}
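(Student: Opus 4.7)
The plan is to split on whether $(\N,\gamma)$ is itself a ruled null manifold. If it is, Proposition \ref{correspondence} provides an equivalence class of null metric hypersurface data on $\N$, and one simply defines $\mathcal{Q}$ to be the collection of metric hypersurface connections $\nablao$ of all representatives. These are torsion-free by Proposition \ref{ExistConn}, and the second claim is immediate, since any ruled open $U\subset\N$ inherits a representative of the data whose metric hypersurface connection is precisely $\nablao|_U$.

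Assume from now on that $(\N,\gamma)$ is not ruled. Pass to the canonical (connected) double cover $f:\dN\to\N$, which carries a ruled null manifold structure $(\dN,\dgamma)$. Using Proposition \ref{gaugecovering}, fix null metric hypersurface data $\dmetdata$ on $\dN$ compatible with the covering, and let $\dnablao$ be its metric hypersurface connection, torsion-free by Proposition \ref{ExistConn}. The core of the argument is to show that $\dnablao$ descends to a torsion-free connection $\nablao$ on $\N$. Since $f$ is a local diffeomorphism, it suffices to verify that the two local pushforwards of $\dnablao$ coming from the two preimages of each $q\in\N$ agree. Choose a connected, evenly covered neighborhood $V$ of $q$ with $f^{-1}(V)=\widetilde{V}_1\sqcup\widetilde{V}_2$. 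Compatibility of $\dmetdata$, together with smoothness of $\dellc$ and connectedness of $V$, forces the sign $\epsilon$ in \eqref{defepsilon} to be constant on $V$. Pushing forward the data via $f|_{\widetilde{V}_1}$ and $f|_{\widetilde{V}_2}$ then yields two null metric hypersurface data on $V$ sharing the same $\gamma|_V$ and $\elltwo$, with $\ellc$ fields differing by a global sign $\epsilon\in\{-1,+1\}$.

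The key point, and the main obstacle to resolve, is that these two sets of data are related by the gauge element $(z,\gauge)=(\epsilon,0)\in\G$ and that such an element leaves $\nablao$ invariant: specializing \eqref{gaugenablao} to $\gauge=0$ and $dz=0$ collapses the right hand side to $\nablao$, regardless of the sign $z=\pm 1$. Hence the two pushforwards of $\dnablao$ agree on $V$, and varying $q\in\N$ produces a globally defined torsion-free connection $\nablao$ on $\N$. The family $\mathcal{Q}$ is then built by letting $\dmetdata$ run through the compatibility-preserving subgroup of $\widetilde{\G}$ characterized in Proposition \ref{gaugecovering} and descending each corresponding $\dnablao$ by the same argument. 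Finally, for the second assertion, if $(U,\gamma|_U)$ is ruled, triviality of the radical bundle on $U$ implies that $f^{-1}(U)=\widetilde{U}_1\sqcup\widetilde{U}_2$ splits as a disjoint union of sheets each mapped diffeomorphically to $U$; pushing $\dmetdata|_{\widetilde{U}_1}$ forward via $f|_{\widetilde{U}_1}$ produces null metric hypersurface data of the form $\{U,\gamma|_U,\ellc,\elltwo\}$ whose metric hypersurface connection equals $\nablao|_U$ by the very construction of the descent.
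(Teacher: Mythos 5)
Your proposal is correct and follows essentially the same route as the paper: split on whether $(\N,\gamma)$ is ruled, pass to the double cover with covering-compatible data in the non-ruled case, and descend $\dnablao$ by observing that the two local pushforwards differ by the gauge element $(z=\epsilon,\gauge=0)$, which leaves the metric hypersurface connection invariant by \eqref{gaugenablao}. The treatment of the second claim via the splitting of $f^{-1}(U)$ into two sheets also matches the paper's argument.
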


  \begin{proof}
    If the null manifold $(\N,\gamma)$ is ruled, the class of connections
    $\mathcal Q$ is simply the class of metric hypersurface connections
    on $(\N,\gamma)$. Restricting a metric hypersurface data to a non-empty open
    subset $U$ still defines a null metric hypersurface data, and the restriction of the connection is obviously the $\nablao$ connection of the restricted null hypersurface data. So, we only need to worry about the case when $(\N,\gamma)$ is not a ruled null manifold. We know that the double cover $(\dN, \dgamma)$ is a ruled null manifold, so we can define the collection of
    null metric hypersurface connections $\{ \dnablao \}$. We fix a
    null metric hypersurface data $\dmetdata$ compatible with the covering
    and let $\dnablao$ be the corresponding metric hypersurface connection. We want to prove that
    $\dnablao$ descends to $\N$ and thus defines a connection which we call
    $\nablao$. To define $\nablao$ we select a sufficiently small neighbourhood $U$ of $q \subset \N$ so that
    $\pi^{-1} (U)$ has two connected components  $\dU_i$, $i=1,2$.  Let $\dq_i = \pi^{-1} (q) \cap \dU_i$, $i=1,2$ and $f_i  : \dU_i \rightarrow U$ be the restriction of $f$ to $\dU_i$. By construction $f_i$ are diffeomorphisms. We can use each $f_i$ to define a connection $\nablao_i$ on $U$ by transferring (with this diffeomorphism)
     $\dnablao$ in $\dU_i$ to $U$. Since $\dnablao$ is defined geometrically
    in terms of the metric hypersurface data, it follows that
    $\nablao_i$ is the null metric hypersurface connection of the 
    null metric hypersurface data
       \begin{align*}
      \{U, \gamma|_U, \ellc_i := f_i^{-1}{}^{\star} ( \dellc |_{\dU_i}),
        \elltwo_i :=\delltwo \circ f_i^{-1}\}.
    \end{align*}
        Each of them defines a null metric hypersurface
        data on the same  manifold $U$ and they share the tensor  $\gamma$, so
        by Corollary \ref{equivdata}
        they are related by a gauge transformation. By definition of null metric hypersurface data compatible with the covering, there exists $\epsilon \in \{-1,1\}$ such that (in principle the sign $\epsilon$ could depend on the point, but this cannot happen in the present context because $\ellc_1$ and $\ellc_2$ are smooth and nowhere zero covectors on $U$)
        \begin{align*}
      \ellc_1 = \epsilon \ellc_2, \qquad
      \elltwo_1 = \elltwo_2.
        \end{align*}
        The gauge element that transforms one to the other is $(z= \epsilon,
        \gauge =0)$. The transformation law for the metric hypersurface
        connection \eqref{gaugenablao} implies $\nablao_1 = \nablao_2=: \nablao$.
        So, $\dnablao$  defines the same connection $\nablao$ on $U$. It is clear that $\nablao$ defines a torsion-free connection on all of $\N$.

        The second claim in the theorem is immediate because if $U$ is such that
        $(U,\gamma)$ is a ruled manifold then
        $\pi^{-1} (U)$ has two connected components, and by construction
        $\nablao$ is the metric hypersurface data of the null metric hypersurface data $(U, \gamma|_U,  f_1^{-1}{}^{\star} ( \dellc |_{\dU_1}),
        \delltwo \circ f_1^{-1})$.
          \end{proof}

      \section{The energy-momentum map: algebraic properties}
\label{EMmap}
In this section we introduce a linear map on metric hypersurface data and study its properties. This map finds its motivation in the theory of matching of spacetimes across null boundaries. However, it has many more applications. Thus, for the moment we prefer to simply put forward its definition and explore the consequences. The connection with the matching problem  will be the subject of
Section \ref{shells}. The justification for the name of the map will also be described there.

So, we define a linear map $\tau$
called {\bf energy-momentum map}
that sends symmetric $(0,2)$-tensors to symmetric $(2,0)$-tensors on any null metric hypersurface data set.

\begin{definition}
        Let $\metdata$ be null metric hypersurface data. At any point $p \in \N$ we define the {\bf energy-momentum map} at $p$ as the map
\begin{align}
\tau|_p : \Sd(T_p \N)  & \longrightarrow  \Su(T_p \N)   \nonumber \\
 V & \mapsto  \tau|_p (V)^{ab} \defi 
\left ( \left ( n^a P^{bc} + n^b P^{ac} \right) n^d V_{cd} -
 P^{ab} n^c n^d  V_{cd} - n^a n^b  P^{cd} V_{cd} \right )|_p. \label{expressiontau}
\end{align}
\end{definition}
In  this section we shall study the algebraic properties of the energy-momentum map and in the following section its PDE properties. Throughout this section everything is evaluated at a single point $p$. For simplicity we drop any explicit  reference to $p$, e.g. we write $\tau$ instead of $\tau|_p$, except when this may cause confusion. We shall also write $\Sd$ instead of $\Sd(T_p \N)$ and
 $\Su$ instead of $\Su(T_p \N)$.

The map $\tau$ is linear and can therefore be written in terms of
a four-contravariant tensor $S^{abcd}$ as $\tau(V)^{ab} =  S^{abcd} V_{cd}$, where
\begin{align}
 S^{abcd} = & \frac{1}{2} \left ( n^a P^{bc} n^d 
+ n^b P^{ac} n^d + n^a P^{bd} n^c + n^b P^{ad} n^c 
- 2 n^a n^b P^{cd} - 2 n^c n^d P^{ab}  \right ). \label{defS}
\end{align}
This tensor has the following explicit algebraic properties
\begin{eqnarray*}
S^{abcd} = S^{bacd}, \quad \quad
S^{abcd} = S^{abdc}, \quad \quad
S^{abcd} = S^{cdab},
\end{eqnarray*}
so that $S$ defines a symmetric linear map
$S : \Sd \otimes \Su \longrightarrow \mathbb{R}$
(i.e. it is a symmetric two-covariant tensor
on $\Sd$) by $S(V_1,V_2) = S^{abcd} (V_1)_{ab} (V_2)_{cd}$.
Using the natural identification between $\Su$ and
the dual space $(\Sd)^{\star}$,  we can write the map $\tau$ as
\begin{eqnarray*}
\tau(V) = S (\cdot, V)
\end{eqnarray*}
and view $\tau$ as a map $\tau: \Sd \longrightarrow (\Sd)^{\star}$.
As a consequence of the symmetry of $S$, the dual map $\tau^{\star}$
(also called transpose) is the same as $\tau$.
Recall the standard relationship (see e.g. \cite{Warner})
between the kernel
of a linear map $f: {\mathcal F}_1 \longrightarrow {\mathcal F}_2$
and the range of its dual $f^{\star}: {\mathcal F}^{\star}_2 
\longrightarrow {\mathcal F}_1^{\star}$
\be
\mbox{Ker} (f) = \mbox{Ran} (f^{\star})^{\bot}
\en
where obviously the orthogonal is in the sense of dual spaces. 
If we denote by $\K_0 \subset \Sd$ the kernel of $\tau$, it
follows
\begin{align}
\mbox{Ran} (\tau)  = \mbox{Ran}(\tau^{\star} ) =
\mbox{Ker} (\tau)^{\bot} = \K_0^{\bot} = & \{ \Tau \in \Su 
\mbox{ such that } 
\Tau^{ab} V_{ab}=0 \label{RanKer} \\
& \mbox{ for all } V \in \Sd \mbox{ satisfying } 
\tau(V)=0\}. \nonumber
\end{align}
Our aim is to identify the kernel and the
range of $\tau$ as well as the pre-image of any element 
$\Tau \in \mbox{Ran} (\tau)$.

Using the fact that $\upn$ belongs to the radical of $\gamma$ together with
\eqref{prod4} it is immediate to show that
\begin{align}
  & \tau_{\gamma}(V)^{a}_{\,\,\,c} \defi
 \tau(V)^{ab} \gamma_{bc} =  
 \vecn^a \vecn^b  V_{cb} - \delta^a_{c} \vecn^b \vecn^d  V_{bd}.
 \label{T(V)}
\end{align}
Taking the trace of this expression yields
\begin{align}
& \tau_{\gamma}(V)^{a}_{\,\,\,a} = -(\n-1) \vecn^c \vecn^d V_{cd}. 
  \label{alg1}
\end{align}
We also compute $\tau(V)^{ab} \ell_b$ which, after using \eqref{prod3},  becomes
\begin{align}
  \tau(V)^{ab} \ell_a = P^{bc} V_{cd} \vecn^d - n^b P^{cd} V_{cd} \label{alg3}
\end{align}
and its contraction with $\gamma_{bc}$ gives, from \eqref{prod4},
\begin{align}
& \tau(V)^{ab} \gamma_{bf} \ell_{a} =  V_{fd} \vecn^d  - \ell_f \vecn^c \vecn^d V_{cd}. \label{alg2}
\end{align}
We can now characterize the range and the kernel of  $\tau$
and identify the pre-image of any element in the range.
\begin{proposition}
\label{Rangetau}
Let $\metdata$ be null metric hypersurface data
of dimension $\n \geq 2$. Let $p \in \N$
and $\tau$ the energy-momentum map at $p$. Then 
the kernel $\K_0$ of 
$\tau$ is  
\begin{eqnarray}
\K_0 = \{ V \in S^0_2(T_p \N) ; \quad  V_{ab} \vecn^b  =0, \,
V_{ab} P^{ab} = 0  \}. \label{kernel}
\end{eqnarray}
Moreover, a symmetric tensor $\Tau^{ab}$ at $p$ belongs to $\mbox{Ran} (\tau)$
if and only if there exists a vector $X \in T_p \N$ and a scalar
$\Q \in \mathbb{R}$ such that 
\begin{eqnarray}
  \Tau = 2 \upn \otimes_s  X - \Q P.
\label{T(X,Q)}
\end{eqnarray}
Given $\Tau \in \mbox{Ran}(\tau)$, the vector $X$ and scalar $\Q$ in this decomposition are uniquely defined, and 
the anti-image of
$\Tau$ is given by
\begin{align*}
  \tau^{-1} (\Tau) = \{
  2 \ellc \otimes_s  \uwidehat{X} + \Q \ellc \otimes \ellc
  - \frac{1}{\n-1} \left ( \Q \ll + 2 \ellc(X)  \right ) \gamma
+ \VH, \quad \VH \in \K_0 \}
\end{align*}
where $\uwidehat{X}\defi \gamma(X, \cdot)$.
\end{proposition}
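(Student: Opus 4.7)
The plan is to prove the three assertions in sequence, using the algebraic identities \eqref{T(V)}--\eqref{alg2} together with the contraction rules \eqref{prod1}--\eqref{prod4}, since the energy-momentum map is manifestly built out of these ingredients. No external geometric input is needed: everything is linear algebra at the point $p$.

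For the kernel, I would argue as follows. Assume $\tau(V)=0$. Then $\tau_{\gamma}(V)=0$, and taking its trace via \eqref{alg1} yields $n^c n^d V_{cd}=0$ (here is where the hypothesis $\n\geq 2$ enters). Reinserting this into \eqref{T(V)} gives $n^a n^b V_{cb}=0$, and contracting with $\ell_a$ (recall $\ell(\upn)=1$) produces $V_{cb}n^b=0$. Substituting both facts into \eqref{alg3} leaves $n^b P^{cd}V_{cd}=0$, and a further contraction with $\ell_b$ gives $P^{cd}V_{cd}=0$. Conversely, if $V_{ab}n^b=0$ and $P^{ab}V_{ab}=0$, every term in \eqref{expressiontau} vanishes on inspection, so \eqref{kernel} holds. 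To characterize the range, let $V\in\Sd$ be arbitrary, set $W_c:=V_{cd}n^d$, $Q:=W_cn^c$, $R:=P^{cd}V_{cd}$, and read off directly from \eqref{expressiontau} that $\tau(V)^{ab}= n^a(P^{bc}W_c)+n^b(P^{ac}W_c)-QP^{ab}-Rn^an^b$. Because $n^an^b=2\,n\otimes_s n$, this is of the form $2\upn\otimes_s X -QP$ with $X^b:=P^{bc}W_c-\tfrac{1}{2}Rn^b$, proving $\mathrm{Ran}(\tau)\subseteq\{2\upn\otimes_s X -QP\}$.

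For the opposite inclusion and the formula for the anti-image, I would verify by direct substitution that the ansatz $V_0:=2\ell\otimes_s\widehat{X}+Q\,\ell\otimes\ell-\tfrac{1}{\n-1}\bigl(Q\elltwo+2\ell(X)\bigr)\gamma$ satisfies $\tau(V_0)=2\upn\otimes_s X-QP$. The relevant intermediate identities are $n^d(V_0)_{cd}=\widehat{X}_c+Q\ell_c$ (using $\widehat{X}_c n^c=0$ and $\ell(\upn)=1$), $n^cn^d(V_0)_{cd}=Q$, $P^{cd}(V_0)_{cd}=-Q\elltwo+\lambda(\n-1)$ where $\lambda$ denotes the coefficient multiplying $\gamma$, together with the auxiliary identity $P^{bc}\widehat{X}_c=X^b-n^b\ell(X)$, which is an immediate consequence of \eqref{prod4}. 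Plugging these into \eqref{expressiontau}, the $\upn\otimes_s X$ and $QP$ pieces reproduce $\Tau$, while the residual $n^an^b$ coefficient equals $-2\ell(X)-Q\elltwo-\lambda(\n-1)$, which vanishes precisely for the chosen $\lambda=-(Q\elltwo+2\ell(X))/(\n-1)$. Thus $V_0$ is a particular antecedent, and by linearity $\tau^{-1}(\Tau)=V_0+\K_0$.

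Finally, for uniqueness of $(X,Q)$, suppose $2\upn\otimes_s X-QP=0$. Contracting with $\gamma_{ac}\gamma_{bd}$ kills the first summand by \eqref{prod1}, while $\gamma_{ac}\gamma_{bd}P^{ab}=\gamma_{cd}$ by \eqref{prod4}, so $Q\gamma=0$ and hence $Q=0$ (the radical of $\gamma$ is only one-dimensional). What remains is $\upn\otimes_s X=0$; contracting with $\ell_a\ell_b$ yields $\ell(X)=0$, and contracting with $\ell_a$ alone then gives $X^b=0$. The main obstacle I anticipate is purely bookkeeping: keeping track of the $n^an^b$ contributions from the three terms in \eqref{expressiontau} when verifying the particular solution, since several cancellations must align to fix the coefficient $\lambda$. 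All the rest is a disciplined application of the four defining contractions of $\metdata$.
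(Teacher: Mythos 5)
Your proof is correct, and it reaches the same conclusions by a partly different route. The kernel computation, the uniqueness of $(X,\Q)$, and the verification of the particular antecedent $V_0$ coincide essentially step by step with the paper's argument (the paper uses \eqref{alg2} where you contract \eqref{T(V)} with $\ell_a$, which is the same identity). Where you genuinely diverge is in the characterization of $\mbox{Ran}(\tau)$. The paper goes through duality: it identifies $\mbox{Ran}(\tau)=\K_0^{\bot}$ via \eqref{RanKer}, counts $\dim(\K_0)=\n(\n+1)/2-(\n+1)$, and then observes that the tensors $2\upn\otimes_s X-\Q P$ form an $(\n+1)$-dimensional subspace annihilating $\K_0$, hence equal to the range. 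You instead prove both inclusions directly: reading off from \eqref{expressiontau} that $\tau(V)=2\upn\otimes_s X-\Q P$ with $X^b=P^{bc}V_{cd}\vecn^d-\tfrac12(P^{cd}V_{cd})\vecn^b$ (this is precisely the content of the paper's later Corollary \ref{DecomImage}), and then exhibiting the explicit preimage $V_0$ for the converse. Your route is more constructive and sidesteps the need to check that the $\n+1$ conditions defining $\K_0$ are linearly independent (which the paper asserts, and which really rests on the same uniqueness computation it performs afterwards); the paper's route lets the self-duality $\tau=\tau^{\star}$ do the work and requires no explicit preimage to establish the range. One trivial slip in an aside: with the paper's convention $A\otimes_s B=\tfrac12(A\otimes B+B\otimes A)$ one has $\vecn^a\vecn^b=(\upn\otimes_s\upn)^{ab}$, not $2(\upn\otimes_s\upn)^{ab}$; your actual absorption of the $-R\vecn^a\vecn^b$ term through the $-\tfrac12 R\vecn^b$ piece of $X^b$ is nevertheless correct.
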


\begin{proof}
  Let $V \in \mbox{Ker}(\tau)$, namely $\tau(V)=0$. Then, using the condition $
  \n \geq 2$, (\ref{alg1}) implies 
$\vecn^c \vecn^d  V_{cd} =0$ which inserted in \eqref{alg2}  gives
$\vecn^b V_{cb}  = 0$ and then \eqref{alg3} implies $P^{cd} V_{cd}=0$
(recall that $\upn$ cannot vanish anywhere). Conversely,
any symmetric tensor $V_{ab}$ satisfying $V_{ab} \vecn^a =0$ and
$P^{ab} V_{ab}=0$ lies in the kernel of $\tau$, as $\tau(V)=0$ follows directly
from (\ref{expressiontau}). So the kernel 
is the vector subspace $\K_0$ defined in (\ref{kernel}). 
From \eqref{RanKer} we also get 
\begin{align*}
\mbox{Ran} (\tau ) = \{ \Tau \in S^2_0 \mbox{ such that } \Tau^{ab} V_{ab}=0
\mbox{ for all } V_{ab} \mbox{ satisfying } V_{ab} \vecn^b =0, 
P^{ab} V_{ab} =0 \}.
\end{align*}
The dimension of $S^0_2$ is obviously $\n(\n+1)/2$ and hence
$\mbox{dim} (\K_0) = \n (\n-1)/2 -1$ because the subspace $\K_0$ is defined by $\n+1$ linearly independent conditions. Thus, $\mbox{dim} ( \mbox{Ran}(\tau)) = \n +1$. Note first that, by the very definition of $\K_0$, given any
$X \in T_p \N$ and $\Q \in \mathbb{R}$ one has
\begin{eqnarray*}
\left ( \vecn^a X^b + \vecn^b X^a - \Q P^{ab} \right ) V_{ab}=0
\end{eqnarray*}
for any $V \in \K_0$.  Let us now check that a tensor of the type $\Tau^{ab} = \vecn^a X^b + X^b \vecn^a - \Q P^{ab}$ is identically zero  if and only if $X^a =0, \Q=0$. Sufficiency is obvious. To show necessity we contract $\Tau^{ab}$ with $\gamma_{ac} \gamma_{bd}$ to get $\Q \gamma_{cd} =0$, so $\Q=0$ because
$\gamma_{cd} \neq 0$ (here we use again that $\n \geq 2$), and now $X^a=0$ follows at once because  a tensor product of two vector
$X_1 \otimes_s X_2$ can vanish only if one of the vectors vanishes. Consequently,
the set of tensors of the form $\vecn^a X^b + \vecn^b X^a - \Q P^{ab}$
is a vector
subspace of dimension $\n+1$, so it must agree with $\mbox{Ran}(\tau)$. Note that the argument above also proves that a given $\Tau \in \mbox{Ran}(\tau)$ defines unique  $X^a$, $\Q$ according to the expression
$\Tau^{ab} = \vecn^a X^b + \vecn^b X^a - \Q P^{ab}$, so the uniqueness claim in the proposition holds. 

It only
remains to find the general solution of $\tau(V) = \Tau$ for
\begin{eqnarray*}
\Tau^{ab} = \vecn^a X^b + \vecn^b X^a - \Q P^{ab}.
\end{eqnarray*}
It suffices to find a particular solution. Recalling that
$\uwidehat{X} \defi \gamma(X, \cdot)$ we look for solution of the
form
\begin{eqnarray*}
V_{ab} = \ell_a \uwidehat{X}_b + \ell_b \uwidehat{X}_a + \aa \ell_a \ell_b + \bb \gamma_{ab}
\end{eqnarray*}
where $\aa$ and $\bb$ are quantities to be determined.
Since $\gamma_{ab} \vecn^b=0$, $\uwidehat{X}_a \vecn^a = 0$,
$\vecn^a \ell_a = 1$ and $P^{ab} \gamma_{ab} = 
\n-1$ we have $\Tau^{a}_{\,\,\,a} \defi \Tau^{ab} \gamma_{ab} = - (\n-1) \Q$.
Given that $V_{cd} \vecn^c \vecn^d = \aa$,  equation (\ref{alg1}) fixes $\aa = \Q$.
We impose now \eqref{alg3}. A direct calculation which uses
$P^{ab}  \uwidehat{X}_b = P^{ab} \gamma_{bc} X^c = X^a - \vecn^a (\ell_c X^c)$ yields
simply
\begin{align*}
\vecn^a \left ( 2 \ell_c X^c + \Q \ll + \bb (\n -1 ) \right )= 0
\quad \Longrightarrow \quad \bb = - \frac{1}{\n-1} \left ( \Q \elltwo + 2 X^c \ell_c
\right )
  \end{align*}
and $\bb$ has been determined. It is now a matter of substitution into
(\ref{expressiontau})
to check that
\begin{eqnarray}
V^{(p)}_{\Tau} \defi 2 \ellc \otimes_s \uwidehat{X} + \Q \ellc \otimes  \ellc
- \frac{1}{\n-1} \left ( \Q \ll + 2 \ellc(X) \right ) \gamma
\label{particular}
\end{eqnarray}
satisfies $\tau(V^{(p)}_{\Tau}) = 2 \upn \otimes_s X -  \Q P$,  as claimed. 
\end{proof}

Our next result is a simple consequence of Proposition \ref{Rangetau}
and will
be used repeatedly. Before stating the proposition it is convenient to introduce the following terminology:
\begin{terminology}
\label{TerVs}
  For any symmetric $(0,2)$-tensor $V$ we define the following scalars, covector and vector
  \begin{align}
    \Z := P^{ab} V_{ab}, \qquad \Vnn := V_{ab} \vecn^a \vecn^b, \qquad \Vnp_{b} := V_{ab} \vecn^a, \qquad 
    \X^a := P^{ab} \Vnp_b - \frac{1}{2} \Z \vecn^a. \label{defTer}
   \end{align}
  Moreover, for any vector $X$ we use an underhat to denote the covector
       $\uwidehat{X} := \gamma(X, \cdot)$ and let $\unl{X}$ be defined by the 
decomposition
  \begin{align*}
    X = \ellc(X) \upn + \unl{X}.
  \end{align*}
  By construction the vector $\unl{X}$ satisfies  $\ellc(\unl{X})=0$. 
   \end{terminology}

The following expression follows directly from the definitions
\begin{align}
  \uwidehat{\X}_a = \gamma_{ab} \X^b =
  \gamma_{ab} \big ( P^{bc} \Vnp_c - \frac{1}{2} \Z \vecn^b \big) =
  \Vnp_a - \Vnn \ell_a
  \label{unX}
\end{align}
where in the last equality we used \eqref{prod1} and \eqref{prod4}. Moreover,
  \begin{align}
\ellc(\X) = P^{ab} \ell_a \Vnp_b  - \frac{1}{2} \Z
 = - \ll \vecn^b \Vnp_b - \frac{1}{2} \Z =
- \Vnn \ll  - \frac{1}{2} \Z, \label{ellcX}
\end{align}
so the vector $\unl{\X}$ takes the explicit form
\begin{align}
  \label{Xl}
\unl{\X}^{\,a} = P^{ab} \Vnp_b + \Vnn \ll \vecn^a 
\end{align}
and we can also write
\begin{align}
  \X = - \big ( \Vnn \ll + \frac{1}{2} \Z \big ) \upn + \unl{\X}.
  \label{tangent-normal}
\end{align}
We note for later use that  \eqref{unX} and \eqref{Xl} imply at once
\begin{align}
  \unl{\X}{}^a = P^{ab} \uwidehat{\X}_b.
  \label{raise}
\end{align}
\begin{proposition}
\label{Decom}
Let $\metdata$ be null metric hypersurface data, $p \in \N$ and $V \in \Sd(T_p \N)$. Then $V$ can be decomposed uniquely as
\begin{align}
  V = 2 \ellc \otimes_s  \uwidehat{\X} + \Vnn \left ( \ellc \otimes \ellc
+ \frac{1}{\n-1} \ll \gamma \right ) 
  + \frac{\Z}{\n-1}  \gamma + \VH,
\label{Form3}
\end{align}
where $\VH$ is a tensor in the kernel of $\tau$, i.e. $\VH \in \K_0$.
\end{proposition}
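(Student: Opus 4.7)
The plan is to apply Proposition \ref{Rangetau} directly to $\tau(V)$ and then simplify the resulting expression using identity \eqref{ellcX}. There is no real obstacle; the computation is mostly a matter of correctly identifying the data $(X, \Q)$ associated to $\tau(V)$.

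First I would evaluate $\tau(V)$ explicitly. Substituting the definitions of $\Vnp$, $\Vnn$, $\Z$ from Terminology \ref{TerVs} into \eqref{expressiontau} yields
\begin{align*}
\tau(V)^{ab} = \vecn^a P^{bc}\Vnp_c + \vecn^b P^{ac}\Vnp_c - \Vnn P^{ab} - \Z \vecn^a \vecn^b.
\end{align*}
Writing $P^{bc}\Vnp_c = \X^b + \tfrac{1}{2}\Z \vecn^b$, which is nothing but the definition of $\X$ in Terminology \ref{TerVs}, the two $\Z\,\vecn^a \vecn^b$ contributions cancel and one obtains
\begin{align*}
\tau(V)^{ab} = 2 \vecn^{(a} \X^{b)} - \Vnn P^{ab}.
\end{align*}
This is precisely the canonical form \eqref{T(X,Q)} of a range element, with $X = \X$ and $\Q = \Vnn$.

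Next I would invoke Proposition \ref{Rangetau}: since $V \in \tau^{-1}(\tau(V))$, there exists $\VH \in \K_0$ such that
\begin{align*}
V = 2 \ellc \otimes_s \uwidehat{\X} + \Vnn\, \ellc \otimes \ellc - \frac{1}{\n-1}\bigl(\Vnn \ll + 2\ellc(\X)\bigr)\gamma + \VH.
\end{align*}
Now I would use \eqref{ellcX}, which states $\ellc(\X) = -\Vnn \ll - \tfrac{1}{2}\Z$, to rewrite the coefficient of $\gamma$ as
\begin{align*}
-\frac{1}{\n-1}\bigl(\Vnn \ll + 2\ellc(\X)\bigr) = \frac{1}{\n-1}\bigl(\Vnn \ll + \Z\bigr).
\end{align*}
Collecting the two $\Vnn$-terms gives exactly the decomposition \eqref{Form3}.

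For uniqueness, the key observation is that the scalars $\Vnn$, $\Z$ and the vector $\X$ are defined in Terminology \ref{TerVs} as explicit pointwise functions of $V$ alone, independently of any decomposition. Hence the first three summands on the right-hand side of \eqref{Form3} are uniquely determined by $V$, and $\VH$ is uniquely determined as the remainder. The nontrivial content of the proposition is that this remainder actually lies in $\K_0$, which has been established by the argument above through Proposition \ref{Rangetau}.
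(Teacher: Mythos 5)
Your proposal is correct and follows essentially the same route as the paper: identify $\tau(V)=2\upn\otimes_s\X-\Vnn P$, invoke Proposition \ref{Rangetau} with $X=\X$, $\Q=\Vnn$, and rewrite the $\gamma$-coefficient via \eqref{ellcX}. The only difference is cosmetic — you spell out the computation of $\tau(V)$ that the paper calls ``immediate'' and phrase uniqueness via the explicit formulas of Terminology \ref{TerVs} rather than citing the uniqueness clause of Proposition \ref{Rangetau}.
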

\begin{remark}
Using \eqref{unX} this decomposition can also be written as
\begin{equation}
V = 2 \ellc \otimes_s  \W - \Vnn \ellc \otimes  \ellc
+ \frac{1}{\n-1} \left ( \Vnn \ll + \Z \right ) \gamma + \VH.
\label{Form2}
\end{equation}
\end{remark}

\begin{remark}
\label{orthogonalcase}
If $V$ satisfies $\Vn =0$, then both $\uwidehat{\X}$ and $\Vnn$
vanish identically and the decomposition simplifies to
\be
V = \frac{1}{\n-1} (\Z) \gamma + \VH, \qquad \VH \in \K_0.
\en
\end{remark}
\begin{corollary}
  \label{decomprod}
  Let $\bm{\omega_i}$, $i=1,2$ be a pair of two-forms satisfying $\bm{\omega_i} (\upn) = 0$. Then $V := \bm{\omega_1} \otimes_s \bm{\omega_2}$ decomposes
  as
\be
V = \frac{P(\bm{\omega_1},\bm{\omega_2})}{\n-1}  \gamma +
\left ( \bm{\omega_1} \otimes_s \bm{\omega_2}\right )^H. 
\en
\end{corollary}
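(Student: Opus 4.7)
The plan is to apply Proposition \ref{Decom} directly to the tensor $V = \bm{\omega_1} \otimes_s \bm{\omega_2}$ and check that, under the hypothesis $\bm{\omega_i}(\upn)=0$, all terms of the canonical decomposition except the $\gamma$-term and the $\K_0$-term vanish, identifying the surviving coefficient as $P(\bm{\omega_1},\bm{\omega_2})/(\n-1)$.

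First I would compute the contractions in Terminology \ref{TerVs}. Writing $V_{ab}=\tfrac{1}{2}\bigl((\omega_1)_a(\omega_2)_b+(\omega_2)_a(\omega_1)_b\bigr)$ and using $(\omega_i)_a\vecn^a=0$, one immediately obtains $\Vnp_b=V_{ab}\vecn^a=0$ and consequently $\Vnn = V_{ab}\vecn^a\vecn^b=0$. This places us in the setting of Remark \ref{orthogonalcase}, since the condition $\Vn=0$ holds. In that remark the decomposition collapses to
\begin{equation*}
V=\frac{\Z}{\n-1}\gamma + \VH,\qquad \VH\in\K_0,
\end{equation*}
so the proof reduces to evaluating $\Z=P^{ab}V_{ab}$.

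Next, by the symmetry of $P^{ab}$, the symmetrization in $V$ is free and
\begin{equation*}
\Z=P^{ab}V_{ab}=P^{ab}(\omega_1)_a(\omega_2)_b=P(\bm{\omega_1},\bm{\omega_2}),
\end{equation*}
which yields the claimed formula after renaming the residual $\K_0$-term as $(\bm{\omega_1}\otimes_s \bm{\omega_2})^H$. Uniqueness of this representation is guaranteed by Proposition \ref{Decom}.

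There is no real obstacle here; the only point worth flagging is the matter of notation: the statement calls the $\bm{\omega_i}$ ``two-forms'' but the condition $\bm{\omega_i}(\upn)=0$ and the use of $\bm{\omega_1}\otimes_s\bm{\omega_2}$ as a symmetric $(0,2)$-tensor make sense only if they are one-forms (covectors), so I would interpret them as such. Apart from this, the corollary is a direct specialization of Proposition \ref{Decom} via Remark \ref{orthogonalcase}.
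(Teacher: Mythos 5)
Your proof is correct and follows exactly the route the paper intends: the corollary is stated as an immediate consequence of Remark \ref{orthogonalcase}, and your verification that $\Vn=0$ together with the computation $\Z=P(\bm{\omega_1},\bm{\omega_2})$ (using the symmetry of $P$) is precisely what is needed. Your observation that the $\bm{\omega_i}$ should be read as one-forms rather than two-forms is also a fair catch of loose wording in the statement.
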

\begin{proof}[Proof of Proposition \ref{Decom}]
  With the definitions of $\X$ and $\Vnn$ as given in Terminology \ref{TerVs}, it is immediate that  $\tau(V)$ (cf. \eqref{expressiontau}) takes the form
  \be
\tau(V) = 2 \upn \otimes_s \X  - \Vnn P.
\en
So,  directly
 from Proposition \ref{Rangetau} with $X^a = \X^a$ and $\Q =  \Vnn$
we can write
\begin{equation*}
V = 2 \ellc \otimes_s \uwidehat{\X} + \Vnn  \ellc \otimes \ellc 
- \frac{1}{\n-1} \big ( \Vnn \ll + 2 \ellc (\X)  \big ) \gamma
+ \VH,
\end{equation*}
and the decomposition is unique. By \eqref{ellcX} this decomposition  
takes the form \eqref{Form3}.
\end{proof}
By construction, the decomposition in Proposition \ref{Decom} is well adapted to the
energy-momentum map. For later use we quote how $\tau$ acts on a
tensor $V$ decomposed according to the lemma.
\begin{corollary}
  \label{DecomImage}
  Let $\metdata$ be null metric hypersurface data, $p \in \N$ and $V \in \Sd(T_p \N)$. Decompose $V$ according to Proposition \ref{Decom}. Then, its image under the energy-momentum map is
  \begin{align*}
    \tau(V)  = 2 \upn \otimes_s \X   - \Vnn P
  \end{align*}
  where $\Vnn$ and $\X$ are defined in Terminology \ref{TerVs}. In terms
  of the vector $\unl{\X}$ it reads
  \begin{align}
    \tau(V) = - (\Z)  \upn \otimes \upn + 2 \upn \otimes_s \unl{\X}
- \Vnn \big ( P + 2 \ll \upn \otimes \upn \big ). \label{decommu}
  \end{align}
  \end{corollary}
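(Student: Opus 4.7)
The first identity is already latent in the proof of Proposition \ref{Decom}; my plan is to verify it by direct computation from \eqref{expressiontau}, since this is cleaner than tracking it through that proof. Substituting the quantities of Terminology \ref{TerVs} into \eqref{expressiontau}, I would rewrite
\begin{align*}
\tau(V)^{ab} = 2 n^{(a} P^{b)c} \Vnp_c - \Vnn P^{ab} - \Z\, n^a n^b,
\end{align*}
and then use the defining identity $P^{bc}\Vnp_c = \X^b + \tfrac{1}{2}\Z\, n^b$, which is exactly \eqref{defTer} solved for $P^{bc}\Vnp_c$. The first term thereby becomes $2 n^{(a}\X^{b)} + \Z\, n^a n^b$, whose second piece cancels the $-\Z\, n^a n^b$, producing $\tau(V) = 2\upn\otimes_s \X - \Vnn P$ as claimed.

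For the second form, the plan is simply to insert the tangent-normal splitting \eqref{tangent-normal}, namely $\X = -\bigl(\Vnn \ll + \tfrac{1}{2}\Z\bigr)\upn + \unl{\X}$, into the formula just derived. Expanding $2\upn \otimes_s \X$ and collecting the $\upn\otimes\upn$ contributions yields
\begin{align*}
2\upn \otimes_s \X = -2\Vnn\ll\, \upn\otimes\upn - \Z\, \upn\otimes\upn + 2\upn\otimes_s \unl{\X},
\end{align*}
and subtracting $\Vnn P$ reorganizes the result into $-\Z\,\upn\otimes\upn + 2\upn\otimes_s \unl{\X} - \Vnn(P + 2\ll\,\upn\otimes\upn)$, which is the stated expression.

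There is no real obstacle here: the corollary is essentially a book-keeping consequence of the defining tensorial identities and the scalar/vector abbreviations fixed in Terminology \ref{TerVs}. One could equally well derive the first identity by applying $\tau$ term-by-term to the decomposition in Proposition \ref{Decom}, exploiting that $\tau(\VH)=0$ by the characterization of $\K_0$ in \eqref{kernel} and that $\tau$ annihilates $\gamma$ and $\ellc\otimes\ellc$ up to the $\X,\Vnn$ combinations already isolated; but this route effectively reruns the computation already present in the proof of Proposition \ref{Rangetau}, so the direct substitution approach above is more economical.
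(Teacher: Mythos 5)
Your proposal is correct and follows exactly the route the paper intends: the paper treats the first identity as "immediate" from \eqref{expressiontau} and Terminology \ref{TerVs} (it is asserted without detail inside the proof of Proposition \ref{Decom}), and the second form is just the substitution of \eqref{tangent-normal}. Your direct computation — rewriting $P^{bc}\Vnp_c = \X^b + \tfrac{1}{2}\Z\,n^b$ so the $\Z\,n^an^b$ terms cancel, then splitting $\X$ into its $\upn$-component and $\unl{\X}$ — is precisely that omitted bookkeeping, carried out correctly.
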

An interesting consequence of Proposition \ref{Decom} is that the
map that sends $V$ into $V-\VH$  is a projector. Before proving this fact we note that decomposition \eqref{Form3} has introduced naturally a symmetric tensor, namely
$\tgamma:= \ellc \otimes \ellc + \frac{\ll}{\n-1} \gamma$. Let us establish its basic properties
\begin{lemma}
  \label{tgamma}
  The symmetric $(0,2)$-tensor $\tgamma:= \ellc \otimes \ellc + \frac{1}{\n-1} \ll\gamma$ satisfies
  \begin{align*}
        \iota_n \tgamma = \ellc, \qquad
    \mbox{tr}_P \tgamma = 0.
  \end{align*}
    \end{lemma}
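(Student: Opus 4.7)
The plan is to verify both identities by direct computation from the four defining identities \eqref{prod1}--\eqref{prod4} of null metric hypersurface data, since the tensor $\tgamma$ is an explicit linear combination of $\ellc \otimes \ellc$ and $\gamma$ whose contractions with $\upn$ and $P$ are all controlled by those relations.

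For the first identity, I would contract $\tgamma$ with $\upn$ in its first slot. The piece $\ellc \otimes \ellc$ contributes $\ellc(\upn)\,\ellc$, which equals $\ellc$ by \eqref{prod2}. The piece $\frac{\ll}{\n-1}\gamma$ contributes $\frac{\ll}{\n-1}\gamma(\upn,\cdot)$, which vanishes by \eqref{prod1}. Adding the two gives $\iota_n \tgamma = \ellc$.

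For the second identity, I would compute $P^{ab}\tgamma_{ab}$ term by term. Contracting $P^{ab}\ell_a\ell_b$ and using \eqref{prod3} to rewrite $P^{ab}\ell_b = -\ll\, \upn^a$, together with \eqref{prod2}, yields $P^{ab}\ell_a\ell_b = -\ll$. For the second term we need $P^{ab}\gamma_{ab}$, which is obtained by tracing \eqref{prod4}: contracting the index $c$ with $a$ gives $P^{ab}\gamma_{ab} + \upn^a\ell_a = \n$, so \eqref{prod2} yields $P^{ab}\gamma_{ab} = \n-1$. Combining,
\begin{equation*}
\mbox{tr}_P \tgamma = -\ll + \frac{\ll}{\n-1}(\n-1) = 0.
\end{equation*}

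There is no real obstacle here: once the four product identities \eqref{prod1}--\eqref{prod4} are in hand, both equalities are one-line consequences. The only minor point to be careful about is the standard trace identity $P^{ab}\gamma_{ab} = \n-1$, which must be derived from \eqref{prod4} rather than taken for granted; but this is immediate from the trace as above.
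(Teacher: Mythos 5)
Your proof is correct and follows essentially the same route as the paper: the first identity from \eqref{prod1}--\eqref{prod2}, and the second from \eqref{prod3} together with $\mbox{tr}_P\gamma = \n-1$. The only difference is that you explicitly derive the trace identity from \eqref{prod4}, which the paper simply invokes.
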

\begin{proof}
  The first is immediate from $\gamma(n, \cdot)=0$ and $\ellc(n)=1$. The second holds because, from $\mbox{tr}_P \gamma = \n-1$,
  \begin{align*}
    P^{ab} \Big ( \ell_a \ell_b + \frac{1}{\n-1} \ll \gamma_{ab} \Big )
    \stackrel{\eqref{prod3}}{=} - \ll \vecn^b \ell_b +  \ll = 0.
\end{align*}
\end{proof}
We can now introduce the projector $\P$ that will play an important role in the rest of the paper.
\begin{proposition}
\label{defProjector}
The map $\P: \Sd(T_p \N) \longrightarrow \Sd(T_p \N)$ defined by
\begin{align}
  \P(V) & = 2 \ellc \otimes_s \uwidehat{\X}  +  \Vnn \tgamma 
+ \frac{1}{\n-1} \left (\Z \right ) \gamma \label{Proj} 
\end{align}
is a projector, i.e. a linear map satisfying $\P \circ  \P = \P$.
\end{proposition}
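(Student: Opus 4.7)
The plan is to exploit the fact that the projector $\P$ is manifestly a linear function of three quantities read off from its argument: the scalar $\Vnn$, the $P$-trace $\Z$, and the covector $\uwidehat{\X}$ associated to the vector $\X$ of Terminology \ref{TerVs}. In view of \eqref{Proj}, establishing $\P\circ\P = \P$ reduces to verifying that these three extractions, when performed on $W := \P(V)$, return the same triple as when performed on $V$ itself. Once this is checked, the identity $\P(W) = \P(V) = W$ is immediate from \eqref{Proj}.

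The three contractions of
\begin{equation*}
  W = 2\ellc \otimes_s \uwidehat{\X} + \Vnn\,\tgamma + \tfrac{1}{\n-1}\Z\, \gamma
\end{equation*}
are all short. For $W_{ab}n^a n^b$, both the $\ellc\otimes_s\uwidehat{\X}$ and $\gamma$ terms annihilate because $\gamma(n,\cdot)=0$ and $\uwidehat{\X}(n) = 0$, while the middle term contributes $\Vnn$ either directly from $\ellc(n)=1$ or via Lemma \ref{tgamma}. For $P^{ab}W_{ab}$, the first term vanishes because $P^{ab}\ell_a = -\ll n^b$ by \eqref{prod3} and $n^b\uwidehat{\X}_b = 0$; the $\tgamma$ term vanishes by the second identity of Lemma \ref{tgamma}; and the $\gamma$ term contributes $\Z$, since $\mbox{tr}_P\gamma = \n-1$. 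For $W_{ab}n^a$, a direct contraction yields $\uwidehat{\X}_b + \Vnn\,\ell_b$.

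Inserting the third result into the definition of the $\X$-vector associated to $W$, and using $\mbox{tr}_P W = \Z$ from the second step, one obtains
\begin{equation*}
  \X_W^a = P^{ab}\bigl(\uwidehat{\X}_b + \Vnn\,\ell_b\bigr) - \tfrac{1}{2}\Z\, n^a
  = \unl{\X}^{\,a} - \bigl(\Vnn\,\ll + \tfrac{1}{2}\Z\bigr)n^a
\end{equation*}
after applying \eqref{raise} and \eqref{prod3}. The right-hand side is precisely $\X^a$ by the tangent/normal decomposition \eqref{tangent-normal}, so in particular $\uwidehat{\X_W} = \uwidehat{\X}$. Feeding these three coincidences back into \eqref{Proj} applied to $W$ yields $\P(W) = W$, as required. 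The argument presents no real obstacle: it is careful bookkeeping of which terms survive each contraction, and the only faintly subtle point, the self-reproduction of $\X$ in step three, is a direct consequence of the tangent/normal decomposition established immediately before the statement.
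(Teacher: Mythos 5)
Your proof is correct and follows essentially the same route as the paper's: both reduce the claim to checking that the contractions $\iota_\upn$, $\mbox{tr}_P$ and the double contraction with $\upn$ are unchanged when passing from $V$ to $\P(V)$, using Lemma \ref{tgamma} and the identities \eqref{prod3}, \eqref{unX}, \eqref{raise} for the bookkeeping. The only cosmetic difference is that you recover $\X_{\P(V)}=\X$ from its definition via the tangent/normal decomposition \eqref{tangent-normal}, whereas the paper notes directly that $\iota_\upn\P(V)=\iota_\upn V$ and $\Q_{\P(V)}=\Vnn$ already force $\uwidehat{\X}$ to reproduce itself through \eqref{unX}.
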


\begin{proof}
  It suffices to prove that
  $\mbox{tr}_P (\P(V)) = \Z$
  and $ \iota_n \P(V) = \Vn$
  because then
  also $\Q_{\P(V)} = \Vnn$ and
  $\uwidehat{(\overline{P(V)})} = \uwidehat{\X}$, so inserting into 
  \eqref{Proj} with $V \longrightarrow \P(V)$ yields $\P (\P(V)) = \P(V)$ at once.

 To establish $\mbox{tr}_P (\P(V)) = \Z$, and given that
 $\P(V)$ is the sum of three terms,
 it suffices to check  that
   $\mbox{tr}_P \gamma = \n -1$, $\mbox{tr}_P \, \tgamma =0$ and
  $\mbox{tr}_P (\ellc \otimes_s \uwidehat{\X} ) =0$. The first one is clear, the second has been established in Lemma \ref{tgamma} and the third holds true because of
  \begin{align*}
    P^{ab} \ell_{(a} \uwidehat{\X}_{b)}
    =    P^{ab} \ell_a \uwidehat{\X}_b = - \ll \vecn^b \uwidehat{\X}_b =0.
  \end{align*}
  Now, the property $\iota_n \P(V) = \Vn $
  follows from expression \eqref{unX} if $\iota_n \gamma = 0$, $\iota_n \tgamma =\ellc$ and
  $\iota_n (2 \ellc \otimes_s \uwidehat{\X}) = \uwidehat{\X}$. The first one is obvious, the second one has been proved in Lemma \ref{tgamma} and the last one is
  \begin{align*}
    2 \vecn^a \ell_{(a} \uwidehat{\X}_{b)} = \vecn^a \ell_a \uwidehat{\X}_b = \uwidehat{\X}_b.
  \end{align*}
  \end{proof}
\begin{remark}
\label{propertiesProjector}
Inserting the explicit form of $\uwidehat{\X}$, $\Vnn$ and $\Z$ in $\P(V)$
it follows that $\P$ can be written in the following form
\begin{align}
\P(V)_{ab} = \P_{ab}^{\,\,\,\,\,\,cd} V_{cd}, \quad
\P_{ab}^{\,\,\,\,\,\,cd} := 
2 \ell_{(a} \delta^{(c}_{b)} \vecn^{d)} - \vecn^c \vecn^d \ell_a \ell_b
+ \frac{1}{\n-1} \left ( \ll \vecn^c \vecn^d + P^{cd} \right ) \gamma_{ab}.
\label{defTensorP}
\end{align}
The following properties of $\P$ are either immediate or easily checked
\begin{itemize}
\item[(i)] $\P(V)=0$ for  all $V \in \K_0$.
\item[(ii)] $\P(\gamma) = \gamma$.
\item[(iii)] $\P( \ellc \otimes_s \bm{\omega} ) = \ellc
  \otimes_s \bm{\omega}$, for any covector
  $\bm{\omega} \in T^{\star}_p \N$.
\item[(iv)] Let $\bm{\omega_1}, \bm{\omega_2} \in T^{\star}_p \N$ be orthogonal to $\upn$, i.e. $\bm{\omega_1}(\upn) =
\bm{\omega_2}(\upn) = 0$. Then
\be
\P(\bm{\omega_1} \otimes_s \bm{\omega_2} ) =   \frac{1}{\n-1} P(\bm{\omega_1},
\bm{\omega_2} ) \gamma.
\en
\end{itemize}

\end{remark}

The decomposition in Proposition \ref{Decom} 
can be written as $V = \P(V) + \VH$. 
The following corollary is an immediate consequence of the uniqueness of the
decomposition together with Proposition \ref{defProjector} (we use $\I$ for the identity map).
\begin{corollary}
\label{I-P}
The map $\I - \P : \Sd(T_p \N)
\longrightarrow \Sd(T_p \N)$ is a projector. Moreover $\mbox{Ran}(\I-\P) = \K_0$ and
the restriction of $\I-\P$ to $\K_0$ is the identity map.
\end{corollary}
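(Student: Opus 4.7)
The plan is to use the uniqueness of the decomposition in Proposition \ref{Decom} together with property (i) of Remark \ref{propertiesProjector} (namely $\P(V) = 0$ for $V \in \K_0$). The decomposition reads $V = \P(V) + \VH$ with $\VH \in \K_0$, so by definition $(\I - \P)(V) = \VH$. This immediately suggests that $\I - \P$ is precisely the projector onto $\K_0$ along $\mathrm{Ran}(\P)$.

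First I would verify that $\I - \P$ is a projector by computing $(\I - \P) \circ (\I - \P)(V) = (\I - \P)(\VH)$. Since $\VH \in \K_0$, property (i) of Remark \ref{propertiesProjector} gives $\P(\VH) = 0$, so $(\I - \P)(\VH) = \VH = (\I - \P)(V)$, establishing idempotency.

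Next, for the range, one inclusion $\mathrm{Ran}(\I - \P) \subseteq \K_0$ is immediate since $(\I - \P)(V) = \VH \in \K_0$ for all $V$. For the reverse inclusion, I take an arbitrary $W \in \K_0$ and apply Proposition \ref{Decom} to $W$ itself: $W = \P(W) + W^H$ with $W^H \in \K_0$. Using property (i) of Remark \ref{propertiesProjector} again, $\P(W) = 0$, and uniqueness of the decomposition forces $W^H = W$. Hence $(\I - \P)(W) = W$, which simultaneously shows $W \in \mathrm{Ran}(\I - \P)$ and that $\I - \P$ restricts to the identity on $\K_0$.

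There is no real obstacle here; the whole content is bookkeeping on the two facts already proved: that $\P$ is a projector (Proposition \ref{defProjector}) and that $\P$ annihilates $\K_0$ (Remark \ref{propertiesProjector}(i)). The only point one should be careful about is invoking uniqueness of the decomposition in Proposition \ref{Decom} to conclude that elements of $\K_0$ have a trivial $\P$-part; once that is noted, all three statements of the corollary fall out in two lines.
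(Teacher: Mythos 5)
Your proof is correct and follows essentially the same route the paper indicates: the paper states the corollary as an immediate consequence of writing $V=\P(V)+\VH$, the uniqueness of that decomposition, and Proposition \ref{defProjector}, which is exactly the bookkeeping you carry out. The only difference is cosmetic: you establish idempotency via $\P(\VH)=0$ rather than directly from $\P\circ\P=\P$, but both rest on the same facts already proved.
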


\subsection{Gauge properties of the energy-momentum map}

In order to describe the gauge behaviour of the map $\tau$ it is convenient
to introduce the notion of gauge weight.

\begin{definition}
  A tensor $T$ at  point $p$  is said to be of {\bf gauge weight} $q \in
  \mathbb{Z}$ if under a gauge transformation with parameters $(z,\gauge)$
  it transforms
  as
  \begin{align}
    \G_{(z,\gauge)} (T) = z^q T. \label{gaugeq}
  \end{align}
\end{definition}
This notion is well-defined because if \eqref{gaugeq} holds
then
\begin{align*}
  \G_{(z_1, \gauge_1)} \circ
  \G_{(z_2, \gauge_2)} (T) =
  \G_{(z_1, \gauge_1)} ( z_2^q T ) =
  z_1^q z_2^q T = (z_1 z_2)^q T = \G_{(z_1,\gauge_1) \cdot (z_2, \gauge_2)} (T).
\end{align*}
We will refer to the gauge weight simply as weight when there is no possible misunderstanding. Two examples of tensors with a well-defined gauge weight are the
tensor $\gamma$ (which has weight zero, i.e. it is gauge invariant) or the
vector $\upn$ which has gauge weight $q=-1$.

The next lemma shows that the map $\tau$ has a good behaviour in terms of weight (this result extends a discussion in  \cite{MarsGRG} in the context of shells, see also Sect. \ref{shells} below).
\begin{lemma}
  \label{weighttau}
  Let $V$ be a symmetric $(0,2)$-tensor of gauge weight $q$.
  Then
  $\Tau := \tau(V)$ has gauge weight $q-2$.
\end{lemma}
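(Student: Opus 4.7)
The plan is a direct verification starting from the definition \eqref{expressiontau}. Every term in $\tau(V)$ contains two factors of $\upn$ and one of $P$ contracted with $V$, so the pure $z$-scaling automatically produces $z^{-2} \cdot z^q = z^{q-2}$. The only real content of the lemma is showing that the $\gauge$-dependent part of the transformation $\G_{(z,\gauge)}(P) = P - 2 \upn \otimes_s \gauge$ (see \eqref{Pprime}) washes out when all four terms of \eqref{expressiontau} are added together. This cancellation is not immediately obvious because $P$ does not carry a clean gauge weight.

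First I would record what I need: $\G(\upn) = z^{-1}\upn$ and $\G(V) = z^q V$ (hypothesis), together with the additive transformation of $P$. For the bookkeeping it is convenient to use the shorthands $\Vnp_b = V_{ab}\vecn^a$ and $\Vnn = V_{ab}\vecn^a\vecn^b$ of Terminology \ref{TerVs}, because after substitution the inner contractions $\vecn^c \vecn^d V_{cd}$ and $\vecn^d V_{cd}$ appear repeatedly.

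Expanding $\G_{(z,\gauge)}(\tau(V))^{ab}$ termwise and factoring out $z^{q-2}$, the $\gauge$-dependent remainder consists of exactly four pieces, of two algebraic types: pieces proportional to $\vecn^a \vecn^b (\gauge^c \Vnp_c)$ and pieces proportional to $\Vnn\, \vecn^{(a}\gauge^{b)}$. The first type arises once (with a $-2$) from the $\gauge$-correction inside $(\G(\vecn)^a \G(P)^{bc} + \G(\vecn)^b \G(P)^{ac})\G(\vecn)^d \G(V)_{cd}$ and once (with a $+2$) from $-\G(\vecn)^a \G(\vecn)^b \G(P)^{cd}\G(V)_{cd}$; they cancel. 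The second type arises (with $-1$) from the same symmetric piece and (with $+1$) from $-\G(P)^{ab}\G(\vecn)^c\G(\vecn)^d\G(V)_{cd}$; they also cancel. What remains is exactly $z^{q-2}\tau(V)^{ab}$.

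The main obstacle is pure bookkeeping: keeping track of which of the four terms each $\gauge$-contribution comes from and matching it with its cancelling partner. An essentially equivalent, slightly slicker route is to use Corollary \ref{DecomImage}, $\tau(V) = 2\upn \otimes_s \X - \Vnn P$. One computes $\G(\Vnn) = z^{q-2}\Vnn$ and $\G(\X) = z^{q-1}(\X - \Vnn\gauge)$ (here $\X$ inherits a $\gauge$-correction precisely because $P$ does); the extra term $-2 z^{q-2}\Vnn\, \upn \otimes_s \gauge$ produced by $2\G(\upn)\otimes_s \G(\X)$ then cancels exactly against the $\gauge$-piece of $-\Vnn \G(P) = -\Vnn(P - 2\upn \otimes_s \gauge)$, leaving $z^{q-2}\tau(V)$ as required.
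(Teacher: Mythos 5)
Your proof is correct and follows essentially the same route as the paper: the paper writes $\tau(V)^{ab}=S^{abcd}V_{cd}$ and verifies $\G_{(z,\gauge)}(S)=z^{-2}S$ by substituting \eqref{Pprime} into \eqref{defS} and watching the $\gauge$-terms cancel upon symmetrization, which is exactly your termwise cancellation contracted with $V$. Your alternative route via Corollary \ref{DecomImage} is also sound (and matches the gauge laws \eqref{gaugecomps1} derived in the subsequent lemma), but it is not needed.
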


\begin{proof}
  Since $\Tau^{ab} = S^{abcd} V_{cd}$ the statement of the lemma is equivalent to showing that $\G_{(z,\gauge)} (S) = z^{-2} S$, i.e. that $S$ has gauge weight $-2$. From \eqref{defS} we may write
  \begin{align}
    S^{abcd}  = 2 \vecn^{(a} P^{b)(c} \vecn^{d)} - \vecn^a \vecn^b P^{cd} - \vecn^c \vecn^d P^{ab}. \label{ExpS}
  \end{align}
  From the transformation law of $P$ and $n$ given in \eqref{Pprime} it follows
  \begin{align*}
    \G_{(z,\gauge)} ( \vecn^a P^{bc} \vecn^d ) =
    \frac{1}{z^2} \left ( \vecn^a P^{bc} \vecn^d - \vecn^a \gauge^b \vecn^c \vecn^d - \vecn^a \gauge^c \vecn^b \vecn^d \right ),
  \end{align*}
  so
  \begin{align*}
        \G_{(z,\gauge)} ( \vecn^{(a} P^{b)(c} \vecn^{d)} ) =
      \frac{1}{z^2} \left ( \vecn^{(a} P^{b)(c} \vecn^{d)} - \vecn^{(a} \gauge^{b)} \vecn^c \vecn^d - \vecn^a \vecn^b \gauge^{(c}  \vecn^{d)} \right )
  \end{align*}
  and the property $\G_{(z,\gauge)} (S) = z^{-2} S$ follows from \eqref{ExpS} and
  \eqref{Pprime}.
\end{proof}
The property of $V$ having gauge weight $q$ can be stated equivalently in terms
of the decomposition in Proposition \ref{Decom}. The result is as follows
\begin{lemma}
  A tensor $V \in \Sd(T_p \N)$ has gauge weight $q$ if and only if the terms
  $\Vnn$, $\Z$, $\X$, $\VH$ in the decomposition \eqref{Form3} satisfy
  \begin{align}
    \G_{(z,\gauge)} (\Vnn) &= z^{q-2} \Vnn, \quad
      \G_{(z,\gauge)} (\Z) = z^q \left ( \Z - 2 V(\upn,\rig) \right ), \quad 
    \G_{(z,\gauge)} (\X) = z^{q-1} \left ( \X - \Vnn \rig \right ), \label{gaugecomps1} \\
    \G_{(z,\gauge)} (\VH) & = z^{q} \Big ( \VH +
                            \uwidehat{\rig} \otimes_s\left ( \Vnn \uwidehat{\rig}
                            + 2 \Vnn \ellc - 2 \iota_{\upn} V \right )
                            - \frac{1}{\n-1} \left (
                            \Vnn \left ( 2 \ellc(\rig) + \gamma(\rig,\rig)
                            \right ) - 2 V(\upn, \rig) \right ) \gamma \Big ).
\label{gaugecomps2}
                              \end{align}
\end{lemma}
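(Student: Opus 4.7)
The plan is to reduce both implications of the lemma to a single direct computation by exploiting the uniqueness of the decomposition in Proposition \ref{Decom}. Under any gauge transformation with parameters $(z,\rig) \in \G$, the tensor $\G_{(z,\rig)}(V)$ admits a unique decomposition \eqref{Form3} relative to the transformed data $\{\N,\gamma,\G_{(z,\rig)}(\ellc),\G_{(z,\rig)}(\elltwo)\}$. Hence the equality $\G_{(z,\rig)}(V) = z^q V$ holds if and only if the four decomposition components of $\G_{(z,\rig)}(V)$ computed in the transformed data coincide with those of $z^q V$ computed in the same data. So it suffices to compute the latter explicitly and match them with the right-hand sides of \eqref{gaugecomps1}--\eqref{gaugecomps2}.

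The first three identities in \eqref{gaugecomps1} follow directly from the rules $\G_{(z,\rig)}(\upn) = z^{-1}\upn$ and $\G_{(z,\rig)}(P) = P - 2\upn \otimes_s \rig$ of \eqref{Pprime}, applied to the defining expressions in Terminology \ref{TerVs}. Indeed, $\Vnn = V(\upn,\upn)$ picks up the factor $z^q z^{-2} = z^{q-2}$; $\Z = P^{ab} V_{ab}$ acquires the extra term $-2 V(\upn,\rig)$ coming from the correction in $\G_{(z,\rig)}(P)$; and for $\X = P(\iota_{\upn} V,\cdot) - \frac{1}{2}\Z\,\upn$ the additional pieces generated by transforming $P(\iota_{\upn} V,\cdot)$ combine with the $V(\upn,\rig)$ correction in the transformed $\Z$ so that the $\upn^a V(\upn,\rig)$ contributions cancel, leaving only the term $-\Vnn\,\rig$.

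For $\VH$ I use $\VH = V - \P(V)$ from Corollary \ref{I-P} together with the explicit form \eqref{Proj}. Writing out $\G_{(z,\rig)}(V) - \P(\G_{(z,\rig)}(V))$ in the transformed data, using $\G_{(z,\rig)}(\ellc) = z(\ellc + \uwidehat{\rig})$, $\G_{(z,\rig)}(\elltwo) = z^2(\elltwo + 2\ellc(\rig) + \gamma(\rig,\rig))$, and substituting the already-derived formulas for $\Vnn,\Z,\X$, the contributions proportional to $\ellc \otimes_s \uwidehat{\X}$ and $\ellc \otimes \ellc$ cancel exactly. What remains are three types of surviving pieces: a coefficient of $\gamma$ in which the term $-\Vnn(2\ellc(\rig) + \gamma(\rig,\rig))/(\n-1)$ arising from the mismatch in $\G_{(z,\rig)}(\tgamma)$ combines with $2 V(\upn,\rig)/(\n-1)$ arising from $\G_{(z,\rig)}(\Z)$; a quadratic piece $\Vnn\,\uwidehat{\rig}\otimes\uwidehat{\rig}$; and a cross-term $-2\,\uwidehat{\rig} \otimes_s \uwidehat{\X}$.

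The final algebraic step is to apply the identity $\uwidehat{\X} = \iota_{\upn} V - \Vnn \ellc$ from \eqref{unX}, which rewrites $-2\,\uwidehat{\rig} \otimes_s \uwidehat{\X}$ as $-2\,\uwidehat{\rig} \otimes_s \iota_{\upn} V + 2\Vnn\,\uwidehat{\rig} \otimes_s \ellc$; combined with the $\Vnn\,\uwidehat{\rig}\otimes\uwidehat{\rig}$ piece this assembles into the symmetric product $\uwidehat{\rig} \otimes_s (\Vnn\,\uwidehat{\rig} + 2\Vnn\,\ellc - 2\,\iota_{\upn} V)$ appearing in \eqref{gaugecomps2}. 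The main obstacle will be bookkeeping rather than any conceptual difficulty: the expansion of $\G_{(z,\rig)}(\ellc) \otimes_s \uwidehat{(\G_{(z,\rig)}\X)}$ and of $\G_{(z,\rig)}(\Vnn)\,\G_{(z,\rig)}(\tgamma)$ produces many cross-terms, which must be tracked carefully by grouping contributions according to their dependence on $\ellc$, $\uwidehat{\rig}$, and $\gamma$, postponing the use of the identity $\uwidehat{\X} = \iota_{\upn} V - \Vnn \ellc$ until the very end so that the cancellations become transparent.
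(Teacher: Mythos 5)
Your plan is correct and takes essentially the same route as the paper: both arguments rest on the uniqueness of the decomposition in Proposition \ref{Decom} together with the transformation laws \eqref{gaugemhd} and \eqref{Pprime}, and your computation of the components of $z^q V$ relative to the transformed data (the cancellation of the $\upn\,V(\upn,\rig)$ contributions in $\X$, and the assembly of the $\uwidehat{\rig}\otimes_s(\Vnn\uwidehat{\rig}+2\Vnn\ellc-2\iota_\upn V)$ block via $\uwidehat{\X}=\iota_\upn V-\Vnn\ellc$) reproduces exactly the content of the paper's intermediate identity \eqref{inter}. The only difference is organizational: you invoke uniqueness once to obtain both implications simultaneously, whereas the paper derives \eqref{gaugecomps1} first under the weight-$q$ assumption and then handles \eqref{gaugecomps2} and the converse through \eqref{inter}; the substance is identical.
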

\begin{proof}
Assume first that $V$ has gauge weight $q$. Then
$\G_{(z,\gauge)} (\iota_\upn V) = z^{q-1} \iota_{\upn} V$ 
as a consequence 
of the definition of $\iota_\upn V$ and \eqref{Pprime}. The transformation 
$\G_{(z,\gauge)} (\Vnn) = z^{q-2} \Vnn$ follows at once.  
The transformation of $\Z$ is also immediate from its definition and
\eqref{Pprime}. Now, using a prime to denote gauge transformed objects,
\begin{align*}
\X{}^{\prime}{}^a = P'{}^{ab}  (\iota_n V)'_b- \frac{1}{2} (\Z)' \vecn^{\prime}{}^a 
= (P^{ab} - \vecn^a \rig^b - \vecn^b \rig^a ) z^{q-1} (\iota_{\upn} V)_b
- \frac{1}{2} z^{q-1} \left ( \Z - 2 V(\upn,\rig) \right ) \vecn^a,<
\end{align*}
which is the third expression in \eqref{gaugecomps1} after cancelling terms. To prove 
\eqref{gaugecomps2}, as well as the converse in the lemma,
we insert the transformations \eqref{gaugecomps1} in the decomposition \eqref{Form2}
of $V'$. This yields the identity
\begin{align}
V' =  z^q \Big ( &2 ( \ellc + \uwidehat{\rig} ) \otimes_s \iota_{\upn} V
- \Vnn (\ellc + \uwidehat{\rig} )\otimes_s (\ellc + \uwidehat{\rig} )
\nonumber  \\ & + \frac{1}{\n-1} \left ( \Vnn ( \ll + 2 \ellc(\rig) + \gamma(\rig,\rig) )
+ \Z - 2 V(\upn,\rig) \right ) \gamma \Big ) + (\VH)^{\prime} \label{inter}
\end{align}
Now, if we assume that $V$ has gauge weight $q$ then we can replace $V' $ by $z^q V$ and insert the decomposition \eqref{Form2} in the left-hand side. This yields \eqref{gaugecomps2} after cancelling terms. Finally, if we assume
that \eqref{gaugecomps2} holds and replace this expression in the right-hand side of \eqref{inter} we conclude by a direct computation that $V' = z^q V$, i.e.
that $V$ has gauge weight $q$. 
\end{proof}

\begin{remark}
  It is easy to check that  $\G_{(z,\rig)} \VH$ as given in \eqref{gaugecomps2}
  also belongs to the kernel $\K_0$.
\end{remark}

It is natural to ask about the converse to Lemma \ref{weighttau}. Specifically, assume that the tensor
$\Tau = \tau(V)$ has gauge weight $r$, what is  then the possible gauge behaviour of $V$? And, in particular, is there any $V_{\Tau}$ with gauge weight $r+2$
satisfying $\tau(V_{\Tau}) = \Tau$?

We already know that any tensor $\Tau$ in the range of $\tau$ can be
parametrized uniquely by a vector $X$ and a scalar $\Q$ by
\begin{align}
  \Tau = 2 \upn \otimes_s X - \Q P. \label{decomTau}
\end{align}
We compute
\begin{align*}
  \G_{(z,\gauge)} (\Tau) & = 2 \G_{z,\gauge} (n) \otimes_s \G_{(z,\gauge)} (X) -
  \G_{(z,\gauge)} (\Q) \G_{(z,\gauge)} (P) \\
  & = 2 z^{-1} \upn \otimes_s \G_{(z,\gauge)} (X)
  - \G_{(z,\gauge)} (\Q) \left ( P  - 2 \upn \otimes_s \gauge \right ) \\
  & = z^r \left ( 2 \upn \otimes_s X - \Q P \right )
\end{align*}
the last equality being true if and only if $\Tau$ has  gauge weight $r$. We have already proved that $2 \upn \otimes X_1 - \Q_1 P$ vanishes if and only if $X_1 = 0$ and $\Q_1=0$. Thus, the equality above dictates the gauge behaviour of $X$ and $\Q$. Specifically we have
\begin{lemma}
  \label{GaugeXQ}
  A tensor $\Tau \in \mbox{Ran} (\tau)$ has weight $r$ if and only if the components $X$, $\Q$ in the decomposition  \eqref{decomTau} have gauge behaviour
  \begin{align*}
    \G_{(z,\gauge)} (X) = z^{r+1} \left ( X - \Q \gauge \right ), \qquad \quad
    \G_{(z,\gauge)} (\Q) = z^{r} \Q.
  \end{align*}
  \end{lemma}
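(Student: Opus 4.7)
The plan is to apply the uniqueness statement in Proposition \ref{Rangetau} directly to the identity derived in the paragraph immediately preceding the lemma. That computation shows
\begin{equation*}
\G_{(z,\gauge)}(\Tau) = 2z^{-1}\upn \otimes_s \G_{(z,\gauge)}(X) - \G_{(z,\gauge)}(\Q)\bigl(P - 2\upn \otimes_s \gauge\bigr),
\end{equation*}
so the requirement that $\Tau$ has gauge weight $r$, namely $\G_{(z,\gauge)}(\Tau) = z^r(2\upn \otimes_s X - \Q P)$, can be rearranged into the single equation
\begin{equation*}
2\upn \otimes_s \bigl(z^{-1}\G_{(z,\gauge)}(X) + \G_{(z,\gauge)}(\Q)\,\gauge - z^r X\bigr) - \bigl(\G_{(z,\gauge)}(\Q) - z^r \Q\bigr) P = 0.
\end{equation*}

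My first step would be to invoke the uniqueness part of Proposition \ref{Rangetau}: an element of $\mbox{Ran}(\tau)$ written as $2\upn \otimes_s Y - q P$ vanishes only when both $Y = 0$ and $q = 0$. Reading off the two parenthesized quantities above, this immediately yields $\G_{(z,\gauge)}(\Q) = z^r \Q$, and after solving the remaining relation for $\G_{(z,\gauge)}(X)$, the transformation $\G_{(z,\gauge)}(X) = z^{r+1}(X - \Q\,\gauge)$. This settles the ``only if'' direction.

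For the converse, I would substitute the assumed transformation laws of $X$ and $\Q$ back into the displayed formula for $\G_{(z,\gauge)}(\Tau)$ and verify by a short expansion that the terms proportional to $\upn \otimes_s \gauge$ cancel, leaving $\G_{(z,\gauge)}(\Tau) = z^r \Tau$. This is routine once the uniqueness is in hand.

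I do not expect any real obstacle: the proof is purely algebraic, with all the heavy lifting absorbed into Proposition \ref{Rangetau}. The only subtle point worth flagging is that the argument is pointwise in $p \in \N$ and does not require differentiability of $z$ or $\gauge$, which is consistent with the algebraic definition of gauge weight. One might also verify, as a sanity check, that the prescribed transformation of $X$ realizes the group law of $\G$ --- i.e. that composing the rule with parameters $(z_1,\gauge_1)$ and $(z_2,\gauge_2)$ agrees with the rule for $(z_1,\gauge_1)\cdot(z_2,\gauge_2) = (z_1 z_2,\gauge_2 + z_2^{-1}\gauge_1)$ --- but this is automatic from the way the notion of gauge weight was defined.
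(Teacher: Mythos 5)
Your proof is correct and follows essentially the same route as the paper: the paper likewise expands $\G_{(z,\gauge)}(\Tau)$ using the transformation laws \eqref{Pprime}, imposes $\G_{(z,\gauge)}(\Tau)=z^r\Tau$, and invokes the uniqueness of the decomposition $2\upn\otimes_s X - \Q P$ (established in Proposition \ref{Rangetau}) to read off the transformations of $X$ and $\Q$. Nothing is missing.
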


  In particular, we see that when all elements in $\mbox{Ran}(\tau)$ have well-defined weight $r$ the vector subspace in the range of $\tau$
defined by the 
condition $Q=0$ is gauge invariant and hence well-defined. To address the question of whether we can select an element $V_{\Tau} \in \tau^{-1} (\Tau)$ with gauge weight $r+2$ we recall that we have already found an explicit particular solution
of $\tau(V) = \Tau$, namely the tensor $V^{(p)}_{\Tau}$ defined in \eqref{particular}. Let us therefore ask first what is the gauge behaviour of $V^{(p)}_{\Tau}$.
This tensor is explicitly defined in terms of $X$ and $\Q$, so finding its gauge transformation is simply a matter of direct computation. Using as before a prime to denote any gauge transformed object, we get
\begin{align}
  \G_{(z,\gauge)} (V^{(p)}_{\Tau} )
  = &  2 \ellc^{\prime} \otimes_s {\uwidehat{X}}^{\prime} + \Q^{\prime} \ellc^{\prime} \otimes \ellc^{\prime}
  - \frac{1}{\n-1} \left ( \Q^{\prime} \ll{}^{\prime} + 2 \ellc^{\prime} (X') \right )
  \gamma \nonumber \\
   = &  2 z^{2+r} \big ( \ellc + \uwidehat{\gauge} \big ) \otimes_s
  \big (  \uwidehat{X} - \Q \uwidehat{\gauge} \big )
  + z^{2+r} \Q \big ( \ellc + \uwidehat{\gauge} \big ) \otimes_s
  \big ( \ellc + \uwidehat{\gauge} \big ) \nonumber \\
 & - \frac{z^{2+r}}{\n-1} \Big ( \Q \big ( \ll + 2 \ellc(\gauge)
  + \uwidehat{\gauge} (\gauge) \big ) + 2 (\ellc +  \uwidehat{\gauge} ) ( X - \Q \gauge )
  \Big ) \gamma \nonumber \\
  = &
  z^{2+r} V^{(p)}_{\Tau}
  + z^{2+r} \left ( -\Q \uwidehat{\gauge} \otimes \uwidehat{\gauge}
  + 2 \uwidehat{X} \otimes_s \uwidehat{\gauge} 
+ \frac{1}{\n-1} \left (
\Q \uwidehat{\gauge} (\gauge)  - 2 \uwidehat{\gauge} (X) 
\right ) \gamma \right ).  \label{transVp}
\end{align}
It is immediate to check the second term in the last expression belongs to the kernel $\K_0$, so the tensor $V^{(p)}_{\Tau}$ does not have well-defined gauge weight, but the failure comes from the addition of a term in the kernel. This suggests that perhaps some other choice of $V_{\Tau} \in \tau^{-1} (\Tau)$ may have a well-defined weight. This turns out to be the case, but the analysis is different for $\Q=0$ and $\Q\neq 0$ (as noted above, being in one case or the other
is a gauge invariant statement).

\begin{lemma}
Let $\{\N,\gamma,\ellc,\ll \}$ be null metric hypersurface data
of dimension $\n \geq 2$. Assume that 
$\Tau \in \mbox{Ran} (\tau)$ has gauge weight $r$. Decompose $\Tau$ uniquely
according to (\ref{T(X,Q)}). Then,
\begin{itemize}
\item[(i)] If $\Q \neq 0$ then the tensor
\begin{align*}
V_{\Tau} \defi & 
2 \ellc \otimes_s \uwidehat{X} +  \Q \ellc \otimes  \ellc
+ \frac{1}{\Q} \uwidehat{X} \otimes \uwidehat{X} \\
& - \frac{1}{\n-1} \left ( \Q \ll + 2 \ellc (X) + \frac{1}{\Q} \uwidehat{X} (X) \right ) \gamma
\end{align*}
satisfies 
$\tau(V_\Tau) = \Tau$ and has gauge weight $r+2$.
\item[(ii)]  If $\Q=0$ assume that there exists a covector $\bmomega \in T^{\star}_p \N$ with
gauge behaviour $\G_{(z,\gauge)} (\bmomega) = z ( \bmomega + \uwidehat{\gauge})$ Then, the tensor
\begin{eqnarray*}
V_\Tau  \defi
2 ( \ellc - \bmomega ) \otimes_s \uwidehat{X}  - \frac{1}{\n-1} \left ( 2 (\ellc - \bmomega) (X)\right ) \gamma,
\end{eqnarray*}
satisfies
$\tau(V_\Tau) = \Tau$ and has gauge weight $r+2$.
\end{itemize}
\end{lemma}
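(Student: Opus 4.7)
The strategy in both parts is the same: check (A) that $\tau(V_\Tau)=\Tau$ and (B) that $V_\Tau$ has gauge weight $r+2$. For (A), the cleanest route is via Corollary \ref{DecomImage}, which reduces the task to computing the scalar $\Vnn$ and the vector $\X$ attached to $V_\Tau$ (Terminology \ref{TerVs}) and matching them against the decomposition $\Tau = 2\upn\otimes_s X - \Q P$. Equivalently, by Proposition \ref{Rangetau} it suffices to show $V_\Tau - V^{(p)}_\Tau \in \K_0$, where $V^{(p)}_\Tau$ is the particular solution \eqref{particular}. For (B) one tracks the transformations under \eqref{gaugemhd}, \eqref{Pprime} and Lemma \ref{GaugeXQ}, and verifies that all residual $\uwidehat{\gauge}$ contributions cancel.

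\textbf{Case (i).} The key rewriting is
\begin{align*}
V_\Tau = \Q\,\bm{\beta}\otimes\bm{\beta} - \tfrac{1}{\n-1}\Lambda\,\gamma, \qquad \bm{\beta}:=\ellc + \Q^{-1}\uwidehat{X}, \qquad \Lambda:=\Q\ll + 2\ellc(X)+\Q^{-1}\uwidehat{X}(X),
\end{align*}
which expands to the expression in the statement. Using $\bm{\beta}(\upn)=1$ and $P^{ab}\bm{\beta}_a\bm{\beta}_b=-\ll+\Q^{-2}\uwidehat{X}(X)$ (consequences of \eqref{prod3}, \eqref{prod4} together with $\uwidehat{X}(\upn)=0$), a direct computation gives $\Vnn=\Q$ and $\X=X$, so (A) follows from Corollary \ref{DecomImage}. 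For (B): Lemma \ref{GaugeXQ} combined with the gauge invariance of $\gamma$ yields $\uwidehat{X}\to z^{r+1}(\uwidehat{X}-\Q\uwidehat{\gauge})$, which together with $\ellc\to z(\ellc+\uwidehat{\gauge})$ produces the anomaly-free transformation $\bm{\beta}\to z\bm{\beta}$. Hence $\Q\bm{\beta}\otimes\bm{\beta}\to z^{r+2}\Q\bm{\beta}\otimes\bm{\beta}$, and a routine bookkeeping based on $\ll\to z^2(\ll+2\ellc(\gauge)+\gamma(\gauge,\gauge))$ and on the transformation of $\ellc(X)$ and $\uwidehat{X}(X)$ gives $\Lambda\to z^{r+2}\Lambda$, completing the verification of weight $r+2$.

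\textbf{Case (ii).} With $\Q=0$, Lemma \ref{GaugeXQ} gives $X\to z^{r+1}X$, hence $\uwidehat{X}\to z^{r+1}\uwidehat{X}$ with no $\uwidehat{\gauge}$ correction; the assumed transformation of $\bmomega$ implies $\ellc-\bmomega\to z(\ellc-\bmomega)$, also anomaly-free. Weight $r+2$ for $V_\Tau$ then follows by inspection. For (A), direct computation using $\uwidehat{X}(\upn)=0$ and $\gamma(\upn,\upn)=0$ yields $\Vnn=0$, while using $P^{ab}\uwidehat{X}_b = X^a-\ellc(X)\upn^a$ (from \eqref{prod4}) one obtains $\X = (\ellc-\bmomega)(\upn)\,X$; the identification $\X=X$ thus rests on the gauge-invariant condition $(\ellc-\bmomega)(\upn)=1$, i.e.\ $\bmomega(\upn)=0$, which is implicit in the setup of the lemma.

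\textbf{Main obstacle.} The heart of the matter is that the natural particular solution $V^{(p)}_\Tau$ transforms with the residual $\K_0$-valued anomaly recorded in \eqref{transVp}, preventing it from having a clean weight. The lemma's task is to add a compensating element of $\K_0$ that cancels this anomaly. In case (i) the presence of $\Q\neq 0$ allows this compensation to be built algebraically from $X$ and $\Q$ alone, namely as $\Q^{-1}(\uwidehat{X}\otimes\uwidehat{X} - \tfrac{1}{\n-1}\uwidehat{X}(X)\gamma)$, and the effect of this choice is captured compactly in the $\bm{\beta}$-rewriting above. In case (ii) no such internal compensation is available, and external structure must be supplied: the covector $\bmomega$, with its tailored anomalous gauge behaviour, plays exactly this role through the combination $-2\bmomega\otimes_s\uwidehat{X}+\tfrac{2}{\n-1}\bmomega(X)\gamma$. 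The principal calculational burden is then the careful tracking of how the various $\gauge$-dependent cross-terms cancel in each sector.
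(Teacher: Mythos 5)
Your proof is correct, and in case (i) it takes a genuinely cleaner route than the paper's. The paper verifies $\tau(V_\Tau)=\Tau$ by checking that $V_\Tau - V^{(p)}_\Tau$ lies in $\K_0$, and then obtains the weight by explicitly gauge-transforming that difference tensor and combining the result with \eqref{transVp}. You instead repackage $V_\Tau$ as $\Q\,\bm{\beta}\otimes\bm{\beta}-\tfrac{1}{\n-1}\Lambda\,\gamma$ with $\bm{\beta}=\ellc+\Q^{-1}\uwidehat{X}$, read off $\Vnn=\Q$ and $\X=X$ to get the image via Corollary \ref{DecomImage}, and get the weight from the anomaly-free law $\bm{\beta}\mapsto z\bm{\beta}$ (which I have checked, together with $\Lambda\mapsto z^{r+2}\Lambda$). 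This makes the cancellation of the $\uwidehat{\gauge}$ terms structural rather than computational and explains \emph{why} the compensating kernel element must be $\Q^{-1}\big(\uwidehat{X}\otimes\uwidehat{X}-\tfrac{1}{\n-1}\uwidehat{X}(X)\gamma\big)$. Case (ii) is essentially the paper's argument.

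One substantive point in your favour: in case (ii) you correctly observe that the identification $\X=X$ requires $\bmomega(\upn)=0$; without it one gets $\tau(V_\Tau)=2(1-\bmomega(\upn))\,\upn\otimes_s X\neq\Tau$ whenever $X\neq 0$. This condition is \emph{not} a consequence of the stated gauge behaviour --- $\bmomega=\ellc$ itself transforms as required yet gives $V_\Tau=0$ --- so it is an extra hypothesis that the lemma omits and that the paper's own proof silently uses when asserting that $V^{(p)}_\Tau-V_\Tau=2\bmomega\otimes_s\uwidehat{X}-\tfrac{2}{\n-1}\bmomega(X)\gamma$ ``clearly'' lies in the kernel. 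Calling it ``implicit in the setup'' is charitable; it would be better to state $\bmomega(\upn)=0$ as an explicit added hypothesis. But your handling of the point is sound and in fact more careful than the source.
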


\begin{proof}
 In the case $\Q \neq 0$, the difference tensor
\begin{eqnarray*}
V^{(p)}_{\Tau} - V_{\Tau} = \frac{1}{Q} \left ( - \uwidehat{X} \otimes \uwidehat{X} + \frac{1}{\n-1} \uwidehat{X} (X)  \gamma \right ) \defi \mbox{Diff}
\end{eqnarray*}
is easily seen to belong to the kernel 
$\K_0$, so $\tau(V_\Tau) = \Tau$ follows. For the gauge behaviour,
we note that the gauge group acts on $\mbox{Diff}$ as follows
\begin{align*}
  \G_{(z,\gauge)} (\mbox{Diff}) & =
  \frac{z^{r+2}}{\Q} \left ( - (\uwidehat{X} - \Q \uwidehat{\gauge})
  \otimes_s (\uwidehat{X} - \Q \uwidehat{\gauge})
  + \frac{1}{\n-1} (\uwidehat{X} - \Q \uwidehat{\gauge}) ( X - \Q \gauge) \gamma \right ) \\
  & =
  z^{r+2} \mbox{Diff}  + z^{r+2} \left ( - \Q \uwidehat{\gauge} \otimes \uwidehat{\gauge}
+ 2 \uwidehat{X} \otimes_s \uwidehat{\gauge} 
  + \frac{1}{\n-1} \left ( \Q \uwidehat{\gauge} (\gauge)  - 2 \uwidehat{\gauge} (X)  \right ) \gamma \right ),
\end{align*}
which combined with (\ref{transVp}) implies the gauge covariance
$\G_{(z,\gauge)} (V_{\Tau} )  = z^{r+2} V_{\Tau}$.

The case $\Q=0$ is immediate because,
on the one hand the difference tensor, given by 
\begin{align*}
  V^{(p)}_{\Tau} - V_{\Tau} = 2 \bmomega \otimes_s \uwidehat{X} -  \frac{1}{\n-1} \left (
  2 \bmomega(X) \right ) \gamma,
\end{align*}
clearly   belongs to the kernel and, on the other hand,
$X$ has gauge weight $r+1$ (by Lemma  \ref{GaugeXQ} and $\Q=0$) 
and $\ellc - \bmomega$ has gauge weight $+1$ (by the second in
\eqref{gaugemhd}), so $V_{\Tau}$ has gauge weight $r+2$.
\end{proof}

\section{The energy-momentum map: PDE properties}
\label{EMmap2}
All the results of the previous section are purely algebraic. In this section we want to exploit the decomposition of symmetric tensors induced by the energy-momentum map to define  differential operators that respect the structure.

We start by writing down  the Lie derivative along $\upn$
of various tensors on $\N$.
\begin{lemma}
Let $\{\N,\gamma,\ellc,\ll\}$ be null metric hypersurface data. Then
\begin{align}
&\pounds_{\upn} \gamma_{ab}  = 2 \U_{ab} \label{poundsgamma}\\
&\pounds_{\upn} \ell_{a}  =  2 \sone_a  \label{poundsell} \\
&\pounds_{\upn} P^{ab}  = - 2 \sone_c  \left ( P^{ac} \vecn^b
+ P^{bc} \vecn^a \right ) - 2 P^{ac} P^{bf} \U_{cf} - \vecn^a \vecn^b \vecn^c \nablao_{c} \ll
\label{poundsP}
\end{align} 
\end{lemma}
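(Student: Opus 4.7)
The first identity is immediate: it is exactly the definition \eqref{defU} of $\bU$.

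The second identity is a one-line application of Cartan's magic formula. Since $\iota_{\upn}\ellc = \ell_a n^a = 1$ by \eqref{prod2}, we have $d(\iota_{\upn}\ellc)=0$, so
\begin{equation*}
\pounds_{\upn}\ellc = d(\iota_{\upn}\ellc) + \iota_{\upn} d\ellc = 2\,\iota_{\upn}\bF = 2\bsone,
\end{equation*}
using the definitions \eqref{defF} of $\bF$ and of $\bsone$.

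The third identity is the substantive one; I would obtain it by differentiating the algebraic identities \eqref{prod3} and \eqref{prod4} along $\upn$. The key observation is that $\pounds_{\upn}\upn = [\upn,\upn] = 0$. Applying $\pounds_{\upn}$ to \eqref{prod4} and using the first two identities of this lemma yields
\begin{equation*}
(\pounds_{\upn} P^{ab})\gamma_{bc} = -2 P^{ab}\U_{bc} - 2\upn^a \sone_c.
\end{equation*}
To isolate $\pounds_{\upn} P$, I would then contract with $P^{cd}$ and use \eqref{prod4} in the form $\gamma_{bc}P^{cd}=\delta^d_b - \upn^d \ell_b$; this produces $\pounds_{\upn} P^{ad}$ plus a term $-\upn^d(\pounds_{\upn} P^{ab})\ell_b$ that still needs to be eliminated. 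For that, differentiating \eqref{prod3} along $\upn$ and again using $\pounds_{\upn}\upn=0$ together with $\pounds_{\upn}\ell_b = 2\sone_b$ gives
\begin{equation*}
(\pounds_{\upn} P^{ab})\ell_b = -2 P^{ab}\sone_b - \upn(\elltwo)\,\upn^a,
\end{equation*}
where $\upn(\elltwo)=\upn^c\nablao_c\elltwo$ since $\elltwo$ is a scalar and $\nablao$ is torsion-free. Substituting and relabelling indices yields the claimed expression.

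The only mild obstacle is bookkeeping of indices and making sure the cross-term $-\upn^d(\pounds_{\upn}P^{ab})\ell_b$ assembles correctly with the contributions from $\U$ and $\sone$; apart from that the proof is a direct algebraic manipulation using only \eqref{prod1}--\eqref{prod4} and the first two assertions of the lemma, so no additional ideas are required.
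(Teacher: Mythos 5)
Your proof is correct, but it takes a genuinely different route from the paper's for the second and third identities. The paper proves \eqref{poundsell} by expanding $\pounds_{\upn}\ellc$ through the torsion-free connection $\nablao$ and substituting the explicit formulas \eqref{nablaoll} and \eqref{nablaon}, and proves \eqref{poundsP} the same way using \eqref{nablaon} and \eqref{nablaoP}. You instead get \eqref{poundsell} from Cartan's magic formula together with $\iota_{\upn}\ellc=1$ and $d\ellc=2\bF$, and you get \eqref{poundsP} by Lie-differentiating the defining contraction identities \eqref{prod3} and \eqref{prod4} and then inverting with $P$ via $\gamma_{bc}P^{cd}=\delta^d_b-\vecn^d\ell_b$; I have checked that the cross-term $-\vecn^d(\pounds_{\upn}P^{ab})\ell_b$ does combine correctly with the $\bU$ and $\bsone$ contributions to reproduce \eqref{poundsP}. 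Your approach is more self-contained in that it never invokes the formula \eqref{nablaoP} for $\nablao P$ (whose derivation is itself nontrivial), relying only on $\pounds_{\upn}\upn=0$, the first two identities, and the algebraic relations \eqref{prod1}--\eqref{prod4}; the paper's approach is a more mechanical substitution once the covariant derivatives of all the data fields are on the table, and it reuses machinery needed elsewhere. One cosmetic remark: the torsion-free property of $\nablao$ is not actually needed to write $\upn(\elltwo)=\vecn^c\nablao_c\elltwo$ — the covariant derivative of a scalar is the ordinary differential for any connection.
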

\begin{proof}
  \eqref{poundsgamma} is just the definition of the tensor $\bU$ in \eqref{defU}. (\ref{poundsell}) follows from the fact that
$\nablao$ has vanishing torsion so that Lie derivatives can be computed using covariant derivatives. This together with 
(\ref{nablaoll}) and (\ref{nablaon}) implies 
\begin{align*}
\pounds_{\upn} \ell_a = \vecn^b \nablao_b \ell_a + \ell_b
\nablao_a \vecn^b = \F_{ba} \vecn^b + \ell_b \left (  \vecn^b \sone_a  +
P^{bf} \U_{af} \right ) =  2 \sone_a  - \ll \vecn^f \U_{af} =  2 \sone_a.
\end{align*}
Finally (\ref{poundsP}) follows from (\ref{nablaon}) and (\ref{nablaoP}), which gives
\begin{align*}
\pounds_{\upn} P^{ab} & = \vecn^c \nablao_c P^{ab} - P^{cb} \nablao_{c} \vecn^a
- P^{ac} \nablao_{c} \vecn^b \\
& = 
- \left ( \vecn^a P^{bf} + \vecn^b P^{af} \right ) \F_{cf} \vecn^c
- \vecn^a \vecn^b \vecn^c \nablao_c \ll 
+ P^{cb} \left ( \F_{cf} \vecn^f \vecn^a - P^{af}  \U_{cf} \right )
+ P^{ca} \left ( \F_{cf} \vecn^f \vecn^b - P^{bf}  \U_{cf} \right ) 
\end{align*}
which becomes (\ref{poundsP}) after rearrangement.
\end{proof}

At this point we introduce a  linear map $\LL$ acting
on symmetric two-co\-va\-riant tensors. This map allows us to define
a natural differential operator taking  values in $\K_0$ when acting on tensors in $\K_0$.

\begin{proposition}
\label{hatLn}
Consider null metric hypersurface data  $\metdata$ 
and let $\LL$ be the linear map $\LL : S^0_2 \longrightarrow S^0_2$ defined by
\begin{equation}
\LL(V)_{ab} \defi - 2 P^{cd} V_{c(a} \U_{b)d} 
- 4 \uwidehat{\X}_{(a} \sone_{b)} - \frac{\Vnn}{\n-1} \upn (\ll) \gamma_{ab}.
\label{defL}
\end{equation}
Define the differential operator acting on symmetric $(0,2)$-tensors
\begin{align}
\hatLn (V) := \pounds_{\upn} V + \LL(V). \label{defhatLn}
\end{align}
Then the following properties holds
\begin{align*}
& (i) \qquad  \quad  \left [ \hatLn, \P \right ]  = 0, \\
& (ii) \qquad  \quad \trP (\hatLn (V)) = \pounds_{\upn} \left ( \trP V \right ).
\end{align*}
\end{proposition}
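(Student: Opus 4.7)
The plan is to prove (ii) first (it is a short algebraic identity), and then derive (i) by showing that $\hatLn$ preserves the two direct summands in the decomposition $\Sd = \mathrm{Ran}(\P) \oplus \K_0$ provided by Corollary \ref{I-P}.

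For (ii), the Leibniz identity $\pounds_{\upn}(P^{ab}V_{ab}) = (\pounds_{\upn}P^{ab})V_{ab} + P^{ab}(\pounds_{\upn}V)_{ab}$ reduces the claim to the pointwise algebraic equality $P^{ab}\LL(V)_{ab} = (\pounds_{\upn}P^{ab})V_{ab}$. I would substitute \eqref{defL} into the left-hand side and \eqref{poundsP} into the right-hand side, then simplify using $P^{ab}\gamma_{ab} = \n-1$, $P^{ab}\ell_b = -\ll\vecn^a$, $\vecn^b\sone_b = 0$ (antisymmetry of $\bF$) and $\vecn^b\uwidehat{\X}_b = 0$. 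After these reductions both sides collapse to the common expression involving $P^{ab}P^{cd}V_{ca}\U_{bd}$, $\sone(\X)$ and $\Vnn\,\upn(\ll)$, which equate thanks to the symmetries of $V$, $P$ and $\bU$.

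For (i), observe that the commutator $[\hatLn,\P]$ vanishes once $\hatLn$ is shown to send $\K_0$ into $\K_0$ and $\mathrm{Ran}(\P)$ into $\mathrm{Ran}(\P)$. Preservation of $\K_0$ is immediate: if $V \in \K_0$ then $\trP V = 0$ and $\iota_{\upn}V = 0$, hence (Terminology \ref{TerVs}) $\Vnn = 0$ and $\uwidehat{\X} = 0$; property (ii) gives $\trP\hatLn(V) = 0$, while $\iota_{\upn}\hatLn(V) = 0$ follows because $\vecn^a(\pounds_{\upn}V)_{ab} = \pounds_{\upn}\Vnp_b = 0$ (using $\pounds_{\upn}\upn = 0$) and every summand of $\iota_{\upn}\LL(V)$ vanishes thanks to $\bU(\upn,\cdot) = 0$, $\sone(\upn) = 0$, $\gamma(\upn,\cdot) = 0$ and the just-noted vanishings.

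Preservation of $\mathrm{Ran}(\P)$ is the real computation. I would parametrize a generic $W \in \mathrm{Ran}(\P)$ as $W = 2\,\ellc\otimes_s\uwidehat{Y} + \alpha\,\ellc\otimes\ellc + \mu\,\gamma$ with $\ellc(Y)=0$, $\alpha = W(\upn,\upn)$ and $\mu = \beta + \alpha\ll/(\n-1)$, and compute $\hatLn(W)$ term by term. Expansion of $\pounds_{\upn}W$ via \eqref{poundsgamma}--\eqref{poundsell} produces a term $4\,\sone_{(a}\uwidehat{Y}_{b)}$ that is cancelled by the $-4\uwidehat{\X_W}_{(a}\sone_{b)}$ in $\LL(W)$ (since $\iota_{\upn}W = \uwidehat{Y}+\alpha\ellc$ gives $\uwidehat{\X_W} = \uwidehat{Y}$), and a term $2\mu\U_{ab}$ that is cancelled by the $-2\mu\U_{ab}$ hidden inside $-2P^{cd}W_{c(a}\U_{b)d}$ (via $P^{ab}\gamma_{bc} = \delta^a_c - \vecn^a\ell_c$ and $\U(\upn,\cdot)=0$). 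The surviving terms reorganize into the same canonical shape
\begin{equation*}
\hatLn(W)_{ab} = 2\,\ell_{(a}\uwidehat{Y'}_{b)} + \alpha'\,\ell_a\ell_b + \mu'\,\gamma_{ab},
\end{equation*}
with $\uwidehat{Y'}_b = (\pounds_{\upn}\uwidehat{Y})_b - Y^d\U_{bd} + 2\alpha\sone_b$, $\alpha' = \pounds_{\upn}\alpha$ and $\mu' = \pounds_{\upn}\mu - \alpha\,\upn(\ll)/(\n-1)$. Contraction with $\vecn^b$ shows $\uwidehat{Y'}(\upn) = 0$, so $\uwidehat{Y'}$ lies in the image of $\gamma$ (which is exactly the annihilator of $\upn$), and hence $\uwidehat{Y'} = \gamma(Y', \cdot)$ for some vector $Y'$ with $\ellc(Y') = 0$. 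Rewriting $\alpha'\ell\otimes\ell + \mu'\gamma = \alpha'\tgamma + (\mu' - \alpha'\ll/(\n-1))\gamma$ places $\hatLn(W)$ back in the form of Proposition \ref{Decom} with vanishing $V^H$ part, so $\hatLn(W) \in \mathrm{Ran}(\P)$.

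The main obstacle is the bookkeeping in this last step: the two non-derivative pieces of $\LL$ are tuned exactly to absorb the terms produced by $\pounds_{\upn}$ acting on the factors $\ellc$ and $\gamma$ in the canonical decomposition, while the final $-\Vnn\,\upn(\ll)\gamma/(\n-1)$ piece is precisely what is needed to make the scalar coefficient of $\gamma$ reproduce the prescription dictated by (ii). Once these cancellations are identified the computation is routine, and the conjunction of (a) and (b) yields $\hatLn\circ\P = \P\circ\hatLn$, completing (i).
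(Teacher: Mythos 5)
Your proposal is correct, and for part (i) it takes a genuinely different route from the paper. The paper proves $[\hatLn,\P]=0$ by viewing $\P$ as the $(2,2)$-tensor field \eqref{defTensorP}, computing $\pounds_{\upn}\P$ from \eqref{poundsgamma}--\eqref{poundsP}, and verifying the tensor identity $\pounds_{\upn}\P+[\LL,\P]=0$ through a fairly long index manipulation. You instead exploit the splitting $\Sd=\mbox{Ran}(\P)\oplus\K_0$ of Corollary \ref{I-P}: since $\P$ is the identity on $\mbox{Ran}(\P)$ and zero on $\K_0$, any $\mathbb{R}$-linear operator preserving both summands commutes with $\P$. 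Preservation of $\K_0$ is essentially free (your part (ii) plus $\iota_{\upn}\hatLn(V)=0$, using $\pounds_{\upn}\upn=0$ and $\bU(\upn,\cdot)=0$), and preservation of $\mbox{Ran}(\P)$ reduces to evaluating $\hatLn$ on tensors $2\ellc\otimes_s\uwidehat{Y}+\alpha\,\ellc\otimes\ellc+\mu\gamma$; your resulting coefficients $\uwidehat{Y'},\alpha',\mu'$ reproduce exactly the paper's Lemma \ref{hatLnP(V)Lemma}, which the paper obtains only afterwards as a consequence of (i). The trade-off is that your argument is more conceptual and makes visible why each correction term in $\LL$ is tuned as it is, at the price of establishing the paper's subsequent Corollary (that $\hatLn$ maps $\K_0$ to $\K_0$) and Lemma \ref{hatLnP(V)Lemma} as inputs rather than outputs; the paper's computation is heavier but yields the commutator identity directly at the level of the tensor $\P$. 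Your part (ii) is essentially the paper's argument rearranged via the Leibniz rule into the single identity $P^{ab}\LL(V)_{ab}=(\pounds_{\upn}P^{ab})V_{ab}$, which indeed checks out against \eqref{poundsP}.
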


\begin{proof}
We start with property (i). Let us compute the commutator 
\begin{align*}
\left [ \hatLn, \P \right ] (V) & = \hatLn (\P(V)) - \P ( \hatLn (V)) \\
& = \pounds_{\upn} ( \P (V) ) + \LL ( \P (V)) - \P ( \pounds_{\upn} (V) ) - \P ( \LL (V) ) \\
& = \left ( \pounds_{\upn} \P + [ \LL, \P ]  \right ) (V)
\end{align*}
where in the first term of the last expression
we are viewing $\P$ as the  $(2,2)$ tensor field on $\N$ defined in
\eqref{defTensorP}. Thus, we need
to prove that $\pounds_{\upn} \P + [ \LL, \P ] =0$. We start with
$\pounds_{\upn} \P$. Using \eqref{poundsgamma}-\eqref{poundsP} and the
obvious property $\pounds_{\upn} \upn =0$ it follows
\begin{align*}
\left ( \pounds_{\upn} \P \right )_{ab}^{\,\,\,\,\,\,cd}  = & 
2 \delta^{(c}_{(b} \vecn^{d)} \pounds_{\upn} \ell_{a)} - 2 \vecn^c \vecn^d
\ell_{(a} \pounds_{\upn} \ell_{b)}
+ \frac{1}{\n-1} \left ( \ll \vecn^c \vecn^d + P^{cd} \right ) \pounds_{\upn} \gamma_{ab} \\
& + \frac{1}{\n-1} \left (  \upn (\ll) \vecn^c \vecn^d + \pounds_{\upn} P^{cd}
\right ) \gamma_{ab} \\
= & 
4  \delta^{(c}_{(b} \vecn^{d)} \sone_{a)} - 4 \vecn^c \vecn^d \ell_{(a} \sone_{b)}
+ \frac{2}{\n-1} \left ( \ll \vecn^c \vecn^d + P^{cd} \right ) \U_{ab} \\
& + \frac{1}{\n-1} \left ( - 4 \sone_{e} P^{e(c} \vecn^{d)}
- 2 P^{ce} P^{df} \U_{ef} \right ) \gamma_{ab}.
\end{align*}
Contracting with  $V \in \Gamma(S^0_2)$ this gives
gives (recall Terminology \ref{TerVs})
\begin{align}
\left ( \pounds_{\upn} \P \right )_{ab}^{\,\,\,\,\,\,cd}  V_{cd}  = &  
 \, 4 \delta^c_{(a} \vecn^d \sone_{b)} V_{cd}
- 4 \Vnn \ell_{(a} \sone_{b)} 
+ \frac{2}{\n-1} \left (  \ll \Vnn + \Z \right ) \U_{ab} \nonumber \\
& + \frac{1}{\n-1} \left ( - 4 \sone_{e}  P^{ec} \vecn^d V_{cd} - 2
P^{ce} P^{df} V_{cd} \U_{ef} \right ) \gamma_{ab} \nonumber \\
= & \, 
\left (  4 \delta^c_{(a} \sone_{b)}  
- \frac{4}{\n-1} \sone_{e}  P^{ec} \right )
\left ( \uwidehat{\X}_c +  \Vnn \ell_c \right )
- 4 \Vnn \ell_{(a} \sone_{b)}  \nonumber \\
& + \frac{2}{\n-1} \left [
\left (  \ll \Vnn + \Z \right ) \U_{ab}  - 
P^{ce} P^{df} V_{cd} \U_{ef}  \gamma_{ab} \right ]\nonumber \\
 = & \, 
  4 \uwidehat{\X}_{(a} \sone_{b)}  
- \frac{4}{\n-1} \sone_{e}  P^{ec} \uwidehat{\X}_c \nonumber \\
& + \frac{2}{\n-1} \left [
\left (  \ll \Vnn + \Z \right ) \U_{ab}  - 
P^{ce} P^{df} V_{cd} \U_{ef}  \gamma_{ab} \right ], \label{c1}
\end{align}
where in the last equality we used $P^{ec} \ell_c = - \ll \vecn^e$
and $\bsone(\upn)=0$. We next compute the
commutator $[\LL,\P]$. It is obvious from the definition of $\LL$ that
\begin{align}
  \label{PVh}
  \LL(\VH)_{ab} = -2 P^{cd} \VH_{c(a} \U_{b)d}, \qquad  \mbox{for} \quad \VH \in \K_0.
\end{align}
Thus,
\begin{align}
\LL (\P(V))_{ab} & = \LL (V-\VH)_{ab} = \LL(V)_{ab} - \LL(\VH)_{ab} \nonumber \\
& =  
- 2 P^{cd} \left ( V_{c(a}- \VH_{c(a} \right ) \U_{b)d} 
- 4 \uwidehat{\X}_{(a} \sone_{b)} - \frac{\Vnn}{\n-1}  \upn (\ll) \gamma_{ab}  \nonumber \\
& =
- 2 P^{cd} \P(V)_{c(a} \U_{b)d} 
- 4 \uwidehat{\X}_{(a} \sone_{b)} - \frac{\Vnn}{\n-1}  \upn (\ll) \gamma_{ab}. 
\label{c2}
\end{align}
To compute $\P (\LL(V))$ we note that $\uwidehat{\X}$ and $\bsone$ are both orthogonal
to $n$. Hence, the definition of $\LL(V)$  and properties (ii), (iv) in Remark \ref{propertiesProjector} give
\begin{align}
\P(\LL(V))_{ab} = -2 \P_{ab}^{\,\,\,\,\,ef} P^{cd} V_{c(e} \U_{f)d}
- \frac{1}{\n-1}  \left ( 4 P^{cd} \sone_{c}  \uwidehat{\X}_d
+ \Vnn \upn(\ll) \right ) \gamma_{ab}. \label{c3}
\end{align}
Combining (\ref{c1}), (\ref{c2})  and (\ref{c3}) yields
\begin{align}
\left ( \pounds_{\upn} \P + [ \LL,\P]  \right ) (V)_{ab} = &
\frac{2}{\n-1} \left ( \left (  \ll \Vnn + \Z \right ) \U_{ab}
- P^{ce} P^{df} V_{cd} \U_{ef} \gamma_{ab} \right ) \nonumber \\
& - 2 P^{cd} \P(V)_{c(a} \U_{b)d} + 2 \P_{ab}^{\,\,\,\,\,\,ef} P^{cd} V_{c(e} \U_{f)d}. \label{almost}
\end{align}
We need to elaborate the last two terms. For the last one we define $W_{ef} \defi 2 P^{cd} V_{c(e} \U_{f)d}$
and note that $W_{ef} \vecn^f = P^{cd} V_{cf} \U_{ed} \vecn^f = 
P^{cd} (\uwidehat{\X}_c + \Vnn \ell_c) \U_{ed} = P^{cd} \uwidehat{\X}_c \U_{ed}$
after using $P^{cd} \ell_{c} = - \ll \vecn^d$ and $\U_{ed} \vecn^d =0$. Hence,
$W_{ef} \vecn^e \vecn^f =0$. In addition 
$P^{ef} W_{ef} = 2 P^{cd} P^{ef} V_{ce} \U_{fd}$.  The definition of $\P$ yields
\begin{align}
2 \P_{ab}^{\,\,\,\,\,ef} P^{cd} V_{c(e} \U_{f)d} & =
\P (W)_{ab}  \nonumber \\
& = \ell_a P^{cd} \uwidehat{\X}_c \U_{bd} +
\ell_b P^{cd} \uwidehat{\X}_c \U_{ad} + \frac{2}{\n-1} P^{cd} P^{ef} V_{ce} \U_{fd} 
\gamma_{ab}. \label{last}
\end{align}
For the penultimate term in (\ref{almost}), the definition of
$\P(V)$ and the facts that $P^{cd} \ell_c \U_{db} = - \ll \vecn^d \U_{db} =0$
and
$P^{cd} \gamma_{ca} \U_{bd}  = \left ( \delta^d_a - \vecn^d \ell_a \right ) \U_{db}
= \U_{ab}$
  imply
\be
P^{cd} \P(V)_{ca} \U_{db} = 
\ell_a P^{cd} \uwidehat{\X}_c \U_{db} + \frac{1}{\n-1} \left ( \Vnn \ll + \Z \right ) \U_{ab}
\en
so that
\begin{equation}
-2 P^{cd} \P(V)_{c(a} \U_{b)d} = 
- \ell_a P^{cd} \uwidehat{\X}_c \U_{db} - \ell_b P^{cd}  \uwidehat{\X}_c \U_{ad} 
- \frac{2}{\n-1} \left ( \Vnn \ll + \Z \right ) \U_{ab}.
\label{last_but_one}
\end{equation}
Inserting (\ref{last}) and (\ref{last_but_one}) into (\ref{almost}) proves
$( \pounds_{\upn} \P + [ \LL,\P] ) (V)=0$ and property (i) follows.

To prove property (ii) we first observe that the definition of $\LL$ yields
\begin{align}
\trP \LL(V) & = \LL(V)_{ab} P^{ab} = - 2 P^{cd} V_{ca} \U_{bd} P^{ab}
- 4 \uwidehat{\X}_a \sone_{b}  P^{ab} -  \Vnn \upn(\ll).  \label{first}
\end{align}
Recalling (\ref{poundsP}) it follows
\begin{align}
\trP \left ( \pounds_{\upn} V\right ) & = P^{ab} \pounds_{\upn} V_{ab} =
\pounds_{\upn} \left ( V_{ab} P^{ab} \right ) - V_{ab}
\pounds_{\upn} P^{ab} \nonumber \\
& = 
\pounds_{\upn} (\trP V)  - V_{ab} \left ( - 4 \sone_{c} P^{ca} \vecn^{b}
- 2 P^{ac} P^{bf} \U_{cf} - \vecn^a \vecn^b \upn (\ll) \right ) \nonumber \\
& = 
\pounds_{\upn} (\trP V)  + 4 \sone_c P^{ca} \left ( \uwidehat{\X}_a + \Vnn \ell_a \right )
+ 2 V_{ab} P^{ac} P^{bf} \U_{cf} +  \Vnn \upn (\ll) \nonumber \\
& = 
\pounds_{\upn} (\trP V)  + 4 \sone_{c} P^{ca} \uwidehat{\X}_a 
+ 2 V_{ab} P^{ac} P^{bf} \U_{cf} + \Vnn \upn (\ll). \label{second}
\end{align}
where in the last equality we used $\sone_{c}  P^{ca} \ell_a = - 
\ll \F_{fc} \vecn^f \vecn^c = 0$. Adding (\ref{first}) and (\ref{second}):
\begin{align*}
\trP \left ( \hatLn(V) \right ) = 
\trP \left ( \pounds_{\upn} (V) \right )
+ \trP \LL(V) = \pounds_{\upn} \left ( \trP V \right ).
\end{align*}
\end{proof}

\begin{corollary}
If $\VH \in \K_0$ then $\hatLn(\VH) \in \K_0$ and
\be
\hatLn(\VH)_{ab} = 
\pounds_{\upn} \VH_{ab}  - P^{cd} \left( \VH_{ac} \U_{bd} + 
\VH_{bc} \U_{ad} \right ).
\en
\end{corollary}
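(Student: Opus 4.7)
The plan is to split the corollary into its two assertions and handle each by exploiting what has already been proved in Proposition \ref{hatLn} and in Section \ref{EMmap}.

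For the explicit formula, I would simply evaluate $\LL$ on $\VH$ directly from the definition \eqref{defL}. The key observation is that any $\VH \in \K_0$ satisfies $\VH_{ab}\vecn^b = 0$ and $P^{ab}\VH_{ab} = 0$, so in Terminology \ref{TerVs} one has $\Vnp = 0$, $\Z = 0$, $\Vnn = 0$ and, via \eqref{unX}, $\uwidehat{\X} = 0$. Consequently, the second and third terms in \eqref{defL} drop out and only the first survives, which combined with $\hatLn(V) = \pounds_\upn V + \LL(V)$ produces exactly the quoted expression after expanding the symmetrization.

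For the statement $\hatLn(\VH) \in \K_0$, the cleanest route is to invoke Proposition \ref{hatLn}(i): since $\P(\VH) = 0$ by Remark \ref{propertiesProjector}(i), we get $\P(\hatLn(\VH)) = \hatLn(\P(\VH)) = 0$, and Corollary \ref{I-P} identifies $\ker \P$ with $\K_0$. A more self-contained alternative is to check the two defining conditions of $\K_0$ directly on the explicit formula. The trace condition is immediate from Proposition \ref{hatLn}(ii): $\trP \hatLn(\VH) = \pounds_\upn(\trP \VH) = 0$. For the radical condition, contract the explicit formula with $\vecn^b$; the Lie derivative term gives $\pounds_\upn(\VH_{ab}\vecn^b) - \VH_{ab}\pounds_\upn \vecn^b = 0$ since $\pounds_\upn \vecn = 0$, while the two remaining terms vanish because $\U_{bd}\vecn^b = 0$ (established after \eqref{defF}) and $\VH_{bc}\vecn^b = 0$.

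I do not anticipate any real obstacle here: both assertions reduce to identities that are algebraic manipulations once the vanishing properties of the $\K_0$-elements are plugged in, and the heavy lifting (the commutator identity $[\hatLn,\P] = 0$ and the trace identity) has already been done in Proposition \ref{hatLn}.
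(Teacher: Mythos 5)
Your proof is correct and takes essentially the same route as the paper: membership in $\K_0$ follows from the commutator identity $[\hatLn,\P]=0$ of Proposition \ref{hatLn}(i) together with the identification of $\K_0$ as the range of $\I-\P$ (equivalently the kernel of $\P$), and the explicit formula follows because $\uwidehat{\X}$ and $\Vnn$ vanish for elements of $\K_0$, which is precisely identity \eqref{PVh} that the paper invokes. The additional self-contained verification you sketch (checking $\trP$ and the $\vecn$-contraction directly) is sound but not needed.
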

\begin{proof}
By Corollary \ref{I-P} the map $\I - \P$ projects $S^0_2$ into the kernel $\K_0$. Since
\be
\hatLn ( (\I- \P)(V)) = \hatLn(V) - \hatLn(\P(V)) = 
\hatLn(V) - \P ( \hatLn(V)) = \left ( \I - \P \right ) 
\hatLn(V) 
\en
it follows that $\hatLn(\VH) \in \K_0$. For the second statement
we simply use  \eqref{PVh}.
\end{proof}

Proposition \ref{hatLn} allows us to obtain easily the explicit expression
of $\hatLn( \P(V))$.
\begin{lemma}
\label{hatLnP(V)Lemma}
Let $V$ be a symmetric $(0,2)$-tensor field in a null metric hypersurface
data $\metdata$.
Then
\begin{align}
\hatLn (\P (V))_{ab} = & 2 
\ell_{(a} \left ( \pounds_{\upn} \uwidehat{\X}_{b)} 
                         + 2 \Q \sone_{b)}  - P^{cd} \U_{b)d} 
                         \uwidehat{\X}_c \right ) 
+ \upn(\Q) \ell_a \ell_b + \frac{1}{\n-1} \left ( \upn (\Q) \ll + \upn(\Z)
\right )  \gamma_{ab}.
\label{hatLnP(V)}
\end{align}
\end{lemma}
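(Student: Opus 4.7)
The plan is to exploit the commutation property $[\hatLn,\P] = 0$ from Proposition \ref{hatLn}(i) and rewrite $\hatLn(\P(V)) = \P(\hatLn(V))$. By the explicit form of the projector \eqref{Proj}, this reduces the problem to identifying the three quantities $\Z_W$, $\Q_W$ and $\uwidehat{\X}_W$ associated (in the sense of Terminology \ref{TerVs}) with the tensor $W := \hatLn(V)$. Once these are in hand, substitution into
\begin{align*}
\P(W)_{ab} = 2\ell_{(a}\uwidehat{\X}_{W,b)} + \Q_W\ell_a\ell_b + \tfrac{1}{\n-1}\bigl(\Q_W\ll + \Z_W\bigr)\gamma_{ab}
\end{align*}
delivers the desired formula.

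The trace follows immediately from Proposition \ref{hatLn}(ii): $\Z_W = \pounds_{\upn}(\trP V) = \upn(\Z)$. For the scalar $\Q_W = W(\upn,\upn)$, the key observation is that every summand in $\LL(V)$ (see \eqref{defL}) vanishes upon double contraction with $\upn$, because $\bU(\upn,\cdot)=0$, $\bsone(\upn)=0$, $\uwidehat{\X}(\upn)=0$ and $\gamma(\upn,\cdot)=0$. Combined with $\pounds_{\upn}\upn = 0$, this gives $\Q_W = (\pounds_{\upn}V)(\upn,\upn) = \pounds_{\upn}(V(\upn,\upn)) = \upn(\Q)$.

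The main step is the covector $\uwidehat{\X}_W = \iota_n W - \Q_W\ellc$ (from \eqref{unX} applied to $W$). For $\iota_n(\pounds_{\upn}V)$, commuting $\iota_n$ and $\pounds_{\upn}$ (since $\pounds_{\upn}\upn = 0$) and then using \eqref{unX} and \eqref{poundsell}, one gets $\pounds_{\upn}(\uwidehat{\X} + \Q\ellc) = \pounds_{\upn}\uwidehat{\X} + \upn(\Q)\ellc + 2\Q\bsone$. For $\iota_n\LL(V)$, the last two summands of \eqref{defL} are again killed by $\iota_n$, and in the first summand $-2P^{cd}V_{c(a}\U_{b)d}$ only the half $-P^{cd}V_{ca}\vecn^a\U_{bd}$ survives after contracting with $\vecn^a$ (since $\U_{ad}\vecn^a = 0$); using $V_{ca}\vecn^a = \uwidehat{\X}_c + \Q\ell_c$ together with $P^{cd}\ell_c = -\ll\vecn^d$ (from \eqref{prod3}) and $\U_{bd}\vecn^d = 0$ reduces this to $-P^{cd}\uwidehat{\X}_c\U_{bd}$. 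Subtracting $\Q_W\ellc=\upn(\Q)\ellc$ cancels the $\upn(\Q)\ellc$ contribution, leaving
\begin{align*}
\uwidehat{\X}_{W,b} = \pounds_{\upn}\uwidehat{\X}_b + 2\Q\sone_b - P^{cd}\uwidehat{\X}_c\U_{bd}.
\end{align*}
Plugging $\Z_W$, $\Q_W$ and $\uwidehat{\X}_W$ into the expression for $\P(W)$ displayed above recovers \eqref{hatLnP(V)}. The only (mild) obstacle is the contraction $\iota_n\LL(V)$, where the kernel identity $\U_{bd}\vecn^d=0$ and the algebraic relations \eqref{prod1}--\eqref{prod4} must be combined with care; invoking Proposition \ref{hatLn} at the outset conveniently avoids having to Lie-differentiate $\gamma$, $\ll$ or the individual pieces of $\P(V)$ directly.
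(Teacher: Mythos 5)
Your proposal is correct and follows essentially the same route as the paper's proof: both invoke $[\hatLn,\P]=0$ to reduce the claim to computing $\trP(\hatLn V)$, $\hatLn(V)(\upn,\upn)$ and $\iota_\upn\hatLn(V)$, with the trace handled by Proposition \ref{hatLn}(ii) and the contractions by the same algebraic identities ($\bU(\upn,\cdot)=0$, $P^{cd}\ell_c=-\ll\vecn^d$, $\pounds_\upn\ellc=2\bsone$). The details of your contraction of $\LL(V)$ with $\upn$ check out, so nothing further is needed.
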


\begin{proof}
Since $\hatLn(\P(V)) = \P(\hatLn(V))$ the result will follow as an application 
of Proposition \ref{defProjector} with $V_{ab}$ replaced by $\hatLn(V)_{ab}$. Thus, we need
to compute
\begin{align*}
(\hatLn(V))_{ab} \vecn^b & = 
(\pounds_{\upn} V_{ab}) \vecn^b + \LL(V)_{ab} \vecn^b \\
& = 
\pounds_{\upn} \left ( V_{ab} \vecn^b \right ) 
- P^{cd} V_{cb} U_{ad} \vecn^b \\
& = \pounds_{\upn} \left ( \uwidehat{\X}_a + \Q \ell_a \right )
- P^{cd} U_{ad} \left ( \uwidehat{\X}_c + \Q \ell_c \right ) \\
& = 
\pounds_{\upn} \uwidehat{\X}_a + \upn(\Q) \ell_a + 2 \Q \sone_{a} 
- P^{cd} U_{ad} \uwidehat{\X}_c,
\end{align*}
where in the second equality we used
$\LL(V)_{ab} \vecn^b = - P^{cd} V_{cb} U_{ad} \vecn^b$ which follows at once from 
the definition (\ref{defL}) of $\LL(V)$.
Thus
\be
\hatLn(V)_{ab} \vecn^a \vecn^b =  \upn (\Q)
\en
and (recall \eqref{unX})
\be
\uwidehat{\overline{\hatLn(V)}}_a =
    \hatLn(V)_{ab} \vecn^b - (\vecn^b \vecn^c \hatLn(V)_{bc}) \ell_a = 
\pounds_{\upn} \uwidehat{\X}_a + 2 \Q \sone_a 
- P^{cd}  U_{ad} \uwidehat{\X}_c. 
\en
Moreover, property (ii) of Proposition \ref{hatLn} gives
\be
\hatLn(V)_{ab} P^{ab} = \pounds_{\upn} (V_{ab} P^{ab} ) = \upn (\Z).
\en
Inserting into (\ref{Proj}) with $V \longrightarrow \hatLn(V)$ 
gives (\ref{hatLnP(V)}) at once.
\end{proof}

We have already shown that $\hatLn$ is a well-defined map on the space
of symmetric tensor fields in the kernel of
$\tau$ (by item (i) in Proposition \ref{hatLn}). In the next lemma we introduce another differential operator that also takes values on the kernel of $\tau$. In this case, the operator acts on covectors orthogonal to $\upn$.

\begin{lemma}
\label{defO}
Consider $\metdata$ null metric hypersurface data. Let
$\bmomega\in \XX^{\star} (\N)$ be orthogonal to $\upn$, i.e.
$\bmomega(n)=0$. 
Then the derivative operator
\begin{align}
\O(\bmomega)_{ab} := \nablao_a \omega_b + \nablao_b \omega_a - 2 \ell_{(a} 
\left ( \pounds_{\upn} \omega_{b)} - 2 \U_{b)f} P^{fc} \omega_c \right )
- \frac{2}{\n-1} P^{cf} \nablao_c \omega_f \gamma_{ab} \label{defOexp}
\end{align}
takes values in the kernel of the  energy-momentum map.
\end{lemma}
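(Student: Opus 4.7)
The plan is to verify directly the two defining conditions of the kernel $\K_0$ given in Proposition \ref{Rangetau}, namely that $\O(\bmomega)_{ab} \vecn^b = 0$ and $P^{ab}\O(\bmomega)_{ab} = 0$. Since $\O(\bmomega)$ is manifestly symmetric and smooth (it is built from $\nablao$ and $\pounds_{\upn}$), these two algebraic conditions suffice.

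First I will collect three preliminary contractions that are the workhorses of the computation. Since $\omega_b \vecn^b = 0$ by hypothesis, the Leibniz rule together with \eqref{nablaon} yields
\begin{equation*}
\vecn^b \nablao_a \omega_b = -\omega_b \nablao_a \vecn^b = -\omega_b\bigl(\vecn^b \sone_a + P^{bf} \U_{af}\bigr) = - P^{bf}\omega_b \U_{af}.
\end{equation*}
Writing the Lie derivative in terms of the torsion-free connection $\nablao$ gives $\vecn^b \nablao_b \omega_a = \pounds_{\upn}\omega_a - \omega_b \nablao_a \vecn^b = \pounds_{\upn}\omega_a - P^{bf}\omega_b\U_{af}$. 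Finally, $\pounds_{\upn}(\omega_b\vecn^b) = 0$ combined with $\pounds_{\upn}\vecn^a=0$ shows $\vecn^b \pounds_{\upn}\omega_b = 0$, and hence using also $\U_{bf}\vecn^b = 0$ one gets $\vecn^b\bigl(\pounds_{\upn}\omega_b - 2\U_{bf}P^{fc}\omega_c\bigr)=0$.

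Next I would contract $\O(\bmomega)_{ab}$ with $\vecn^b$. The last term of \eqref{defOexp} dies because $\gamma_{ab}\vecn^b = 0$. The sum of the first two terms becomes, by the two preliminary identities, $-2P^{bf}\omega_b \U_{af} + \pounds_{\upn}\omega_a$. In the bracket term, $\ell_{(a} X_{b)}\vecn^b = \tfrac{1}{2}X_a$ because $X_b\vecn^b=0$ (third preliminary) and $\ell_b\vecn^b=1$, with $X_b := \pounds_{\upn}\omega_b - 2\U_{bf}P^{fc}\omega_c$. Putting everything together the contributions cancel pairwise and $\O(\bmomega)_{ab}\vecn^b=0$.

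For the second condition I would contract $\O(\bmomega)_{ab}$ with $P^{ab}$. The first two terms give $2P^{ab}\nablao_a\omega_b$. The bracket term produces $-2 P^{ab}\ell_a X_b = 2\ll \vecn^b X_b = 0$, using $P^{ab}\ell_a = -\ll\vecn^b$ and the third preliminary once more. The final term gives $-\tfrac{2}{\n-1}P^{cf}\nablao_c\omega_f\cdot(\n-1) = -2P^{cf}\nablao_c\omega_f$, which cancels the first contribution exactly. Hence $P^{ab}\O(\bmomega)_{ab}=0$ and $\O(\bmomega)\in\K_0$ as required. The only subtlety in the whole argument is keeping track of which contractions of $\omega$, $U$, $\bsone$ and $\upn$ vanish; once the three preliminary identities are in place the verification is a bookkeeping exercise.
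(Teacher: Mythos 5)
Your proof is correct. Every step checks out: the three preliminary contractions are right (the first two follow from $\omega_b\vecn^b=0$ together with \eqref{nablaon}, the third from $\pounds_{\upn}\upn=0$ and $\bU(\upn,\cdot)=0$), and the two verifications $\O(\bmomega)_{ab}\vecn^b=0$ and $P^{ab}\O(\bmomega)_{ab}=0$, combined with the characterization of $\K_0$ in Proposition \ref{Rangetau}, do establish the claim.

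The route is slightly different from the paper's, though both hinge on the same key identity $\vecn^b\bigl(\nablao_a\omega_b+\nablao_b\omega_a\bigr)=\pounds_{\upn}\omega_a-2P^{bf}\omega_b\U_{af}$. The paper applies the canonical decomposition of Proposition \ref{Decom} to the tensor $T_{ab}:=2\nablao_{(a}\omega_{b)}$: from the identity above one reads off $\iota_\upn T$ and $T(\upn,\upn)=0$, so formula \eqref{Form3} exhibits $T$ as the bracket term plus the trace term plus a remainder $\VH\in\K_0$, and $\O(\bmomega)$ is \emph{by construction} that remainder. In that approach the trace condition $P^{ab}\O(\bmomega)_{ab}=0$ comes for free from the uniqueness of the decomposition, whereas you must verify it by hand (which you do correctly, using $P^{ab}\ell_a=-\ll\vecn^b$ and $P^{ab}\gamma_{ab}=\n-1$). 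Conversely, your approach is more elementary in that it bypasses the decomposition machinery entirely and reduces the lemma to checking the two algebraic conditions defining $\K_0$; the price is a slightly longer bookkeeping computation. Both proofs are complete and of comparable length.
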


\begin{proof}
We want to apply the general decomposition (\ref{Form3}) to the tensor
$T_{ab} \defi \nablao_a \omega_b + \nablao_b \omega_a$. The contraction with $\vecn^b$ gives
\begin{align*}
 T_{ab} \vecn^b & = \left ( \nablao_a \omega_b + \nablao_b \omega_a \right ) \vecn^b \\
& = - \omega_b \nablao_a \vecn^b + \vecn^b \nablao_b \omega_a = \pounds_{\upn} \omega_a - 
2 \omega_b \nablao_a \vecn^b  \\
& = \pounds_{\upn} \omega_a  - 2 \omega_b \left (  \vecn^b \sone_a + 
P^{bf} \U_{af} \right ) \\
& = \pounds_{\upn} \omega_a  - 2 \omega_b P^{bf} \U_{af},
\end{align*}
where we used $\bmomega (\upn) =0$ and 
(\ref{nablaon}) has been inserted in the fourth step.
Consequently $T_{ab} \vecn^a \vecn^b=0$
and the decomposition (\ref{Form3}) gives
\begin{align*}
\nablao_a \omega_b + \nablao_b \omega_a = & 
2\ell_{(a} \left ( \pounds_{\upn} \omega_{b)} - 2 \U_{b)f} P^{fc} \omega_c \right ) 
+ \frac{1}{\n-1} P^{cf} \left ( \nablao_c \omega_f 
                                      +\nablao_f \omega_c \right ) \gamma_{ab}+ \O^H(\bmomega)_{ab}
\end{align*}
where $\O^H$ lies in the kernel of $\tau$. 
Since by definition $\O(\bmomega)= \O^H$ the result follows.
\end{proof}

\section{The constraint tensor: decomposition in irreducible components}
\label{ConstTensorDecom}

So far all the results have only required a null metric hypersurface data set. 
This geometric notion encodes at the abstract level the ``intrinsic'' geometry of a null hypersurface, by which we mean the metric of the ambient space evaluated at the points on the hypersurface. The connection is made via the notion of 
embedded metric hypersurface data \cite{MarsGRG, Mars2020}.

\begin{definition}[Embedded null metric hypersurface data]
  \label{embed1}
  Let $\metdata$ be null metric hypersurface data  of dimension $\n$ and $(M,g)$ be a pseudo-riemannian manifold of dimension $\n+1$. We say that $\metdata$ is embedded in $(M,g)$ with embedding
$\Phi$ and rigging $\rig$ if there exists an embedding $\Phi: \N \longrightarrow 
M$ and a vector field $\rig$ along $\Phi(\N)$ everywhere transversal to
$\Phi(\N)$ (i.e. a so-called {\bf rigging vector}) such that
\begin{align*}
\Phi^{\star} (g)= \gamma, \qquad
\Phi^{\star} ( g (\rig,\cdot) ) = \ellc, \qquad
\Phi^{\star} (g(\rig,\rig)) = \ll.
\end{align*}
\end{definition}
An immediate consequence of the definition is that the signature of $g$ must be the same as the signature of the tensor $\A$ defined in \eqref{defA}. In fact, in the embedded case both tensors are equivalent. It is in this sense that 
the null metric hypersurface data encodes the ambient metric at an abstract level.  Since the tensors $\{n, P\}$ arise as elements in the decomposition of the
contravariant tensor $\A^{\sharp}$, the contravariant metric $g^{\sharp}$ can  be expressed in terms of $\upn$ and $P$. Let $\{e_a\}$ be a (local) basis of $T \N$
so that $\{\rig, \Phi_{\star}(e_a)\}$ is a (local basis) of $T M$ along the hypersurface
$\Phi(\N)$. Then one can express (see \cite{Gabriel1} for details)
  \begin{align}
    \label{contra}
    g^{\sharp} \stackrel{\Phi(\N)}{=}
    2 \rig \otimes_s \nu+ P^{ab} \Phi_{\star} (e_a) \otimes \Phi_{\star} (e_b)
  \end{align}
  where $\nu := \Phi_{\star}(\upn)$  is the null normal to $\Phi(\N)$ satisfying
  $g(\nu,\rig)=1$. It is easy to show \cite{Mars2020} that if $\metdata$ is embedded in an 
ambient space with embedding $\Phi$ and rigging $\rig$, then for all $(z,\gauge)
\in \G$, the gauge transformed data
$\{ \N, \G_{(z,\gauge)}(\gamma),
\G_{(z,\gauge)}(\ellc),
\G_{(z,\gauge)}(\ll)\}$ is also embedded in $(M,g)$ with the same embedding and rigging
\begin{align}
  \G_{(z,\gauge)} (\rig) := z \left ( \rig + \Phi_{\star} (\gauge) \right ).
\label{gaugerig}
\end{align}
Now, null hypersurfaces in an ambient space possess, in addition to
an ``intrinsic'' geometry (in the sense above) also an ``extrinsic'' one in the sense of encoding 
first transversal derivatives of the ambient metric  at the hypersurface. 
At the abstract level this leads \cite{MarsGRG, Mars2020} to the definition 
of {\bf null  hypersurface data}, which we recall next, together with its appropriate notion of {\em embeddeness}.
\begin{definition}
A {\bf null hypersurface data} is a $5$-tuple $\hypdata$ 
where $\metdata$ is null metric hypersurface data and $\bY$ is a smooth
symmetric $(0,2)$-tensor field on $\N$ on which the gauge group acts as
\begin{align}
  \G_{(z,\gauge)} \bY = z \bY + \ellc \otimes_s dz
  + \frac{1}{2} \pounds_{z\gauge} \gamma. \label{gaugebY}
\end{align}
We say that $\hypdata$ of dimension $\n$ is embedded in a 
pseudo-riemannian manifold $(M,g)$ of dimension $\n+1$ with embedding $\Phi$ and
rigging $\rig$, provided the metric part of the data
$\metdata$ is embedded in $(\M,g)$ in the sense of Definition \ref{embed1}
and, moreover,
\begin{align*}
\frac{1}{2} \Phi^{\star} \left  ( \pounds_{\rig} g \right ) = \bY.
\end{align*}
\end{definition}
As before, if $\hypdata$ is embedded with embedding $\Phi$ and
rigging $\rig$, the gauge transformed data with gauge parameters $(z,\gauge)$
is also embedded with the same embedding and rigging given by \eqref{gaugerig}.

The connection $\nablao$  defined intrinsically on any null metric hypersurface data (and hence also in any  hypersurface data) can be related, in the embedded case, to the Levi-Civita covariant derivative of $(M,g)$ along tangential directions of $\Phi(\N)$. The specific result is
\cite{MarsGRG, Mars2020}

\begin{lemma}
Let $\hypdata$ be null hypersurface data embedded in $(M,g)$ with embedding $\Phi$ and rigging $\rig$. Let $\nabla$ be the Levi-Civita derivative of $(M,g)$. Then, for any two vector fields $X,W \in \XX(\N)$ it holds
\begin{align*}
  \Phi_{\star} ( \nablao_X W ) = \nabla_{\Phi_{\star} (X)} \Phi_{\star} (W)
  - \Phi_{\star} ( \bY(X,W)) \nu - \Phi_{\star} (\bU(X,W))  \rig
 \end{align*} 
where $\nu$ is the unique normal vector to $\Phi(\N)$ satisfying 
$g (\nu,\rig)=1$
and for any function $f \in \FF(\N)$ we define $\Phi_{\star} (f): \Phi(\N)
\rightarrow \mathbb{R}$ by $\Phi_{\star}(f) \circ \Phi = f$.
\end{lemma}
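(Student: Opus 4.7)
The plan is to invoke the uniqueness part of Proposition \ref{ExistConn}. Define a candidate map $\hat{\nabla} : \XX(\N) \times \XX(\N) \to \XX(\N)$ by setting $\Phi_\star(\hat{\nabla}_X W)$ equal to the right-hand side of the claimed identity. I will verify (a) that this right-hand side is tangent to $\Phi(\N)$, so that $\hat{\nabla}_X W$ is unambiguously defined via the injective $\Phi_\star$; (b) that $\hat{\nabla}$ is a torsion-free connection on $\N$; and (c) that $\hat{\nabla}$ satisfies both defining equations \eqref{Cond1}--\eqref{Cond2}. Uniqueness of the metric hypersurface connection then forces $\hat{\nabla} = \nablao$.

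For (a) I would use the direct-sum decomposition $T_q M = \mathbb{R}\,\rig \oplus \Phi_\star(T_{\Phi^{-1}(q)}\N)$ along $\Phi(\N)$, valid because $\rig$ is transverse. Since $g(\nu,\rig) = 1$ while $g(\nu, \Phi_\star Y) = \gamma(\upn,Y) = 0$ for every $Y \in \XX(\N)$, the projection onto $\mathbb{R}\,\rig$ is $V \mapsto g(\nu,V)\,\rig$. So tangentiality reduces to evaluating $g(\nu,\nabla_{\Phi_\star X}\Phi_\star W)$, which I would compute using the pullback identity $\Phi^\star(\pounds_\nu g) = \pounds_\upn \gamma = 2\bU$ together with $\nabla g = 0$ and $g(\nu,\Phi_\star Y)=0$. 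The outcome is proportional to $\bU(X,W)$ and cancels exactly the $\bU(X,W)\,\rig$ subtraction. The $\bY(X,W)\,\nu$ term is irrelevant to tangentiality because $\nu = \Phi_\star\upn$ is itself tangent to $\Phi(\N)$.

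For (b), $\mathbb{R}$- and $\FF(\N)$-linearity of $\hat{\nabla}$ in $X$ and the Leibniz rule in $W$ are inherited from $\nabla$ and the tensorial character of $\bU$ and $\bY$. Torsion-freeness follows from that of $\nabla$ and the symmetries of $\bU$ and $\bY$, since $\Phi_\star(\hat{\nabla}_X W - \hat{\nabla}_W X) = \nabla_{\Phi_\star X}\Phi_\star W - \nabla_{\Phi_\star W}\Phi_\star X = [\Phi_\star X,\Phi_\star W] = \Phi_\star[X,W]$. For (c), I would compute $(\hat{\nabla}_X \gamma)(Z,W)$ from the definition of $\hat{\nabla}$ using $\gamma = \Phi^\star g$ and $\nabla g = 0$: the ambient terms cancel, the $\bY$-contributions drop out because $g(\nu,\Phi_\star Y)=0$, and the $\bU$-contributions assemble into the right-hand side of \eqref{Cond1}. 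Similarly, starting from $\ellc(Y) = g(\rig,\Phi_\star Y)$, the symmetrized combination $(\hat{\nabla}_X\ellc)(Z) + (\hat{\nabla}_Z\ellc)(X)$ is computed using the pullback identity $\Phi^\star(\pounds_\rig g) = 2\bY$ and $g(\rig,\rig) = \ll$; after the appropriate cancellations one is left with precisely $-2\,\ll\,\bU(X,Z)$. The main obstacle is the careful sign bookkeeping in step (a); once the two Lie-derivative identities $\Phi^\star\pounds_\nu g = 2\bU$ and $\Phi^\star\pounds_\rig g = 2\bY$ are used systematically, the rest is a direct computation from the definitions.
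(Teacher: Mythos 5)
Your strategy --- define the candidate connection from the right-hand side, check tangentiality, torsion-freeness and the two conditions \eqref{Cond1}--\eqref{Cond2}, then invoke the uniqueness part of Proposition \ref{ExistConn} --- is sound and is the natural route; note that the paper itself gives no proof of this lemma (it quotes it from \cite{MarsGRG, Mars2020}), so there is no in-paper argument to compare against.

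However, the ``careful sign bookkeeping'' you defer to step (a) is exactly where a literal execution of your plan fails, and it is worth being explicit about why. Differentiating $g(\nu,\Phi_{\star}W)=0$ along $\Phi_{\star}X$ and using $\Phi^{\star}(\pounds_{\nu}g)=\pounds_{\upn}\gamma=2\bU$ gives
\begin{align*}
g\big(\nu,\nabla_{\Phi_{\star}X}\Phi_{\star}W\big)=-g\big(\nabla_{\Phi_{\star}X}\nu,\Phi_{\star}W\big)=-\bU(X,W),
\end{align*}
so the transverse ($\rig$-) component of $\nabla_{\Phi_{\star}X}\Phi_{\star}W$ is already $-\bU(X,W)\,\rig$. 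Subtracting a further $\bU(X,W)\,\rig$, as the displayed identity instructs, therefore doubles the transverse part rather than cancelling it: the right-hand side as printed is not tangent to $\Phi(\N)$, and your candidate $\hat\nabla$ is not defined. The same sign shows up in step (c): proceeding formally with the printed signs one finds $(\hat\nabla_X\gamma)(Z,W)=+\bU(X,Z)\ellc(W)+\bU(X,W)\ellc(Z)$ and $(\hat\nabla_X\ellc)(Z)+(\hat\nabla_Z\ellc)(X)=4\bY(X,Z)+2\ll\,\bU(X,Z)$, neither of which matches \eqref{Cond1}--\eqref{Cond2}. The identity your method actually establishes --- and the only one consistent with \eqref{Cond1}--\eqref{Cond2} and with the form given in \cite{MarsGRG, Mars2020} --- is
\begin{align*}
\nabla_{\Phi_{\star}X}\Phi_{\star}W=\Phi_{\star}(\nablao_XW)-\Phi_{\star}(\bY(X,W))\,\nu-\Phi_{\star}(\bU(X,W))\,\rig,
\end{align*}
i.e.\ the two correction terms must be \emph{added}, not subtracted, on the right-hand side of the statement as displayed. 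So: right method, but your assertion that the cancellations come out ``exactly'' is false for the statement as literally printed; carrying out the signs honestly is precisely what reveals that the displayed formula has the correction terms on the wrong side of the equality.
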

In a  null hypersurface data set one can define a geometrically relevant symmetric
$(0,2)$-tensor called {\bf constraint tensor}. This tensor is defined at the level of curvature and involves both the Ricci tensor of the connection $\nablao$ as well as covariant derivatives or products of the rest of the hypersurface data terms.
Our convention for the curvature tensor $\Riem^{D}$ of a connection $D$ is
\begin{align*}
  \Riem^D (\bm{\omega}, X,Z,W) := \bm{\omega} \left (
    D_Z D_W X - D_W D_Z X - D_{[Z,W]} Z \right )
\end{align*}
where ${\bf \omega}$ is a covector and $X,Z,W$ vector fields and $[Z,W] := \pounds_ZW$ is the Lie bracket of two vector fields. The Ricci tensor of $D$, denoted by
$\Ricc^D$, is the trace of $\Riem^D$ in the first and third indices. For the connection $\nablao$ on $\N$ we shall use
$\Riemo$ and $\Ricco$ respectively. Note that $\Ricco$ is not a symmetric tensor in general. For the Levi-Civita connection $\nabla$
in $(M,g)$ we use $\Riem_g$ and $\Ricc_g$.

For notational simplicity we shall give special names to $\iota_n \bY$ and
$\Q_{\bY}$, so we define (note the relative sign between $\kappa_n$ and
$\Q_{\bY}$)
\begin{align*}
\textcolor{black}{\brone := \iota_n \bY, \qquad \kappa_n := - \Q_{\bY} = - \bY(\upn,\upn).}
\end{align*}
The constraint tensor has been defined in \cite{Gabriel1} in the context of
so-called characteristic hypersurface data, which is a particular case of
null  hypersurface data, and in \cite{ManzanoMarsConstraint} for general hypersurface data. In the null case the definition is as follows
\begin{definition}[Constraint tensor]
  Let $\hypdata$ be null hypersurface data. The constraint tensor, denoted by
  $\R$ is the symmetric $(0,2)$-tensor field
      \textcolor{black}{
  \begin{align}    
      \R_{ab} = & \Ricco_{(ab)} - 2
    \pounds_{\upn} \Y_{ab}
    - \left ( 2\kappa_n+ \mbox{tr}_P \bU \right ) \Y_{ab}
    + \nablao_{(a} \left ( \sone_{b)}  + 2 \rone_{b)} \right ) \nonumber \\
    & - 2 \rone_a \rone_b + 4 \rone_{(a} \sone_{b)}
    - \sone_a \sone_b
    - (\mbox{tr}_P \bY) \U_{ab}
    + 2 P^{cd} \U_{d(a} \left ( 2 \Y_{b)c} + \F_{b)c} \right ).
\label{ConstraintTensor}
  \end{align}}
  \end{definition}
The  importance of the constraint tensor lies in the fact that
it encodes, at the abstract level, the information of the Ricci tensor along tangential directions of any embedded null hypersurface. Specifically we have the following result
  \cite{Gabriel1, ManzanoMarsConstraint}.
  \begin{proposition}
    Let $\hypdata$ be null hypersurface data embedded in $(M,g)$ with embedding
    $\Phi$ and rigging $\rig$. Then the following identity holds
    \begin{align*}
      \Phi^{\star} ( \Ricc_g) = \R.
    \end{align*}
  \end{proposition}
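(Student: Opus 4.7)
The plan is to prove this identity by a Gauss--Codazzi--Ricci type calculation adapted to the null embedding, where the role played by a unit normal in the non-degenerate case is taken here by the pair consisting of the rigging $\rig$ and the null normal $\nu := \Phi_{\star}(\upn)$, normalised by $g(\nu,\rig)=1$. Since the identity is purely tangential--tangential, rather than attempting to reconstruct the whole ambient Riemann tensor from the hypersurface data I would only work out those combinations that survive the trace taken with the contravariant metric~(\ref{contra}). The basic tool throughout is the connection identity $\Phi_{\star}(\nablao_X W) = \nabla_{\Phi_{\star}X}\Phi_{\star}W - \Y(X,W)\nu - \U(X,W)\rig$ recalled above, together with the explicit expressions (\ref{nablaogamma})--(\ref{nablaoP}) and the auxiliary formulas for $\nabla_{\Phi_{\star}X}\nu$ and $\nabla_{\Phi_{\star}X}\rig$ that these imply once one invokes $\gamma = \Phi^{\star}g$, $\ellc = \Phi^{\star}(g(\rig,\cdot))$, $\ll = \Phi^{\star}(g(\rig,\rig))$ and $\bY = \tfrac{1}{2}\Phi^{\star}(\pounds_{\rig}g)$.

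First I would expand, for $X,Z \in \XX(\N)$, the trace
\[
\Ricc_g(\Phi_{\star}X,\Phi_{\star}Z) = 2\,\Riem_g\bigl(\rig,\Phi_{\star}X,\nu,\Phi_{\star}Z\bigr) + P^{ab}\,\Riem_g\bigl(\Phi_{\star}e_a,\Phi_{\star}X,\Phi_{\star}e_b,\Phi_{\star}Z\bigr),
\]
using (\ref{contra}) and the symmetries of $\Riem_g$. This splits the problem into a Gauss piece (the $P^{ab}$-trace) and a mixed rigging/normal piece. For the Gauss piece I would iterate the connection identity to evaluate $\nabla\nabla\Phi_{\star}X$, antisymmetrise to assemble $\Riem_g$, and then trace with $P^{ab}\,\Phi_{\star}e_a \otimes \Phi_{\star}e_b$. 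The leading contribution $P^{ab}\Riemo(e_a,X,e_b,Z)$ produces $\Ricco_{(ab)}$ after symmetrisation in $(X,Z)$, while the subleading quadratic terms yield $-(\mbox{tr}_P \bY)\U_{ab}$, $2P^{cd}\U_{d(a}\Y_{b)c}$, and the cross-term $2P^{cd}\U_{d(a}\F_{b)c}$ via $2\bF = d\ellc$.

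For the mixed piece I would apply the connection identity once to $\Phi_{\star}X$ and once more to resolve $\nabla_{\rig}\nu$ and $\nabla_{\nu}\rig$ into their $\{\rig,\nu,\Phi_{\star}e_a\}$ components, extracting $\rone_a$ from $\brone = \iota_\upn\bY$ and $\kappa_n$ from $-\Y(\upn,\upn)$. Symmetrising in $(X,Z)$ then produces the Lie derivative term $-2\pounds_{\upn}\Y_{ab}$, the derivative block $\nablao_{(a}(\sone_{b)}+2\rone_{b)})$, the quadratic covector terms $-2\rone_a\rone_b + 4\rone_{(a}\sone_{b)} - \sone_a\sone_b$, and the remaining $-2\kappa_n\Y_{ab}$ contribution coming from expanding $\nabla_{\nu}\rig$ against its rigging component.

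The main obstacle is purely algebraic bookkeeping: every term in the definition of $\R$ must be matched, which requires iterated use of the contravariant decomposition~(\ref{contra}) to resolve each intermediate ambient vector into its tangential, rigging and null components, together with repeated substitutions via (\ref{prod1})--(\ref{prod4}) and (\ref{nablaogamma})--(\ref{nablaoP}) to rewrite Lie derivatives of the data as $\nablao$-covariant expressions, and use of $d\bsone = \pounds_{\upn}\bF$ to reorganise the antisymmetric contributions. Conceptually the identity is just the detached translation of the classical Gauss--Codazzi--Mainardi--Ricci decomposition of $\Ricc_g$ along a null hypersurface, and it has already been established at this level of generality in~\cite{Gabriel1,ManzanoMarsConstraint}, to which the paper appeals.
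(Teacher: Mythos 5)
The paper does not actually prove this proposition: it is quoted as a known result with a pointer to \cite{Gabriel1, ManzanoMarsConstraint}, so there is no internal proof to compare yours against. Your strategy --- trace $\Riem_g$ against the decomposition \eqref{contra} of $g^{\sharp}$, split into a Gauss piece and a mixed rigging/normal piece, and convert everything to abstract data via the connection identity $\Phi_{\star}(\nablao_X W)=\nabla_{\Phi_{\star}X}\Phi_{\star}W-\Y(X,W)\nu-\U(X,W)\rig$ --- is indeed the standard route and is essentially what the cited references carry out. Your splitting of the trace is also correct once one symmetrises in $(X,Z)$, which you do note.

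That said, as a proof your proposal is only a plan. The entire content of the proposition is that the bookkeeping you defer actually closes and reproduces every term of \eqref{ConstraintTensor} with the right coefficients --- in particular the precise combination $-2\pounds_{\upn}\Y - (2\kappa_n+\trP\bU)\Y + \nablao_{(a}(\sone_{b)}+2\rone_{b)}) - 2\rone_a\rone_b + 4\rone_{(a}\sone_{b)} - \sone_a\sone_b - (\trP\bY)\U + 2P^{cd}\U_{d(a}(2\Y_{b)c}+\F_{b)c})$. You assert which terms each piece produces but verify none of them; for instance, extracting the second-transverse-derivative cancellation that makes the result depend on $\rig$ only through $\bY$ (and not through $\nabla\nabla\rig$) is the one genuinely non-obvious step, and it is not addressed. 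Since you ultimately fall back on the same citation the paper uses, your write-up establishes nothing beyond what the paper already records; to count as a proof it would need the mixed-piece computation carried out explicitly, or at minimum the verification that all non-tangential derivatives of the rigging cancel.
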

  The constraint tensor is a symmetric $(0,2)$-tensor, so all the results derived in the previous sections can be applied. In particular we intend to rewrite the tensor using the decomposition of $\bY$ according to Proposition \ref{Decom}
  and using the differential operators introduced in the previous section.

By Proposition \ref{Decom}, the tensor $\bY$ can be decomposed  as
\textcolor{black}{
\begin{align}
\Y_{ab} = 2 \ellc_{(a} \XY_{b)} 
- \kappa_n \left ( \ell_a \ell_b + \frac{1}{\n-1} \ll \gamma_{ab} \right )
+ 
\frac{\ZY}{\n-1}  \gamma_{ab} 
+ \YH_{ab}, \label{decomY}
\end{align}}
where the covector $\XY$ (c.f. Terminology
  \ref{TerVs}) is related to $\brone$ by means of (c.f. \eqref{unX})
  \textcolor{black}{ \begin{align}
    \brone = \XY - \kappa_n \ellc.
\label{defXY}
  \end{align}}
The tensor $\bU$ being symmetric can also be decomposed uniquely according to Proposition \ref{Decom}. Since $\bU(n,\cdot)=0$, the only terms that survive are the trace term and the component in the kernel, see Remark \ref{orthogonalcase}
\begin{align}
 \bU = \frac{\mbox{tr}_P \bU}{\n-1}  \gamma + \bUH.
\label{decombU}
\end{align}

In the decomposition of Proposition \ref{Decom} all covectors involved
are orthogonal to $\upn$ except for $\ellc$. This suggests the convenience of introducing a derivative on scalar functions $f \in \FF(\N)$ that respects this property. So, we define $D f$ by means
$D f := d f - \upn(f) \ellc$ so that the 
contraction with $\upn$ gives $\iota_{\upn}  D f =0$. Thus, we will replace $\nablao$ derivatives acting on scalars in terms
of $D$ derivatives by means of
\begin{align}
\nablao f= D f + \upn(f) \ellc. \label{nablaD}
\end{align}

\begin{theorem}
  \label{decomConstThm}
  Let $\metdata$ by null hypersurface data. Then the constraint tensor
  $\R$ admits the following decomposition
\textcolor{black}{
  \begin{align}
 \R_{ab} = & \Ricco_{(ab)}  + \nablao_{(a} \sone_{b)} + \sone_a \sone_b
+ 2 P^{cd} \U_{d(a} \F_{b)c} \nonumber \\
& + \ell_{(a} \left ( -2 \pounds_\upn   \XY_{b)} - 2 D_{b)} \kappa_n +
4 \kappa_n \sone_{b)}
                - 2 (\ZU) \XY_{b)}  \right ) 
+ \kappa_n (\ZU)  \tgamma_{ab} 
\nonumber \\
& + \frac{2}{\n-1} \left (  \upn( \kappa_n \ll - \ZY)  
+ \trP \left ( \nablao \,\, \bm{\XY} - (\bm{\XY}+  \bsone) \otimes (\bm{\XY} + \bsone) \right )
+ (\kappa_n \ll - \ZY) ( \kappa_n + \ZU)
 \right ) \gamma_{ab} \nonumber \\
        & - 2 \hatLn \YH_{ab} +  \O(\bm{\XY})_{ab} - 2 \left ( (\bm{\XY} + \bsone)
        \otimes (\bm{\XY} + \bsone) \right ){}^{\!\!H}_{ab}
        - \left ( 2 \kappa_n + \ZU \right )
        \YH_{ab} + \left ( 2 \ll \kappa_n -\ZY 
        \right )  \UH_{ab}.  \label{decomConst}
  \end{align}}
\end{theorem}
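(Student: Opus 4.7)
The approach is a direct substitution-and-rearrangement. Starting from the explicit formula \eqref{ConstraintTensor} for $\R_{ab}$, I insert the canonical decompositions \eqref{decomY} of $\bY$ and \eqref{decombU} of $\bU$, then reorganize the resulting expression so that each term lands in its natural slot in the decomposition of Proposition \ref{Decom}: the $\gamma$-trace, the $\tgamma$ term, the $\ellc\otimes_s(\cdot)$ term, or the kernel $\K_0$. Throughout, I use the identifications $\brone = \XY - \kappa_n \ellc$ coming from \eqref{defXY}, together with $\Q_{\bY} = -\kappa_n$ and (in the notation of Terminology \ref{TerVs} applied to $V = \bY$) $\uwidehat{\X} = \XY$, $\Z = \ZY$, $\VH = \YH$.

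First I handle the Lie derivative term by writing $-2\pounds_\upn \bY = -2\hatLn(\bY) + 2\LL(\bY)$. The commutativity of $\hatLn$ with $\P$ from Proposition \ref{hatLn}(i) lets me split $\hatLn(\bY) = \hatLn(\P(\bY)) + \hatLn(\YH)$. The first piece is given directly by Lemma \ref{hatLnP(V)Lemma}, producing the contributions $\ell_{(a}\pounds_\upn\XY_{b)}$, $\ell_{(a}\kappa_n\sone_{b)}$, $\ell_{(a}P^{cd}\U_{b)d}\XY_c$, $\ell_a\ell_b\upn(\kappa_n)$, and a piece of the $\gamma$-trace proportional to $\upn(\kappa_n\ll - \ZY)/(\n-1)$. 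The second piece $-2\hatLn(\YH)$ lies automatically in $\K_0$ by the corollary to Proposition \ref{hatLn}, and is carried along as one of the kernel contributions in the last line of \eqref{decomConst}. The residual $2\LL(\bY)$ is expanded via \eqref{defL}, generating cross terms quadratic in $\bU$, $\XY$, $\bsone$ and $\kappa_n$ to be collected later.

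Next I expand $\nablao_{(a}(\sone_{b)} + 2\rone_{b)})$. Writing $\brone = \XY - \kappa_n\ellc$ and using $\nablao\ellc = \bF - \ll\bU$ from \eqref{nablaoll} together with $\nablao \kappa_n = D\kappa_n + \upn(\kappa_n)\ellc$ from \eqref{nablaD}, the $\nablao(\kappa_n\ellc)$ part yields the $-2\ell_{(a}D_{b)}\kappa_n$ term, an $\ell_a\ell_b\upn(\kappa_n)$ piece that cancels the one generated in the previous step, and a $2\kappa_n\ll\U_{ab}$ algebraic piece (the antisymmetric $\bF$ contribution drops out after symmetrization). The term $\nablao_{(a}\sone_{b)}$ remains as displayed in the first line of \eqref{decomConst}. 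The remaining $2\nablao_{(a}\XY_{b)}$ is absorbed into $\O(\bm{\XY})$ via the definition \eqref{defOexp}, which returns compensating terms $+2\ell_{(a}\pounds_\upn\XY_{b)}$ (combining with the $-4\ell_{(a}\pounds_\upn\XY_{b)}$ from Lemma \ref{hatLnP(V)Lemma} to produce the net $-2\ell_{(a}\pounds_\upn\XY_{b)}$ appearing in \eqref{decomConst}), a cross term $-4\ell_{(a}\U_{b)f}P^{fc}\XY_c$, and a $\gamma$-trace $\tfrac{2}{\n-1}P^{cf}\nablao_c\XY_f\gamma_{ab}$ feeding the $\trP(\nablao\,\bm{\XY})$ piece of the third line of \eqref{decomConst}.

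Finally, the remaining algebraic blocks $-(2\kappa_n + \ZU)\bY$, $-(\ZY)\bU$, $-2\rone\otimes\rone + 4\rone\otimes_s\bsone - \bsone\otimes\bsone$ and $4P^{cd}\U_{d(a}\Y_{b)c}$ are expanded by substituting \eqref{decomY} and \eqref{decombU}. The cross terms between $\XY$ and $\bsone$ assemble naturally into $(\bm{\XY}+\bsone)\otimes(\bm{\XY}+\bsone)$, which is split by Corollary \ref{decomprod} into its $\gamma$-trace $\tfrac{1}{\n-1}P(\bm{\XY}+\bsone,\bm{\XY}+\bsone)\gamma$ and its kernel projection $\left((\bm{\XY}+\bsone)\otimes(\bm{\XY}+\bsone)\right)^H$. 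Products involving $\ellc\otimes\ellc$ and $\ll\gamma$ fuse into $\kappa_n(\ZU)\tgamma_{ab}$ by the definition of $\tgamma$ and Lemma \ref{tgamma}; the remaining coefficients of $\YH$ and $\UH$ aggregate into the last line of \eqref{decomConst}, while all remaining scalar multiples of $\gamma$ combine into the coefficient shown in the third line. The main obstacle is bookkeeping: uniqueness in Proposition \ref{Decom} guarantees a posteriori that each generated piece lives in a unique slot, but verifying that every cross term lands in the correct one, especially the mutual cancellation of $\ell_a\ell_b\upn(\kappa_n)$ between the first and second steps, and the compensation of $\pounds_\upn\XY$ contributions between $-2\hatLn\P(\bY)$ and $\O(\bm{\XY})$, is delicate.
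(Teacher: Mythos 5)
Your proposal is correct and follows essentially the same route as the paper's proof: rewrite $\brone$ via $\brone=\XY-\kappa_n\ellc$, trade $\pounds_\upn\bY$ for $\hatLn\bY$ and split off $\hatLn(\P(\bY))$ with Lemma \ref{hatLnP(V)Lemma}, absorb $2\nablao_{(a}\XY_{b)}$ into $\O(\bm{\XY})$, expand $\nablao(\kappa_n\ellc)$ with \eqref{nablaoll} and \eqref{nablaD}, apply Corollary \ref{decomprod} to $\bm{\XY}+\bsone$, and finish by converting $\ell_a\ell_b$ into $\tgamma_{ab}$ and inserting \eqref{decombU}. The only differences from the paper are the order of the substitutions and some loose signs in the prose, neither of which affects the argument.
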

\begin{proof}
  We start by rewriting the combination $-2 \rone_a \rone_b + 4 \rone_{(a} \sone_{b)}$ that appears in the constraint tensor in terms of $\XY$, c.f. \eqref{defXY},
\textcolor{black}{
  \begin{align*}
    -2 \rone_a \rone_b + 4 \rone_{(a} \sone_{b)}
    =  
    4 \kappa_n \ell_{(a} \left ( \XY_{b)} - \sone_{b)}  \right )
    - 2 \kappa_n^2 \ell_a \ell_b - 2 \XY_{a} \XY_b + 4 \XY_{(a} \sone_{b)}.
  \end{align*}}
  Using this and replacing $\pounds_\upn \bY$ in terms of
  the operator $\hatLn \bY$ introduced  in Proposition \ref{hatLn}
brings the constraint tensor \eqref{ConstraintTensor} into the form
\textcolor{black}{  \begin{align*}
    \R_{ab} = & \Ricco_{(ab)} - 2
                \hatLn \Y_{ab}
                + \frac{2 \kappa_n}{\n-1} \upn(\ll) \gamma_{ab}
    - \left ( 2\kappa_n+ \mbox{tr}_P \bU \right ) \Y_{ab}
    + \nablao_{(a} \left ( \sone_{b)}  + 2 \XY_{b)} - 2 \kappa_n \ell_{b)}  \right ) \\
              & - 2 \XY_a \XY_b + 4 \kappa_n \ell_{(a} \left ( \XY_{b)} - \sone_{b)}
        \right )
                - 2 \kappa_n^2 \ell_a \ell_b - 4 \XY_{(a} \sone_{b)}
    - \sone_a \sone_b
    - (\mbox{tr}_P \bY) \U_{ab}
    + 2 P^{cd} \U_{d(a} \F_{b)c}. 
  \end{align*}}
  We can replace $2\nablao_{(a} \XY_{b)}$ in terms of the derivative operator
  $\O$ defined in Lemma \ref{defO}, namely
  \begin{align}
    2 \nablao_{(a} \XY_{b)} = 
    \O(\bm{\XY})_{ab}
   +  \ell_{(a} 
\left ( 2 \pounds_{\upn} \XY_{b)} - 4 \U_{b)f} P^{fc} \XY_c \right )
    + \frac{2}{\n-1} (P^{cf} \nablao_c \XY_f) \gamma_{ab}
    \label{t1}
\end{align}
  and use also the fact that $\hatLn$ has good properties with respect to the projector $\P$, which gives
  \begin{align*}
    2\hatLn \Y_{ab} &= 2 \hatLn \left ( \P(\bY) + \bYH \right )_{ab} \\
                       & = 
\ell_{(a} \left ( 4 \pounds_{\upn} \XY_{b)} 
                         - 8 \kappa_n \sone_{b)}  - 4 \U_{b)f} P^{cf} 
                         \XY_c \right ) 
- 2\upn(\kappa_n) \ell_a \ell_b + \frac{2}{\n-1} \left ( - \upn (\kappa_n) \ll +  \upn(\ZY)  \right )  \gamma_{ab} + 2 \hatLn \YH_{ab},
    \nonumber                 
  \end{align*}
  where in the second equality we used Lemma \ref{hatLnP(V)Lemma}. Subtracting
  this from \eqref{t1}, the expression for $\R_{ab}$ becomes
\textcolor{black}{
  \begin{align*}
    \R_{ab} = & \Ricco_{(ab)} - \left ( 2\kappa_n + \mbox{tr}_P \bU \right ) \Y_{ab}
                + \nablao_{(a} \left ( \sone_{b)}  - 2 \kappa_n \ell_{b)}  \right )                     
- 2 \left ( \kappa_n^2  - \upn(\kappa_n) \right )  \ell_a \ell_b  \\
& + \ell_{(a} \left ( -2 \pounds_\upn   \XY_{b)} +
4 \kappa_n \sone_{b)}
                + 4 \kappa_n \XY_{b)}  \right )
+ \frac{2}{\n-1} \left (  \kappa_n \upn(\ll)  + \upn(\kappa_n) \ll
                - \upn (\ZY) + P^{cf} \nablao_c \XY_{f} \right ) \gamma_{ab} \\
& - 2 \XY_a \XY_b - 4 \XY_{(a} \sone_{b)}
    - \sone_a \sone_b
    - (\mbox{tr}_P \bY) \U_{ab}
    + 2 P^{cd} \U_{d(a} \F_{b)c} 
+    \O(\bm{\XY})_{ab} - 2 \hatLn \YH_{ab}.
  \end{align*}}
At this point we elaborate the term 
$\nablao_{(a} (2 \kappa_n \ell_{b)} )$. Taking into account \eqref{nablaoll}
and \eqref{nablaD}, this gives 
  \textcolor{black}{   \begin{align*}
\nablao_{(a} (2 \kappa_n \ell_{b)} )
= 2 \ell_{(a} D_{b)} \kappa_n + 2 \upn(\kappa_n) \ell_a \ell_b
- 2 \kappa_n \ll \U_{ab}.
\end{align*}}
Inserting this as well as the decomposition \eqref{decomY} the tensor
$\R_{ab}$ takes its nearly final form
\textcolor{black}{  \begin{align*}
    \R_{ab} = & \Ricco_{(ab)}  + \nablao_{(a} \sone_{b)} 
+ \kappa_n (\ZU)  \ell_a \ell_b   + \ell_{(a} \left ( -2 \pounds_\upn   \XY_{b)} - 2 D_{b)} \kappa_n +
4 \kappa_n \sone_{b)}
                - 2 (\ZU) \XY_{b)}  \right ) \\
& + \frac{1}{\n-1} \left (  2 \upn( \kappa_n \ll)  
                - 2\upn (\ZY) + 2 P^{cf} \nablao_c \XY_{f}
+ (\kappa_n \ll - \ZY) ( 2 \kappa_n + \ZU)
 \right ) \gamma_{ab} \\
& - 2 \XY_a \XY_b - 4 \XY_{(a} \sone_{b)}
    - \sone_a \sone_b
    + \left ( 2 \ll \kappa_n - \ZY   \right )  \U_{ab}
    + 2 P^{cd} \U_{d(a} \F_{b)c} \\ 
& +    \O(\bm{\XY})_{ab} - 2 \hatLn \YH_{ab} - \left ( 2 \kappa_n + \ZU \right )
\YH_{ab}
  \end{align*}}
  To arrive at the final result \eqref{decomConst} we first
apply Corollary \ref{decomprod} to
$\bm{\omega_1} = \bm{\omega_2} =  \bm{\XY} + \bsone$. Then, 
we  simply need to
replace 
\begin{align*}
\ell_a \ell_b = \tgamma_{ab} - \frac{\ll}{\n-1} \gamma_{ab}
\end{align*}
as well as the decomposition \eqref{decombU}, and simplify.
\end{proof}

\section{Covariant decomposition of the null shell equations}

\label{shells}

In this section we find an application of the decomposition obtained in Theorem
\ref{decomConstThm} in the context of null shells. The matching theory of spacetimes is the framework for constructing a new spacetime out of two  spacetimes that have been cut out along a hypersurface. More precisely, one considers two spacetimes with boundary $(\M^{\pm}, g^{\pm})$
and asks under which conditions one can construct a new spacetime by attaching their boundaries $\partial M^{\pm}$. The theory is well developed and has received many relevant contributions (see particularly \cite{Darmois, Lanczos1922, Lanczos1924, Lichnerowicz, Penrose1972, BonnorVickers, Israel, BarrabesIsrael, MarsSenovilla93,  MarsSenovillaVera, Senovilla2018}).

It turns out that one can formulate the matching theory at a completely detached level using the notion of hypersurface data. The basic idea was introduced in \cite{MarsGRG}  and it has been developed recently in the case when the
boundaries $\partial M^{\pm}$ are null in the papers \cite{Miguel1, Miguel2}. The outcome is that for the matching to be possible it is necessary and sufficient
that there exist two embeddings  $\Phi^{\pm} : \N \longrightarrow
M^{\pm}$ satisfying $\Phi^{\pm} (\N) = \partial M^{\pm}$, and two riggings
$\rig^{\pm}$ satisfying the orientation condition that one of them points inwards and the other outwards in their respective manifolds, such that the induced
metric hypersurface data on $\N$ are identical. The extrinsic tensor
$\bY$ induced from each embedding  (denoted by $\bY^{\pm}$ respectively) are in general different. Thus, any matching problem necessarily leads to a pair of hypersurface data
\begin{align*}
  \hypdatap, \qquad \hypdatam.
\end{align*}
It is a fact \cite{ClarkeDray, MarsSenovilla93},
that the matched spacetime admits a $C^1$ atlas where the metric is continuous
but in general no more regularity can be expected. In fact, 
there exists a $C^2$ atlas in the matched manifold where the metric is $C^1$
if and only if $\bY^+ = \bY^-$. The physical interpretation of the lack of differentiability of the  metric across the matching hypersurface is different in the null and in the non-null cases, and also depends on the gravity theory under consideration. Assuming General Relativity (see e.g. \cite{Senovilla1} and references therein for alternatives theories of gravitation) the physical meaning is as follows. When the hypersurfaces to be matched are non-null, the matching hypersurface carries energy and momentum which is associated to matter fields (by which we mean
{\em non-gravitational fields}) with support in the hypersurface. In the null case, the shell also carries energy and momentum, but now it can be associated both to the matter fields and/or to pure gravitational fields in the form of an impulsive gravitational wave.

Since the theme of this paper is null hypersurface data, we restrict ourselves to the null case from now on. Then 
a non-zero difference tensor
$[\bY] := \bY^+ - \bY^-$ corresponds to an infinitely thin concentration of matter and/or gravitational field.  The matter contents of the shell is described by
a symmetric, contravariant $(2,0)$ tensor $\Tau$ which is defined in terms of
the jump $[\bY]$ by means of
\begin{align*}
  \Tau = \tau(\epsilon[\bY]),
\end{align*}
where $\tau$ is the energy-momentum map (indeed, this is the reason for the name of the map) and $\epsilon$ is a sign that accounts for the relative orientation of the rigging vector with respect to the boundary. This sign is $\epsilon = +1$ if the rigging points from $M^-$ to $M^+$ and $\epsilon =-1$ if it points from
$M^-$ to $M^+$. In \cite{MarsGRG} the rigging was assumed to always point from
$M^-$ to $M^+$ so $\epsilon$ does not appear in that paper. The convenience to add this sign in the definition has been discussed recently in \cite{MiguelTesis,
ManzanoMars}.

The gauge behaviour of $\bY$ \eqref{gaugebY} implies at once that $[ \bY ]$ is  a field with gauge weight $q=1$. By Lemma \ref{weighttau} we conclude that
$\tau([Y])$ has gauge weight $-1$. The physical reason behind this behaviour is that
$\Tau$ is an energy-momentum tensor, so it describes energy or momentum
per unit volume (i.e. a density) and this notion depends on the measure of volume. Orientable metric hypersurface data admits \cite{MarsGRG,Mars2020} a canonical volume form $\bm{\eta}_{\ell}$ which turns out
to transform under a gauge as 
$\G_{(z,\gauge)} (\bm{\eta}_{\ell}) = |z| \bm{\eta}_{\ell})$. The product $\Tau \bm{\eta}_{\ell}$ is then gauge invariant, as it must, given that it
corresponds to a physical quantity describing energy and momentum inside the shell.

The fact that the energy-momentum map $\tau$ has a non-trivial kernel is responsible for the fact that a null shell can carry energy and momentum without carrying matter degrees of freedom. Whenever $[ \bY ]$ belongs to the kernel of
$\tau$, the shell describes a purely gravitational impulsive wave.
In the previous section we have obtained a covariant decomposition of the constraint tensor in terms of the decomposition obtained in Proposition \ref{Decom}. It is clear how to apply those results to the present setting. Each spacetime to be matched $(M^{\pm}, g^{\pm})$ has its own
Ricci tensor $\Ricc_{g^{\pm}}$, and their pull-back to $\N$ via
$\Phi^{\pm}$ define respective constraint tensors, denoted by
$\R^{\pm}$. The difference tensor $[\R] := \R^+ - \R^-$ is the jump of the tangential components of the Ricci tensor at each side (once the two spacetimes are
joined together). Via the Einstein field equations, they can be related to the energy-momentum contents of the bulk spacetime at each side, as well as to the jump of the cosmological constants on each domain in case they are assumed to be different.

The constraint tensor is expressible in terms of the hypersurface data, so taking its difference for the
corresponding data $\hypdatap$ and $\hypdatam$ provides an explicit link between
the jump $[\R]$ and the jump $[\bY]$, which one may call
{\em null shell equations} or also  {\em null Israel equations}.
In the case of non-null boundaries W. Israel \cite{Israel} obtained equations relating the jump
of the normal-normal and normal-tangential components of the Einstein tensor, i.e. $[\Ein(\nu,\cdot)]$, and the jump of the extrinsic curvatures at each side
$[K^{\nu}]$. They were generalized to the null case in \cite{BarrabesIsrael} and to the case of boundaries of arbitrary causal character in \cite{MarsGRG}. From the jump of the full constraint tensor $[\R]$ one can recover the equations of Israel because of the identities (we drop the
$\pm$ sign)
\begin{align*}
  \Ein_g(\nu,X) & \stackrel{\partial M}{=}
  \Ricc_g (\nu,X) \\
  \Ein_g(\nu,\rig) & \stackrel{\partial M}{=} \Ricc_g(\nu, \rig) -
  \left ( \Ricc_g(\nu,\rig) + \frac{1}{2} P^{ab} \Ricc_g (\Phi_{\star}(e_a),
  \Phi_{\star} (e_b)) \right ) g(\nu,\rig) \stackrel{\partial M}{=}
  - \frac{1}{2} P^{ab} (\Phi^{\star} \Ricc_g) (e_a,e_b).
    \end{align*}
Here $\nu$ is the normal to the null boundary satisfying
$g(\nu,\rig)=1$, $X$ is any tangential vector to the boundary and in the last equality we used the decomposition \eqref{contra}.
The left-hand sides of
these expressions involve only tangential components of the spacetime Ricci tensor, so they arise as contractions of the constraint tensor. Thus the null shell equations constitute an extension of the Israel equations.

To obtain the null shell equations we simply need to subtract the constraint tensors on both sides. The result is much simpler than the corresponding expressions on each side. The reason is that the metric part of the data is the same, so the difference of every tensor constructed solely from
$\{\gamma,\ellc,\elltwo\}$
is identically zero. This holds in particular for the
$\nablao$-Ricci curvature tensor $\Ricco$. We shall use the notation $[ \cdot ]$ to denote the difference between any geometric tensor constructed from the data
$\hypdatap$ and the corresponding tensor constructed from
$\hypdatam$. Given that in the previous section we have worked out
the constraint tensor in a decomposed form, we can also write down the null shells equations in a decomposed form.

Before stating the result, let us introduce some terminology. First we define
an abstract null shell as a pair of null hypersurface data  \cite{MarsGRG}.
Following \cite{MiguelTesis} we add an extra sign $\epsilon$ to account for the relative orientation of the riggings with respect to the boundaries 
\begin{definition}
  A {\bf null shell} is a a pair of null hypersurface data of the form
  $\hypdatapm$, where $\epsilon = \pm 1$.
\end{definition}
We have already said that the energy-momentum tensor of the shell is
$\Tau = \tau(\epsilon [\bY])$. In view of \eqref{decommu} and since $\upn$ corresponds (in the embedded case) to the null direction ruling the null hypersurface, it is reasonable
to interpret physically the various terms in the canonical decomposition \eqref{decomY} of $[\bY]$ as follows (see \cite{MiguelTesis} for a more detailed justification).
\begin{definition}
  \label{pressureetc}
  Let $\hypdatapm$ be a null shell. The {\bf energy-density} $\den$,
  the {\bf pressure} $\pre$, and the
  {\bf transverse energy flux} $\flux$ of the shell are defined by
  \begin{align*}
    \den := - \epsilon [ \trP \bY], \qquad \quad \textcolor{black}{\pre :=  \epsilon [\kappa_n]}, \qquad \quad \flux = \epsilon
        [\unl{\overline{Y}}].
  \end{align*}
  Moreover, the {\bf impulsive gravitational wave} of the shell is
  $\bGrav := \epsilon [ \bYH]$.
\end{definition}
By construction, the energy-momentum tensor of the shell decomposes as
(see \eqref{decommu})
    \begin{align}
    \Tau = \den \, \upn \otimes \upn + 2 \upn \otimes_s \flux
   + \pre \big ( P + 2 \ll \upn \otimes \upn \big ).
    \label{TauTensor}
  \end{align}
    We can link the jump of the constraint tensors $[\R]$ to the energy contents of the shell and to the impulsive gravitational wave of the shell. To write down the final result we use the following notation:  for any
  pair of quantities  $A^{\pm}$ defined on
  $\hypdatapm$ we let
    $\underlineb{A} := \frac{1}{2} (A^+ + A^-)$.
  \begin{proposition}
      \label{nullshellEqs}
      Let $\hypdatapm$ be a null shell. Then, the jump of the
        constraint tensor satisfies the following equation
                   \begin{align}
\epsilon       [\R_{ab}] = & 
      \ell_{(a} \left ( -2 \pounds_\upn \uwidehat{\flux}_{b)} - 2 (\ZU) \uwidehat{\flux}_{b)}
      - 2 D_{b)} \pre +
4 \sone_{b)} \pre 
\right ) 
+ \pre (\ZU)  \tgamma_{ab}            \label{final} \\
& + \frac{2}{\n-1} \Big (  \upn( \ll \pre  + \den)  
+ \trP ( \nablao \,  \uwidehat{\flux} 
- 2  \avY \otimes_s \uwidehat{\flux} - 2 \uwidehat{\flux} \otimes_s \bsone )
+ 2 \ll \underlineb{(\kappa_n)} \, \pre  \nonumber \\
& \hspace{15mm} 
- \underlineb{(\ZY)} \, \mu + \ZU \left (
                \ll \pre  + \mu \right )  
\Big )
\gamma_{ab} \nonumber \\
  &   - 2 \hatLn \Grav_{ab} +   \O(\uwidehat{\flux})_{ab} 
  - 4 \left (  \avY \otimes_s \uwidehat{\flux} \right )^{H}_{ab}
  - 4 \left ( \bsone \otimes_s \uwidehat{\flux} \right )^{H}_{ab}
  -  2 \underlineb{(\kappa_n)} \Grav_{ab} 
  - 2 \pre \underlineb{(\YH)_{ab}}
  \nonumber \\
&   - \ZU \,  \Grav_{ab} + \left ( \den + 2 \ll \pre  \right )  \UH_{ab}. 
\nonumber
          \end{align}
                  \end{proposition}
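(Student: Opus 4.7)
The plan is to apply the canonical decomposition of the constraint tensor obtained in Theorem \ref{decomConstThm} to each of the data sets $\hypdatap$ and $\hypdatam$ separately, subtract the two identities, and then reinterpret the resulting jumps in terms of the shell quantities introduced in Definition \ref{pressureetc}. The crucial observation is that $\hypdatap$ and $\hypdatam$ share the same metric part $\metdata$, so every tensor or scalar built purely from $\{\gamma,\ellc,\elltwo\}$ coincides on both sides and therefore drops out of $[\R]$. In particular, the whole first line of \eqref{decomConst} (the symmetric Ricci piece $\Ricco_{(ab)}$, $\nablao_{(a}\sone_{b)}$, $\sone_a\sone_b$, and $2P^{cd}\U_{d(a}\F_{b)c}$) cancels, and all ``background'' tensors $\gamma$, $\ellc$, $\tgamma$, $\UH$, $\bsone$, $\ZU$, $\ll$, $\upn$, $P$, together with the differential operators $\hatLn$, $\O$, $D$ and $\pounds_{\upn}$, may be pulled out of the jumps.

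Next I would translate the jumps of the $\bY$-decomposition quantities into the shell parameters. Using $\brone=\XY-\kappa_n\ellc$ and the fact that $\ellc$ is common to both sides, Definition \ref{pressureetc} yields the dictionary
\begin{equation*}
\epsilon\,[\XY]=\uwidehat{\flux}, \qquad
\epsilon\,[\kappa_n]=\pre, \qquad
\epsilon\,[\ZY]=-\den, \qquad
\epsilon\,[\YH]=\bGrav.
\end{equation*}
Inserting these into the pieces of \eqref{decomConst} that are \emph{linear} in the $\bY$-quantities (the bracket multiplying $\ell_{(a}$, the $\tgamma$ coefficient, the linear contributions to the $\gamma$ coefficient, the $\O(\XY)$ term, and $-2\hatLn\YH$) reproduces the corresponding pieces of \eqref{final} after multiplying the whole identity by $\epsilon$.

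The remaining work concerns the terms in \eqref{decomConst} that are quadratic in the $\bY$-fields: the symmetric square $(\XY+\bsone)\otimes_s(\XY+\bsone)$ (both its $\K_0$-projection and its $P$-trace inside the $\gamma$ coefficient), the scalar product $(\kappa_n\ll-\ZY)(\kappa_n+\ZU)$, and the couplings $(2\kappa_n+\ZU)\YH$ and $(2\ll\kappa_n-\ZY)\UH$. For these I would use the bilinear jump identity $[AB]=[A]\,\underlineb{B}+\underlineb{A}\,[B]$, keeping in mind that the metric-only factors $\ll$, $\ZU$, $\UH$, $\bsone$ carry no jump, so only one of the two terms survives whenever one factor is of that type. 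A key simplification is that $(\XY+\bsone)^{\otimes_s 2}$ collapses, by virtue of $[\bsone]=0$, to $2\epsilon\,(\avY+\bsone)\otimes_s\uwidehat{\flux}$; its kernel projection then produces the pair $-4(\avY\otimes_s\uwidehat{\flux})^H-4(\bsone\otimes_s\uwidehat{\flux})^H$ appearing in \eqref{final}, while its $P$-trace contributes to the $\gamma_{ab}$ coefficient. The averages $\avY=\underlineb{(\XY)}$ and $\underlineb{(\kappa_n)}$ arise naturally from these mixed products.

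No conceptual difficulty is expected; the main obstacle is bookkeeping. The hardest piece to assemble is the trace coefficient multiplying $\frac{2}{\n-1}\gamma_{ab}$, which gathers contributions from four different jumps — $\upn(\kappa_n\ll-\ZY)$, $\trP(\nablao\,\XY)$, $\trP$ of $(\XY+\bsone)^{\otimes_s 2}$, and the scalar product $(\kappa_n\ll-\ZY)(\kappa_n+\ZU)$ — and checking that these combine into the compact expression inside the $\frac{2}{\n-1}$ bracket of \eqref{final} requires careful tracking of all the cross-terms produced by the bilinear jump identity.
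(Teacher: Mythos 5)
Your proposal is correct and follows essentially the same route as the paper: take the jump of the decomposition in Theorem \ref{decomConstThm}, use that the shared metric part kills all purely metric terms, convert quadratic jumps with $[AB]=\underlineb{A}[B]+\underlineb{B}[A]$, and translate via the dictionary $\epsilon[\bm{\XY}]=\uwidehat{\flux}$, $\epsilon[\kappa_n]=\pre$, $\epsilon[\ZY]=-\den$, $\epsilon[\YH]=\bGrav$. The dictionary and the handling of the symmetric square $(\bm{\XY}+\bsone)\otimes_s(\bm{\XY}+\bsone)$ are exactly as in the paper's proof.
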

\begin{proof}
  Note first that, by linearity, the operators $\hatLn$ and $\O$ in
  \eqref{defhatLn} and \eqref{defOexp} satisfy
  \begin{align*}
    [\hatLn(\bY)] = \hatLn ([ \bY]), \qquad
    [\O(\bm{\XY})] = \O( [ \bm{\XY}]),
  \end{align*}
  and for any tensor $T$ and scalar $f$ we obviously also have $[\nablao T] = \nablao[T]$  and $[Df] = D[f]$ (cf. \eqref{nablaD}).
  We have already said that $[\Ricco] = [\bsone] = [\bU] = [\bF]=0$ and, more generally, any tensor defined only from the metric part of the data has vanishing jump. Taking difference in \eqref{decomConst} yields
\textcolor{black}{
  \begin{align}
      [\R_{ab}] = & 
 \ell_{(a} \left ( -2 \pounds_\upn   [\bm{\XY}]_{b)} - 2 D_{b)} [\kappa_n] +
4 \sone_{b)}[\kappa_n] 
                - 2 (\ZU) [\bm{\XY}]_{b)}  \right ) 
+ [\kappa_n] (\ZU)  \tgamma_{ab} 
\label{difference1} \\
& + \frac{2}{\n-1} \Big (  \upn( [\kappa_n] \ll - [\ZY])  
+ \trP ( \nablao \,  [\bm{\XY}]
- [ \bm{\XY} \otimes  \bm{\XY}] - 2 [\bm{\XY}] \otimes_s \bsone )
+ \ll [\kappa_n^2] \nonumber \\
& \hspace{15mm} - [ \kappa_n \ZY] + \ZU \left (
                \ll [\kappa_n]  - [\ZY] \right )  
\Big )
\gamma_{ab} \nonumber \\
  &   - 2 \hatLn [\YH]_{ab} +   \O([\bm{\XY}])_{ab} 
  - 2 \left [ \bm{\XY} \otimes \bm{\XY} \right ]^{H}_{ab}
  - 4 \left ( \bsone \otimes_s [\bm{\XY}] \right )^{H}_{ab}
   -  2 [\kappa_n \YH_{ab}] \nonumber \\
&   - \ZU 
   [\YH]_{ab} + \left ( 2 \ll [\kappa_n] -  [\ZY]  
   \right )  \UH_{ab}. \nonumber
    \end{align}}
    Now, for any pair of quantities $A,B$ the jump of their product is
    \begin{align*}
      [AB] = A^+ B^+ - A^- B^- =
      \frac{1}{2} ( A^+ + A^{-}  ) (B^+ - B^-)
      + \frac{1}{2} (A^+ - A^- ) (B^+ + B^- )
      =  \underlineb{A} [B] + \underlineb{B} [A].
    \end{align*}
     In particular, for a tensor $A$ on has $[ A \otimes A]
    = 2 \underlineb{A} \otimes_s [A]$.
    Note also that
\begin{align*}
    \uwidehat{J} := \gamma( J, \cdot ) =
    \epsilon \gamma( [\unl{\overline{Y}}] , \cdot )
  = \epsilon [ \gamma (\unl{\overline{Y}}  , \cdot ) ]
= \epsilon [ \bm{\XY}],
\end{align*}
where in the last equality we inserted \eqref{raise} and used
\eqref{prod4} together with the fact that
${\bm \XY} (\upn)=0$. Replacing this into \eqref{difference1} yields \eqref{final}, after using Definition \ref{pressureetc}. 
\end{proof}
In the embedded case, the jump of constraint tensor in the right hand side of \eqref{final} can be computed from the properties of each spacetime $(M^{\pm},g^{\pm})$. Since the constraint tensor is symmetric, it can also be decomposed according to Proposition \ref{Decom}.  We therefore define the quantities
\begin{align*}
  \bulkden := \R(\upn,\upn), \qquad \quad
  \bulkpre:= \trP \R, \qquad \quad
  \bulkflux := \R(\upn, \cdot ) - \bulkden \ellc, \qquad
  \quad \bulkGrav := (\I - \P) ( \R)
  \end{align*}
so that
\begin{align}
  \R = 2 \ellc \otimes_s  \bulkflux + \bulkden \tgamma
 + \frac{\bulkpre}{\n-1}  \gamma + \bulkGrav.
\label{DecomR}
\end{align}
  Given that the decomposition of any symmetric tensor according to Proposition
  \ref{Decom} is unique, the combination of \eqref{DecomR} and \eqref{final}
  can be split into its irreducible parts. The result is
  \begin{theorem}
    Let $\hypdatapm$ be a null shell.With the notation introduced above, the following expressions called
    {\bf null shell equations} hold true
  \textcolor{black}{   \begin{align*}
 \epsilon     [\bulkden] & = \pre \, \ZU, \\
 \epsilon     [\bulkflux] & = -2 \pounds_\upn \uwidehat{\flux} - 2 (\ZU) \uwidehat{\flux}
      - 2 D  \pre +
4 \pre \, \bsone,  \\ 
\epsilon [\bulkpre] & = 2 \upn( \ll \pre  + \den)  
+ 2 \trP ( \nablao \,  \uwidehat{\flux}  
- 2  \avY \otimes_s \uwidehat{\flux} - 2 \uwidehat{\flux} \otimes_s \bsone )
+ 4 \ll \underlineb{(\kappa_n)} \, \pre  
- 2 \underlineb{(\ZY)} \, \mu + 2 \ZU \left (
                \ll \pre  + \mu \right ), \\
       \epsilon [\bulkGrav] & =
       - 2 \hatLn \bGrav +   \O(\uwidehat{\flux}) 
  - 4 \big (  \avY \otimes_s \uwidehat{\flux} \big )^{H}
  - 4 \big ( \bsone \otimes_s \uwidehat{\flux} \big )^{H}
  -  2 \underlineb{(\kappa_n)} \bGrav
  - 2 \pre \, \underlineb{(\YH)}   - (\ZU) \,  \bGrav
  + \left ( \den + 2 \ll \pre  \right )  \UH. 
    \end{align*}}
\label{main}
  \end{theorem}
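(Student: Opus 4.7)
The strategy is to exploit the uniqueness of the canonical decomposition of Proposition~\ref{Decom}. On one hand, applying that decomposition directly to $\R^{\pm}$ and taking the jump gives
\begin{align*}
\epsilon[\R] \;=\; 2\,\ellc \otimes_s \epsilon[\bulkflux] \;+\; \epsilon[\bulkden]\,\tgamma \;+\; \frac{\epsilon[\bulkpre]}{\n-1}\,\gamma \;+\; \epsilon[\bulkGrav],
\end{align*}
which, by Proposition~\ref{Decom}, is the unique canonical form of $\epsilon[\R]$. On the other hand, Proposition~\ref{nullshellEqs} provides the explicit expression \eqref{final} for $\epsilon[\R_{ab}]$ whose structural layout already mirrors the four irreducible blocks of \eqref{Form3}: a $\tgamma_{ab}$-term with coefficient $\pre\,\ZU$, a $\gamma_{ab}$-term carrying the expected $\frac{2}{\n-1}$ prefactor, an $\ell_{(a}\cdots_{b)}$-block, and a cluster of remaining terms on the last line that are the candidates for $\epsilon[\bulkGrav]$.

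The only non-trivial step is confirming that this layout really is the canonical decomposition, i.e.\ that the $\ell_{(a}\cdots_{b)}$-block has the form $\ellc\otimes_s X$ with $X(\upn)=0$, and that the final block belongs to the kernel $\K_0$. The first is straightforward: the identity $\uwidehat{\flux}=\epsilon[\XY]$ (established in the proof of Proposition~\ref{nullshellEqs}) together with $\XY(\upn)=0$ from Terminology~\ref{TerVs} gives $\uwidehat{\flux}(\upn)=0$; $\bsone(\upn)=0$ was noted before \eqref{defhatLn}; $D\pre$ is orthogonal to $\upn$ by the definition \eqref{nablaD}; and $\pounds_\upn \uwidehat{\flux}$ remains $\upn$-orthogonal because $\uwidehat{\flux}(\upn)=0$ and $\pounds_\upn\upn=0$.

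For the $\K_0$-membership of the final block I would invoke the four preceding results in turn: Proposition~\ref{hatLn}(i) together with the corollary that follows it yields $\hatLn\bGrav\in\K_0$, since $\bGrav=\epsilon[\YH]\in\K_0$; Lemma~\ref{defO} gives $\O(\uwidehat{\flux})\in\K_0$, applicable because $\uwidehat{\flux}(\upn)=0$; the tensors $(\avY\otimes_s\uwidehat{\flux})^{H}$ and $(\bsone\otimes_s\uwidehat{\flux})^{H}$ lie in $\K_0$ by the very definition of the $H$-projection (Corollary~\ref{I-P}); and the terms $\underlineb{(\kappa_n)}\bGrav$, $\pre\,\underlineb{(\YH)}$, $\ZU\,\bGrav$, $(\den+2\ll\pre)\UH$ are scalar multiples of elements of $\K_0$.

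Once these verifications are in place, the uniqueness clause of Proposition~\ref{Decom} forces the four canonical blocks to coincide on both sides, and reading them off produces the four displayed identities: matching $\tgamma$-coefficients gives $\epsilon[\bulkden]=\pre\,\ZU$; matching the $\ell_{(a}\cdots_{b)}$-block against $2\ellc\otimes_s\epsilon[\bulkflux]$ gives the flux equation; matching the $\gamma$-coefficients, after clearing the common factor $\frac{1}{\n-1}$, yields the pressure equation; and matching the $\K_0$-parts yields the gravitational-wave equation. The substantive calculation was carried out already in Theorem~\ref{decomConstThm} and Proposition~\ref{nullshellEqs}; here the argument reduces to the algebraic identification supplied by the uniqueness of the canonical decomposition.
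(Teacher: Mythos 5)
Your proposal is correct and follows essentially the same route as the paper: the paper's own (very short) argument is precisely to combine the decomposition \eqref{DecomR} of $\R$ with the explicit jump formula \eqref{final} and invoke the uniqueness clause of Proposition \ref{Decom} to read off the four irreducible blocks. The extra verifications you supply (that the $\ell_{(a}\cdots_{b)}$-block is of the canonical form $2\ellc\otimes_s X$ with $X(\upn)=0$, and that the last block lies in $\K_0$) are exactly the checks the paper leaves implicit, and they are carried out correctly.
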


  These equations are  remarkable because they have a hierarchical structure.
  Assume that we know the geometry of the spacetimes to be matched, i.e. that we know the constraint tensors $\R$ for both hypersurface data. This means that the left-hand sides of the equations in Theorem \ref{main} are known. Then, the first equation determines $p$ (provided $\ZU$ is non-zero). With this information
  the second equation is a transport equation for $\uwidehat{\flux}$. Once this has been solved, the third equation gives a transport equation for $\den$. Finally, the last equation involves all the variables, but once the previous equations have already been solved, it becomes simply a transport equation
  for $\bGrav$. It is even more remarkable that the splitting has been achieved in  a fully covariant way  (as no coordinates have been introduced in
  $\N$), in a completely gauge covariant manner (the equations hold in any gauge) and without assuming a foliation of $\N$ by spacelike sections. In fact we have made no topological assumption whatsoever\footnote{Besides those needed for $(\N,\gamma)$ to be a ruled null manifold, of course (see Section
    \ref{Nullmanifolds}).}
  on $\N$
  so such a foliation need not exist.

  \section*{Acknowledgements}

  I am very grateful to Gabriel S\'anchez P\'erez for useful comments  on a previous version of the paper and Miguel Manzano for spotting several typos.
  This work has been supported by
PID2021-122938NB-I00 (Spanish
Ministerio de Ciencia, Innovaci\'on y Universidades and FEDER ``A way of making Europe'').

\end{document}